\documentclass[12pt, margin=1.05in]{article}
 
 \linespread{1.05}

\usepackage{amsfonts,amssymb,amsmath,amsthm,epsfig,euscript,bbm,amsthm}
\usepackage{algorithm}
\usepackage[dvipsnames]{xcolor}
\usepackage{algpseudocode}
\usepackage{amsthm}
\usepackage{threeparttable,subcaption}
 \usepackage{booktabs}
\newcommand{\ra}[1]{\renewcommand{\arraystretch}{#1}}
 \usepackage{titletoc}

\usepackage{rotating,multirow,makecell,array}

\usepackage {tikz}
\usetikzlibrary {positioning}
\definecolor {processblue}{cmyk}{0.96,0,0,0}
\usepackage{colortbl}

\usepackage{tikz}
\usetikzlibrary{calc}
\usetikzlibrary{arrows}
\usepackage {graphicx}
\usepackage{enumerate}
\pdfoutput=1
\usepackage[margin=1in]{geometry}
\usepackage{subcaption}
\usepackage{caption}
\usepackage[authoryear,sort]{natbib}
\usepackage[titletoc,title]{appendix}
\usepackage[pdftex,colorlinks]{hyperref}
\usepackage{mathtools}
 \usepackage{xcolor}

\usepackage{graphicx}
\usepackage{subcaption}
 
 \usepackage{xr}

 \definecolor{orange}{RGB}{230,170,120}
  \definecolor{green}{RGB}{120,200,120}

 \hypersetup{
 	colorlinks=true,
 	linkcolor=blue,
 	filecolor=blue,      
 	urlcolor=cyan,
 }
 \hypersetup{colorlinks,linkcolor={blue},citecolor={blue},urlcolor={red}} 
 \urlstyle{same}
 \usepackage{geometry}                % See geometry.pdf to learn the layout options. There are lots.
 \usepackage{multirow}
\usepackage[section]{placeins}
 \geometry{letterpaper}                   % \dots or a4paper or a5paper or \dots 
 \usepackage{bbm}
 \usepackage{graphicx}
 \usepackage{amssymb}
 
%

% makes \addlegendimage available (typically only available within an
% axis environment):
\def\addlegendimage{\csname pgfplots@addlegendimage\endcsname}

%%--------------------------------

% definition to insert numbers
\pgfkeys{/pgfplots/number in legend/.style={%
        /pgfplots/legend image code/.code={%
            \node at (0.295,-0.0225){#1};
        },%
    },
}

 \usepackage{epstopdf}
 \usepackage{tikz} % for diagrams
 \usepackage{amsmath} %for environments and other fun math stuff
\usepackage[utf8]{inputenc}
\usepackage{algorithm}
\usepackage{pgfplots}
 \theoremstyle{plain}
 \newtheorem{thm}{Theorem}[section]
 \newtheorem{lem}[thm]{Lemma}
 
 \newtheorem{cor}{Corollary}
 
 \theoremstyle{definition}

 \newtheorem{exmp}{Example}[section]
  \newtheorem{ass}{Assumption}[section]
  
 \theoremstyle{definition}
 \newtheorem{rem}{Remark}
 
 \makeatletter
 \def\BState{\State\hskip-\ALG@thistlm}
 \makeatother
 
  %creates the claim environment
 \usepackage{authblk}
  %creates the definition environment

 \DeclareGraphicsRule{.tif}{png}{.png}{`convert #1 `dirname #1`/`basename #1 .tif`.png}
 
\def\spacingset#1{\renewcommand{\baselinestretch}%
{#1}\small\normalsize} \spacingset{1}

\algnewcommand\algorithmicforeach{\textbf{for each}}
\algdef{S}[FOR]{ForEach}[1]{\algorithmicforeach\ #1\ \algorithmicdo}
 
 \externaldocument{main_v2}
\setlength{\bibsep}{0pt plus 0.5ex}
\usepackage{setspace}
 \usepackage{epstopdf}

%% Here are the title, author names and addresses
\title{Dynamic covariate balancing:\\ estimating  treatment effects over time with potential local projections  }

\author{Davide Viviano}
\affil{Department of Economics, Harvard University\footnote{Email: dviviano@fas.harvard.edu}}

\author{Jelena Bradic }
\affil{Department of Statistics and Data Science, Cornell University\footnote{Email: jelena.bradic@cornell.edu.}}

\date{This version: February, 2026 \\ First version: March, 2021}

%\affil{Department of Statistics, University of Glasgow, Glasgow G12 8QQ, U.K. \email{mike@stats.gla.ac.uk}}
\begin{document}
\maketitle

\begin{abstract}
This paper studies the estimation and inference of treatment effects in panel data settings when treatments change dynamically over time. 
 We propose a balancing method that allows for (i) treatments to be assigned dynamically over time based on high-dimensional covariates, past outcomes, and treatments; (ii) outcomes and time-varying covariates to depend on the trajectory of all past treatments; (iii) heterogeneity of treatment effects. 
 Our approach recursively projects potential outcomes' expectations on past histories. It then controls the bias arising from the non-experimental and sequential nature of this setting by balancing dynamically  observable characteristics over time.  We establish inferential guarantees of the proposed method even when the number of observable characteristics significantly exceeds the sample size.
We study  numerical properties of the estimator and illustrate the benefits of the procedure in an empirical application.   
\end{abstract}

\begin{keywords}
Causal Inference, High Dimensions, Treatment Effects, Panel Data.
\end{keywords}

\section{Introduction}
 \spacingset{1.7} 
Researchers collect a panel of $n$ independent observations observed over a finite number of $T$ periods in an observational study. The dataset encompasses time-varying covariates, outcomes, and time-varying treatments. The primary objective is to conduct inference on the average effect of exposure to different treatment histories, such as the effect of being treated for a certain number of periods.

We consider a setting where 
treatments change dynamically over time, and potential outcomes, covariates and treatment may depend on past histories. Two alternative procedures can be considered in this setting. First, researchers may consider explicitly modeling how treatment effects propagate over each period through time-varying covariates and intermediate outcomes. This approach is prone to large estimation error and misspecification in high-dimensions: it requires modeling outcomes and each time-varying covariate as a function of all past covariates, outcomes, and treatment assignments. A second approach is to use inverse-probability weighting estimators for estimation and inference \citep{tchetgen2012semiparametric, vansteelandt2014structural} . However, classical semi-parametric estimators are prone to instability in the estimated propensity score. There are two main reasons. First of all, the propensity score defines the joint probability of the entire treatment history and can be close to zero for moderately long treatment histories. Additionally, the propensity score can be misspecified in observational studies.

This is a common problem both in social sciences and bio-statistics. For example, in a survey of all articles in 2021 top-5 economics journals, more than $20\%$ of studies with time-varying treatments exhibit treatment dynamics.\footnote{This is based on the authors' calculation. Top-5 economics journals are \textit{American Economic Review, Econometrica, Journal of Political Economy, Quarterly Journal of Economics, Review of Economic Studies.}} 
On the other hand, typical approaches in economics and related disciplines often employ, 
Difference-in-Differences designs. If units can dynamically choose treatments in response to their previous outcomes (or treatments), this will lead to violations of the parallel trends assumption required by such designs \citep[][]{ghanem2022selection, marx2022parallel}. We, therefore, introduce an approach that is valid when treatment decisions at period $t$ can depend on the history of outcomes and treatments prior to $t$. The second challenge is that treatment dynamics are difficult to estimate. Individuals may select into treatment arbitrarily based on high-dimensional covariates, outcomes, and treatments, e.g., when maximizing future expected utilities \citep[][]{heckman2007dynamic}. This motivates a method that does not impose modeling assumptions on selection into treatment mechanisms (i.e., propensity score).

This paper studies the estimation and inference of the effects of treatment histories when potential outcomes (and covariates) depend on present and past treatments. Individuals dynamically select into treatment based on past (time-varying) covariates, outcomes, and treatments. There are no unobserved confounders after controlling for high-dimensional past characteristics \citep{ding2019bracketing}. Researchers remain agnostic on the propensity score.

We leverage a model on the \textit{potential} outcomes' conditional expectations as an (approximately) linear function of previous potential outcomes and (high-dimensional) covariates in each period. Our model is motivated by local projection frameworks \citep{jorda2005estimation, montiel2021local}. Local projections impose a (linear) model on observed outcomes conditional on each period observables and do not require estimating how each time-varying covariate changes in response to treatments -- which would be prone to large estimation error in high dimensions. However, different from standard local projections, our model is imposed on expected potential instead of observed outcomes. This difference is important here because of treatments' serial correlation and selection into treatment based on past outcomes and covariates: a model on realized outcomes imposes restrictions on the distribution of the treatment assignments, whereas a potential outcome model does not. 
Building on the literature on marginal structural models \citep{robins2000marginal}, we identify the parameters of interest by \textit{recursively} projecting outcomes' conditional expectations over past histories, allowing for dynamic selection into treatment.

 Our estimation method, Dynamic Covariate Balancing (DCB), estimates the parameters of the model by using recursive penalized projections through lasso \citep{hastie2015statistical}. It then reweights observations to guarantee balance between treated and control units. Balancing covariates is intuitive and common in practice: in cross-sectional studies, treatment, and control units are comparable when the two groups have similar characteristics \citep{imai2014covariate, li2018balancing, hainmueller2012entropy}. 
We generalize covariate balancing in the absence of dynamics of \cite{zubizarreta2015stable, athey2018approximate, ben2018augmented, hirshberg2017augmented} to a dynamic setting. We show that balancing with potential local projections corresponds to constructing weights \textit{sequentially} in time by first balancing treated and control units' covariates in the first period and then balancing histories in the next periods \textit{reweighted} by the weights obtained in the previous period. The estimated balancing weights solve a sequence of quadratic programs to minimize the weights' variance.

Our estimation procedure guarantees a vanishing bias of order faster than $n^{-1/2}$ and a parametric rate of convergence of the estimated treatment effect in high-dimensional settings. In addition, the optimization problem over the set of balancing weights admits a feasible solution, with the true propensity score being one such solution (and without requiring knowledge of it). This result highlights the benefits of balancing over propensity score reweighting here: the proposed balancing weights have a smaller variance than inverse probability weights and -- by leveraging an (approximate) high-dimensional linear outcome model -- do not require the correct specification of the propensity score.\footnote{Typical methods in high dimensions require conditions on the product of the rates of estimators for the propensity score and coefficients of the linear model to be faster than $n^{-1/4}$, and also require consistent estimation of \textit{both} the outcome model and propensity score model \citep[what known as rate-doubly robustness, see e.g.,][]{athey2017efficient}. Compared to estimating the propensity score with a semi-parametric model, our guarantees do not depend on the estimation error of the propensity score (only require that the estimation error of the coefficients is $o(n^{-1/4})$), by leveraging the high dimensional linear outcome model. } This is an advantage especially in dynamic settings: the propensity score defines the joint probability that units are assigned to a given treatment history, and therefore inverse probability weights can exhibit large variance in finite sample (see e.g., Figure \ref{fig:overlap}). Finally, we provide guarantees for inference. Relative to cross-sectional studies, our dynamic structure necessitates novel considerations for identification, balancing, and derivations, that require analyzing joint distributions of correlated residuals from sequential projections.

   We illustrate our method in an empirical application using data from \cite{acemoglu2019democracy} on studying the effects of democracy on economic growth. Here, the authors assume a  dynamic selection model. Whereas effects are in magnitude and sign consistent with \cite{acemoglu2019democracy}, we show that standard local projections and \cite{acemoglu2019democracy}'s linear regression lead to significantly smaller point estimates compared to our approach. We also show that (A)IPW methods lead to a more substantial imbalance (and bias) compared to DCB due to the instability of the propensity score in both high and low-dimensions.

\section{Related Literature}

The goal of this paper is to conduct inference on dynamic treatment effects, while being robust to the misspecification of the propensity score. To achieve this goal, we leverage a high dimensional linear model and derive the first dynamic balancing equations within the local projection model proposed in this paper. 

In the econometrics and statistics literature,  
\cite{imbens2019panel} propose balancing assuming no treatment dynamics, whereas here, treatment dynamics require different (and novel) balancing conditions. In the context of dynamics, different from \cite{imai2015robust}, who estimate a \textit{single} set of balancing weights over all possible combinations of time periods and covariates, here the number of moment conditions grows linearly with $T$ and not exponentially. Unlike \cite{zhou2018residual}, who extend entropy balancing of \cite {hainmueller2012entropy} to dynamic settings, and \cite{li2024toward} who propose a single set of balancing by regressing each covariate on past information, we do not estimate one model for each covariate in the past (which can be prone to large estimation error in high dimensions).  
 DCB explicitly characterizes the high-dimensional model's bias in a dynamic setting to avoid overly conservative moment conditions, while \cite{kallus2018optimal} design conservative balancing conditions for the worst-case bias. Different from \cite{yiu2018covariate}, we do not require estimating the propensity score. 
 Our insight with respect to all these references (in low and high dimensions) is that with a linear model and \textit{sequential} weights, balancing reduces to few and novel dynamic restrictions. This insight is even more relevant with high-dimensional covariates, which none of these references study with dynamics.

Compared to cross-sectional studies, we generalize balancing in \cite{athey2018approximate, ben2018augmented}, and consider an arbitrary class of weights. Therefore, our residual balancing procedure does not reduce to linear estimators as in settings with linear balancing weights \citep[e.g.][]{bruns2023augmented}, and our analysis differs from cross-sectional studies with low dimensions in \cite{wang2020minimal}.

More broadly, this paper connects to the literature on 
DiD, local projections, and dynamic treatments. 
Different from the literature on DiD \citep{rambachan2023more,  de2022difference, callaway2019difference,  abraham2018estimating, athey2022design, caetano2022difference} or subsequent work on potential projections with DiD \citep{dube2023local}, here we allow for dynamic treatment regimes. This literature imposes the parallel trends assumption, violated with dynamic treatments \citep{marx2022parallel, ghanem2022selection}. 
Different from the time-series literature \citep{montiel2021local,  stock2018identification, rambachan2019nonparametric}, this paper uses information from panel data and allows for arbitrary dependence of outcomes, covariates, and treatment assignments over time. 
%Other panel data methods include Synthetic Control (SC) methods \citep{arkhangelsky2021synthetic, abadie2010synthetic, ben2018augmented} that assume staggered adoption with an exogenous treatment time, prohibiting dynamics in treatments. 

  %instrumental variables. 
% propose methods that leverage instrumental variables independent over time. 

 References in bio-statistics include \cite{robins2000marginal}, \cite{hernan2001marginal}, \cite{boruvka2018assessing},  \cite{blackwell2013framework}, \cite{bang2005doubly} \citep[for a review, ][]{vansteelandt2014structural}. \cite{bojinov2020panel} study IPW estimators from a design-based perspective. 
Doubly robust estimators for dynamic treatments have been studied by \cite{nie2021learning, zhang2013robust, jiang2015doubly, tchetgen2012semiparametric, babino2019multiple}. Here, we focus on studying the effect of a given treatment path, as in \cite{robins2000marginal} or \cite{blackwell2013framework}, different from and complementary to studying optimal policies (e.g., \cite{murphy2003optimal}, \cite{nie2021learning}). 

Specifically, studies with high-dimensional panels require correct specification of the propensity score
 \citep{lewis2020double, zhu2017high, shi2018high, bodoryevaluating, belloni2016inference, chernozhukov2017orthogonal}, or impose homogeneous treatment effects \citep{high_dim_IRF, kock2015inference}. 
(\cite{lewis2020double} also illustrate bounds on misspecification). More
generally, prior works that formally study properties of dynamic doubly-robust methods in
high dimensions require product of rates conditions for the estimated propensity score and
conditional mean function, and consistent estimation of both; see for example follow up work
by \cite{bradic2021high} who provide tight rates of convergence of dynamic AIPW. 
 Different from above, our framework does not require consistent estimation of the propensity score.   
 
 Finally, in both works subsequent to the first version of this paper, \cite{chernozhukov2022automatic} generalize the use of riesz representers with arbitrary non-linear outcome models, and \cite{zhang2021dynamic} study doubly robustness to model misspecification through moment restrictions.
Different from these references, here we do not require conditions on the balancing weights motivated by our goal of allowing for inference with a possibly completely misspecified propensity score function, whereas \cite{chernozhukov2022automatic} and \cite{zhang2021dynamic} require functional form restrictions on the balancing weights to obtain a product of rates conditions. Our focus on the high-dimensional linear model (which we view as a linear approximation to conditional expectations in high dimensions) is motivated by its large use in applications.

\section{Dynamics and potential local projections} \label{sec:2}

\subsection{Setup} 
 
We start with the analysis of two time periods, deferring multiple periods to Section \ref{sec:multiple}. We observe a panel with $n$ $i.i.d.$ copies of $ 
\Big(  X_{i,1}, D_{i, 1}, Y_{i, 1}, X_{i, 2}, D_{i, 2}, Y_{i, 2} \Big)$, each distributed according to $\mathcal{P}$. Here
  $ D_{i,1}, D_{i,2} \in \{0,1\}$ denote binary treatments at time $t = 1,t = 2$, respectively, $X_{i,t}, Y_{i,t}$ denote covariates and the outcome at time $t$. 
We allow for any nonstationarity and dependencies that may occur over time within each unit. When indices are not specified, such as in \(D_t\), this refers to the collective observations for all $n$ units.

 \begin{figure}[!ht]
 \centering
    \begin{tikzpicture}

\coordinate (1) at (-4,3);
\coordinate (2) at (-2,3);
\coordinate (3) at (-2,5);
\coordinate (4) at (-4,5);
\coordinate (5) at ($(1)!.5!(2)$); 
\coordinate (6) at ($(2)!.5!(3)$);
\coordinate (7) at ($(3)!.5!(4)$);
\coordinate (8) at ($(1)!.5!(4)$);
\coordinate (9) at ($(1)!.5!(3)$);

\coordinate (10) at (-4,0);
\coordinate (11) at (-2,0);
\coordinate (12) at (-2,2);
\coordinate (13) at (-4,2);
\coordinate (14) at ($(10)!.5!(11)$); 
\coordinate (15) at ($(11)!.5!(12)$);
\coordinate (16) at ($(12)!.5!(13)$);
\coordinate (17) at ($(10)!.5!(13)$);
\coordinate (18) at ($(10)!.5!(14)$);

\coordinate (21) at (-10,4);
\coordinate (22) at (3.5,4);
\coordinate (23) at (3.5,5);
\coordinate (24) at (-10, 5);

\coordinate (31) at (-10,3.5);
\coordinate (32) at (-7,3.5);
\coordinate (33) at (-7,2.5);
\coordinate (34) at (-10, 2.5);

\coordinate (41) at (-6.5,3.5);
\coordinate (42) at (-3.5,3.5);
\coordinate (43) at (-3.5,2.5);
\coordinate (44) at (-6.5, 2.5);

\coordinate (51) at (-3,3.5);
\coordinate (52) at (0,3.5);
\coordinate (53) at (0,2.5);
\coordinate (54) at (-3, 2.5);

\coordinate (61) at (0.5,3.5);
\coordinate (62) at (3.5,3.5);
\coordinate (63) at (3.5,2.5);
\coordinate (64) at (0.5, 2.5);

\coordinate (71) at (-10,2);
\coordinate (72) at (-8.6,2);
\coordinate (73) at (-8.6,1);
\coordinate (74) at (-10, 1);

\coordinate (81) at (-8.4,2);
\coordinate (82) at (-7,2);
\coordinate (83) at (-7,1);
\coordinate (84) at (-8.4, 1);

\coordinate (91) at (-6.5,2);
\coordinate (92) at (-5.1,2);
\coordinate (93) at (-5.1,1);
\coordinate (94) at (-6.5, 1);

\coordinate (101) at (-4.9,2);
\coordinate (102) at (-3.5,2);
\coordinate (103) at (-3.5,1);
\coordinate (104) at (-4.9, 1);

\coordinate (111) at (-3,2);
\coordinate (112) at (-1.6,2);
\coordinate (113) at (-1.6,1);
\coordinate (114) at (-3, 1);

\coordinate (121) at (-1.4,2);
\coordinate (122) at (0,2);
\coordinate (123) at (0,1);
\coordinate (124) at (-1.4, 1);

\coordinate (131) at (0.5,2);
\coordinate (132) at (1.9,2);
\coordinate (133) at (1.9,1);
\coordinate (134) at (0.5, 1);

\coordinate (141) at (2.1,2);
\coordinate (142) at (3.5,2);
\coordinate (143) at (3.5,1);
\coordinate (144) at (2.1, 1);

%\foreach \i/\Position in {1/below, 2/below, 3/above, 4/above, 5/below, 6/right, 7/above, 8/left, 9/above right} {
%    \fill (\i) circle (1pt) node [\Position] {\tiny \i};
%}      

%\draw[->] (-10,5.3)  -- (3.5,5.3);
   
   % \draw[scale=0.5,domain=-2.5:2.5,smooth,variable=\x,blue] plot ({\x},{\x*\x-2});
   % \draw ({2 * sqrt(2)},0.1) -- ++ (0,-0.2) node[below]{$D = 0$ };

\draw[->] (-8,4.3)  -- (-1,4.3);
 
     %  \draw[->] (-10,2.3)  -- (3.5,2.3);
 %        \node[circle] (g) at (-11,5.5) {$t$};    
  % \node[circle] (g) at (-11,4.5) {$w$};
 % \node[circle] (g) at (-11,3.5) {$j$};
  %\node[circle] (g) at (-11,2.5) {$e$};

  \node[circle] (g) at (-8,4.3) {$|$};
   \node[circle] (g) at (-5,4.3) {$|$};
     \node[circle] (g) at (-2,4.3) {$|$};
  \node[circle] (g) at (-8,4) {$t = 0$};
   \node[circle] (g) at (-5,4) {$t = 1$};
    \node[circle] (g) at (-2,4) {$t = 2$};
   \node[circle] (g) at (-8,5) {$X_1,$};
    \node[circle] (g) at (-6.8,5) {$D_1,$};
    \node[circle] (g) at (-5,5) {$(Y_1, X_2),$};
    \node[circle] (g) at (-3.3,5) {$D_2,$};
      \node[circle] (g) at (-2,5) {$Y_2$};

 %   \node[circle] (g) at (-6.2,5.5) {$5$};
 % \node[circle] (g) at (-9.7,5.5) {$1$};
 % \node[circle] (g) at (-2.7,5.5) {$9$};
 % \node[circle] (g) at (0.8,5.5) {$13$};

  % \node[circle] (g) at (-4.6,5.5) {$7$};
  %\node[circle] (g) at (-8.1,5.5) {$3$};
  %\node[circle] (g) at (-1.1,5.5) {$11$};
  %\node[circle] (g) at (2.4,5.5) {$15$};

 %\node[circle] (g) at (-5.5,5.5) {$6$};
 % \node[circle] (g) at (-9,5.5) {$2$};
 % \node[circle] (g) at (-2,5.5) {$10$};
 % \node[circle] (g) at (1.5,5.5) {$14$};

  % \node[circle] (g) at (-3.9,5.5) {$8$};
  %\node[circle] (g) at (-7.4,5.5) {$4$};
  %\node[circle] (g) at (-0.4,5.5) {$12$};
  %\node[circle] (g) at (3.1,5.5) {$16$};

%\node[ellipse, draw=gray, fit=(a) (b) (c) (d) (e) (f) (d), inner sep=-0.1mm] (all) {};

    \end{tikzpicture}
\caption{Sampling process in two periods. First, baseline covariates $X_{i,1}$ realize at $t = 0$. Then, treatment $D_{i,1}$ is assigned and the outcomes and covariates $(Y_{i,1},X_{i,2})$ realize at $t = 1$. Finally, the treatment $D_{i,2}$ is assigned and, afterwards, the endline outcome $Y_{i,2}$ realizes.  } \label{fig:time}
\end{figure}

We consider potential outcomes that are functions of the entire treatment history with $Y_{i,2}(d_1, d_2)$  denoting the potential outcome at time $t = 2$, under treatment $d_1$ in the first  and $d_2$ in the second period.
Our goal is to conduct inference on the estimand(s)
$$
\small 
\begin{aligned} 
\mathrm{ATE}(d_{1:2}, d_{1:2}') = \mu_2(d_1, d_2) - \mu_2(d_1', d_2'), \quad 
\mu_2(d_1, d_2) = \mathbb{E}\Big[Y_{i,2}(d_1, d_2)\Big], 
\end{aligned} 
$$ 
for given treatment histories $(d_1, d_2), (d_1', d_2')$. For example, researchers may be interested in estimating $
\mathrm{ATE}((1,1), (0,0)),
$
which denotes the \textit{total} effect of treating an individual for two consecutive periods \citep{athey2022design}; or the \textit{direct} effect $\mathrm{ATE}((1, 0), (0,0))$. 
%Our framework allows for generic $\mathrm{ATE}(d_{1:2}, d_{1:2}')$ for arbitrary histories $(d_{1:2}, d_{1:2}')$. 
 Figure \ref{fig:seqign} shows that the overall treatment effects capture the direct effect of the treatment on the outcomes and the indirect effect. For longer histories, one could also consider weighted combinations of relevant treatment effects, omitted for brevity (see Section \ref{sec:app}).

\begin{figure}[!ht]
\centering 
\scalebox{0.7}{
    \begin{tikzpicture}[scale = 1.16]
 %\node[draw, black,ultra thick, inner sep=0pt,
 % text width=14mm,
 % align=center, circle] (a) at (-6,-2) {$D_{i,t-1}^{(k)}$};
    \node[draw, black,ultra thick, inner sep=0pt,
  text width=14mm,
  align=center,   circle] (h) at (-3,-2) {$D_1$};
  
    \node[draw, black,ultra thick, inner sep=0pt,
  text width=14mm,
  align=center, circle] (e) at (-0.5,-2) {$Y_1, X_2$};
 
  \node[draw, black,ultra thick, inner sep=0pt,
  text width=14mm,
  align=center, circle] (f) at (-0.5,-4) {$D_2$};

   \node[draw, black,ultra thick, inner sep=0pt,
  text width=13.5mm,
  align=center, circle] (d) at (2,-2) {$Y_2$};

    \draw[->, -triangle 90]     (e) edge (f) (h) edge (f);
      \draw[->, -triangle 90]    (h) edge (e) (e) edge (d) (f) edge (d) ;
   \draw[->,  -triangle 90]  (h) edge[bend right=-30] node [left] {} (d);

    \end{tikzpicture}}
    \scalebox{0.7}{
        \begin{tikzpicture}[scale = 1.16]
 %\node[draw, black,ultra thick, inner sep=0pt,
 % text width=14mm,
 % align=center, circle] (a) at (-6,-2) {$D_{i,t-1}^{(k)}$};
    \node[draw, black,ultra thick, inner sep=0pt,
  text width=14mm,
  align=center,   circle] (h) at (-3,-2) {$D_1$};
  
    \node[draw, black,ultra thick, inner sep=0pt,
  text width=14mm,
  align=center, circle] (e) at (-0.5,-2) {$Y_1, X_2$};
 
  \node[draw, black,ultra thick, inner sep=0pt,
  text width=14mm,
  align=center, circle] (f) at (-0.5,-4) {$D_2$};

   \node[draw, black,ultra thick, inner sep=0pt,
  text width=13.5mm,
  align=center, circle] (d) at (2,-2) {$Y_2$};

    \draw[->, -triangle 90]     (e) edge (f) (h) edge (f)  ;
      \draw[->, -triangle 90, red]    (h) edge (e) (e) edge (d) ;
   \draw[->,  -triangle 90, red]  (h) edge[bend right=-30] node [left] {} (d);
 \draw[->, -triangle 90, red, dotted]     (f) edge (d) ;

    \end{tikzpicture} }
    \caption{The left panel illustrates all the possible causal paths under Sequential Ignorability (Assumption \ref{ass:seqign}). Here, past treatments may affect intermediate covariates, and future treatments may depend on past treatments, covariates and outcomes. The right panel presents two estimands of interest. In particular,  $\mathrm{ATE}(\mathbf{1}, \mathbf{0})$ (the effect of increasing treatments in both periods) denotes the effect mediated through all red edges, including the dotted red edge. Instead, $\mathrm{ATE}((1, 0), (0, 0))$ (the \textit{direct} effect of only increasing treatment in the first period) denotes the effect mediated through all red edges excluding the dotted red edge. } \label{fig:seqign}
    \end{figure}

\subsection{Dynamic treatment assignments} 

%Treatment histories may affect (intermediate) outcomes and \textit{covariates}. Let $Y_{i,1}(d_1, d_2), X_{i,2}(d_1, d_2)$ be the intermediate potential outcome and potential covariates for a treatment history $(d_1, d_2)$. Here, $X_{i,1}$ denotes baseline covariates. 
 
Treatment histories can impact both outcomes and covariates at intermediate stages. Let \(Y_{i,1}(d_1, d_2)\) represent the intermediate potential outcome and \(X_{i,2}(d_1, d_2)\) represent the potential covariates following a sequence of \(d_1\) then \(d_2\). Here, \(X_{i,1}\) refers to the baseline covariates.
 
     \begin{ass} \label{ass:noant} For $d_1 \in \{0,1\}$, let 
     $Y_{i,1}(d_1, 1) = Y_{i,1}(d_1, 0)$, $X_{i,2}(d_1,1) = X_{i,2}(d_1,0)$.
      \end{ass}  
      
     Assumption \ref{ass:noant} is a no-anticipation restriction: (i) intermediate potential outcomes only depend on past but not future treatments; (ii) the treatment status at $t = 2$ has no contemporaneous effect on covariates.

      Assumption \ref{ass:noant} allows for anticipatory effects governed by \textit{expectations} (e.g., individuals may choose treatments based on \textit{expected} future utilities), but not on the future treatment \textit{realizations} \citep[see][for a discussion]{athey2022design}. 
       
\begin{exmp}[Observed outcomes] \label{exmp:a_0}
Consider a dynamic model of the form  
$$
\small 
\begin{aligned} 
Y_{i,2} = g_2\Big(Y_{i,1}, X_{i,1}, X_{i,2}, D_{i,1}, D_{i,2}, \varepsilon_{i,2}\Big), \quad Y_{i,1} = g_1\Big(X_{i,1}, D_{i,1}, \varepsilon_{i,1}\Big),  \quad X_{i,2} = g_0\Big(X_{i,1}, D_{i,1}, \varepsilon_{i, X}\Big)
\end{aligned} 
$$  
for some arbitrary functions $g_2(\cdot), g_1(\cdot), g_0(\cdot)$ and unobservables 
$(\varepsilon_{i,2}, \varepsilon_{i,1}, \varepsilon_{i,X})$, with 
$
\varepsilon_{i,2} \perp D_{i,2} | Y_{i,1}, X_{i,1}, X_{i,2}, D_{i,1}, $ and $ (\varepsilon_{i,X}, \varepsilon_{i,1}) \perp D_{i,1} | X_{i,1}. 
$
 We can write 
$$ 
\small 
\begin{aligned} 
Y_{i,2}(d_1, d_2) = g_2\Big(Y_{i,1}(d_1), X_{i,1}, X_{i,2}(d_1), d_1, d_2, \varepsilon_{i,2}\Big), 
\end{aligned} 
$$
where $Y_{i,1}(d_1) = g_1\Big(X_{i,1}, d_1, \varepsilon_{i,1}\Big), X_{i,2} = g_0\Big(X_{i,1}, d_1, \varepsilon_{i, X}\Big)$. 
Since $g_1(\cdot), g_0(\cdot)$ are not functions of $d_2$, Assumption \ref{ass:noant} holds. (Assumption \ref{ass:linearity} below will impose restrictions on $\mathbb{E}[g_2(\cdot)]$.)
%The expected potential outcome (given baseline covariates) reads as 
%$$
%\mu(d_1, d_2) = \frac{1}{n} \sum_{i=1}^n \mathbb{E}\Big[ g_2\Big(g_1(X_{i,1}, d_1, \varepsilon_{i,1}), X_{i,1}, d_1, d_2, \varepsilon_{i,2}\Big)\Big| X_{i,1} \Big].  
%$$  
\qed 
\end{exmp}       
      
In the rest of our discussion, we index potential outcomes and covariates by past treatment history under Assumption \ref{ass:noant}. We define  $
H_{i,2} = \Big[D_{i,1}, X_{i,1}, X_{i,2}, Y_{i,1}\Big], 
$
 the vector of past treatment assignments, covariates, and outcomes in the previous period. We refer to 
 $
 H_{i,2}(d_1) = \Big[d_1, X_{i,1}, X_{i,2}(d_1), Y_{i,1}(d_1)\Big]
$ as the \textit{potential history} under treatment status $d_1$ in the first period. Here, $H_{i,2}$ can include interaction terms, omitted for brevity.

\begin{ass}[Sequential Ignorability] \label{ass:seqign} Assume that for all $(d_1, d_2) \in \{0,1\}^2$ ,  
$$
\small 
\begin{aligned} 
(A) \quad &Y_{i,2}(d_1, d_2) \perp D_{i,2} \Big | D_{i,1}, X_{i,1}, X_{i,2}, Y_{i,1}, \quad  
(B) &\Big(Y_{i,2}(d_1, d_2), H_{i,2}(d_1)\Big) \perp D_{i,1} \Big | X_{i,1},   
\end{aligned} 
$$
\end{ass} 

Sequential ignorability states that treatment in the first period is unconfounded conditional on baseline covariates, and the treatment in the second period is unconfounded conditional on all observable characteristics at $t= 2$. It assumes no unobserved factors after controlling for high dimensional observable characteristics and arbitrary past information. Note that we could also state (A), conditioning on $D_{i,1} = d_1$ and potential history $H_{i,1}(d_1)$.

In Example \ref{exmp:a_0}, Assumption \ref{ass:seqign} holds if 
$
D_{i,2} \perp \varepsilon_{i,2} \Big| D_{1, i}, X_{i,1}, X_{i,2}, Y_{i,1}, \quad D_{i, 1} \perp (\varepsilon_{i,1}, \varepsilon_{i,2}) \Big| X_{i,1}. 
$

\subsection{Potential local projections}

%Estimating average treatment effects can be challenging without imposing model restrictions on potential outcomes. First, treatments may be conditionally exogenous only after controlling for high dimensional covariates. Second, researchers may be agnostic on the distribution of the treatments. 

Following in spirit, \cite{jorda2005estimation}, we approximate the expectation of potential outcomes as linear functions of (high-dimensional) past characteristics. Different from \cite{jorda2005estimation}, linearity is imposed on expected potential instead of realized outcomes. Modeling potential outcomes directly avoids functional form restrictions on the treatment assignment mechanism.

\begin{ass} \label{ass:linearity} 
For some $\beta_{d_1, d_2}^{(1)} \in \mathbb{R}^{p_1}, \beta_{d_1, d_2}^{(2)} \in \mathbb{R}^{p_2}$ %\begin{equation}
$$
\small 
\begin{aligned} 
& \mathbb{E}\Big[Y_{i,2}(d_1, d_2)\Big| X_{i,1} = x_1\Big]  = x_1\beta_{d_1, d_2}^{(1)}, \quad  \\ & \mathbb{E}\Big[Y_{i,2}(d_1, d_2) \Big| X_{i,1} = x_1, X_{i,2} = x_2, Y_{i,1} = y_1, D_{i,1} = d_1\Big]  = \Big[d_1, x_1, x_2, y_1\Big] \beta_{d_1, d_2}^{(2)}.  
\end{aligned} 
$$ 
%+ \varepsilon_{i,1}(d_1, d_2) + \mathcal{O}(v(p_1))\\
\end{ass}

%We interpret Assumption \ref{ass:linearity} as a {\it local projection} model on \textit{potential} outcomes. 
Assumption \ref{ass:linearity} allows for heterogeneity in $(d_1, d_2)$, and the dimensions $p_1, p_2$ can grow with $n$ (because of additional covariates and/or covariates transformations). 
As for MSMs \citep{robins2000marginal}, Assumption \ref{ass:linearity} (i) does not require estimating a structural model for each time-varying-covariate, that would be prone to large estimation error in high dimensions; and (ii) it is agnostic on the treatment assignment mechanism because the model is imposed on potential outcomes. Coefficients can vary with time in the model. 
%In the next lines we discuss identification without imposing restrictions on the treatments, other than sequential ignorability. 

%Here, $p_t$ grows \textit{linearly} in $t$. 

%The potential projection encompasses as a special case linear marginal structural models of \cite{robins2000marginal}, as discussed in the following example.  

\begin{lem}[Identification] \label{lem:identification_model1} Let Assumptions \ref{ass:noant}, \ref{ass:seqign}, \ref{ass:linearity} hold. Then
$$
\small 
\begin{aligned} 
& \mathbb{E}\Big[Y_{i,2} \Big| H_{i,2}, D_{i,2} = d_2, D_{i,1} = d_1\Big] = \mathbb{E}\Big[Y_{i,2}(d_1, d_2) \Big| H_{i,2}, D_{i,1} = d_1\Big] = 
 H_{i,2}(d_1) \beta_{d_1, d_2}^{(2)}  \\ 
& \mathbb{E}\Big[\mathbb{E}\Big[Y_{i,2} \Big| H_{i,2}, D_{i,2} = d_2, D_{i,1} = d_1\Big] \Big| X_{i,1}, D_{i,1} = d_1\Big]  =
 \mathbb{E}\Big[Y_{i,2}(d_1, d_2) \Big| X_{i,1} \Big] = X_{i,1} \beta_{d_1, d_2}^{(1)}. 
\end{aligned} 
$$ 
\end{lem}

The proof is in Appendix \ref{sec:lemma1}. Lemma \ref{lem:identification_model1} builds on results in the literature on marginal structural models \citep[e.g.][]{robins2000marginal, bang2005doubly, tran2019double, kallus2020double}, where, here, we make a connection between marginal structural models and local projections in economics as a contribution of independent interest. Lemma \ref{lem:identification_model1} motivates a recursive estimation strategy discussed in Section \ref{sec:two_p}.   
%\jelena{Jelena: in statistics this recursive estimation is exactly what is also done in marginal structural models - so it is not that novel}
%We present the complete estimation procedure in the next section. 
%We conclude this discussion with an example and remark about linearity. 

%The above model imposes linearity with respect to \textit{potential histories}. We impose condition $(A)$ only conditionally on whether $D_{i,1}$ agrees with the treatment in the first period $d_1$. Such condition is weaker than imposing linearity for any possible realization of $D_{i,1}$, and it avoids that the linearity condition in $(B)$ would fail whenever $D_{i,1}$ depends on some probabilistic model on $X_{i,1}$. We provide an example below. 

\begin{exmp}[Linear Model] \label{exmp:sem1}
Let $X_{i,1}, X_{i,2}$ also contain an intercept. Let 
$\mathbb{E}\Big[Y_{i,1}(d_1) \Big| X_{i,1}\Big] = X_{i,1} \alpha_{d_1}$,
$\mathbb{E}\Big[ X_{i,2}(d_1) \Big|X_{i,1}\Big] =  W_{d_1} X_{i,1}$, $\mathbb{E}\Big[Y_{i,2}(d_1, d_1) \Big| X_{i,1}, X_{i,2}, Y_{i,1}, D_{i,1} = d_1\Big] = \\ \Big(X_{i,1}, X_{i,2}(d_1), Y_{i,1}(d_1)\Big) \beta_{d_1, d_2}^{(2)},  
$ 
for some arbitrary parameters $\alpha_{d_1} \in \mathbb{R}^{p_1}$ and $\beta_{d_1, d_2}^{(2)}   \in \mathbb{R}^{p_2}$; $W_{d_1}, V_{d_1} $ denote unknown matrices in $ \mathbb{R}^{p_2 \times p_1}$ . The model satisfies Assumption \ref{ass:linearity}. 
%Under sequential ignorability (see Example \ref{exmp:seq_ign}), the model satisfies Assumption \ref{ass:linearity}.   
\qed    
 \end{exmp}

\begin{rem}[Linearity in high-dimensions as an approximation to the true model] \label{rem:approximation} In the same spirit of \cite{belloni2014inference}, 
our results also directly extend to the case where we relax Assumption \ref{ass:linearity} and assume only approximate linearity up to an order $\mathcal{O}_p(r_p), $ where $r_p$ is an arbitrary sequence which depends on $p$ with $r_p = o(n^{-1/2})$. This setting embeds empirical applications where many covariates (and their transformation) can \textit{approximate} the conditional mean function as linear. As we further discuss in Section \ref{sec:theory}, we consider a high-dimensional setting where we will only require weak conditions on the estimated coefficients $||\hat{\beta}_{d_1, d_2}^{(t)} - \beta_{d_1, d_2}^{(t)}||_1 = O_p(n^{-1/4})$ with unbounded covariates and $||\hat{\beta}_{d_1, d_2}^{(t)} - \beta_{d_1, d_2}^{(t)}||_1 = o_p(1/\log(n))$ with bounded covariates, common for standard high-dimensional estimators.
\qed 
\end{rem} 

\begin{rem}[Comparison with standard local projections and DiD] Appendix \ref{sec:lp1} presents an extensive discussion and comparison of Lemma \ref{lem:identification_model1} with standard local projections and DiD common in economics that, as we show, would return biased estimates in a dynamic context. The reason is because standard local projections impose a linear model on the \textit{observed} outcomes $Y_{i,T}$ instead of potential outcomes, which therefore would also depend on the distribution of treatment assignments.  
\end{rem}

\section{Estimation with dynamic balancing}  \label{sec:two_p}

\subsection{Estimation with two periods} 

This section studies estimation. We defer to Section \ref{sec:app} a complete guide for practice, including discussion about the model, tuning parameters, and complexity. Appendix \ref{sec:extended_discussion} presents a more detailed description of each step to construct the estimator. 

Consider a two periods setting first, with $T = 2$. 
The estimator for $\mu(d_1,d_2)$ (and symmetrically for $\mu(d_1', d_2')$) proceeds in the following steps: 

\begin{itemize}
\item \textit{Estimation of the coefficients:} We estimate the coefficients $\hat{\beta}_{d_1, d_2}^{(2)}$ by regressing $Y_2$ onto $H_2$ (controlling for $D_2 = d_2$). Following Lemma \ref{lem:identification_model1}, we then regress $H_2 \hat{\beta}_{d_1,d_2}^{(2)}$ onto $X_1$ (controlling for $D_1 = d_1$) to estimate $\hat{\beta}_{d_1,d_2}^{(1)}$. These regression may allow for high-dimensional coefficients as for lasso. 
The full algorithm is in Algorithm 2. 
\vspace{2mm} 

\item \textit{Sequential estimation via regression adjustments:} Because linearity may only hold as we control for high-dimensional covariates, we cannot directly use our predictions $H_2 \hat{\beta}^{(2)}$ or $X_1 \hat{\beta}^{(1)}$ for valid causal inference. We instead must guarantee a vanishing high-dimensional bias through reweighting. 

\vspace{2mm} 

For weights in each period $\hat{\gamma}_2(d_1,d_2)$ and $\hat{\gamma}_1(d_1,d_2) \in \mathbb{R}^n$, denoting $\bar{X}_1$ the sample mean of $X_1$, an equivalent of a AIPW-type estimator takes the form \citep[e.g.][]{tchetgen2012semiparametric, zhang2013robust, jiang2015doubly, nie2021learning} 
\begin{equation} \label{eqn:myestimator}
\small 
\begin{aligned} 
\hat{\mu}_2(d_1, d_2; \hat{\gamma}_1, \hat{\gamma}_2) &= \hat{\gamma}_2(d_{1:2})^\top \Big(Y_2 - H_2 \hat{\beta}_{d_{1:2}}^{(2)}\Big) + \hat{\gamma}_1(d_{1:2})^\top \Big(H_2 \hat{\beta}_{d_{1:2}}^{(2)} - X_1 \hat{\beta}_{d_{1:2}}^{(1)} \Big) + \bar{X}_1 \hat{\beta}_{d_{1:2}}^{(1)}. 
\end{aligned} 
\end{equation}
We will omit the arguments $(\hat{\gamma}_1, \hat{\gamma}_2)$ in $\hat{\mu}_2$ whenever clear. Its construction directly follows from properties of influence functions \citep{tchetgen2012semiparametric}. To gain insights, note that 
a simple estimator for $\mathbb{E}[Y_2(d_1, d_2)]$ is $\bar{X}_1 \hat{\beta}_{d_1, d_2}^{(1)}$. This estimator is consistent and asymptotically normal for low dimensional $\hat{\beta}_{d_1, d_2}^{(1)}$, but not in high-dimensional settings. 
Instead, the estimator in \eqref{eqn:myestimator} uses regression adjustments over \textit{each} period to control the high-dimensional bias. 

\vspace{2mm} 

A choice of the weights from previous literature are inverse probability weights (IPW). These weights for the first and second period are
\begin{equation} \label{eqn:ipw} 
\small 
\begin{aligned} 
\frac{1\{D_{i,1} = d_1\}}{n P(D_{i,1}  = d_1 | X_{i,1})}, \quad \frac{1\{D_{i,1} = d_1\}}{n P(D_{i,1}  = d_1 | X_{i,1})} \times  \frac{1\{D_{i,2} = d_2\}}{P(D_{i,2} = d_2 | Y_{i,1}, X_{i,1}, X_{i,2}, D_{i,1})}. 
\end{aligned} 
\end{equation}
 
However, in high dimensions, AIPW weights require the correct specification of both the propensity score, which in practice may be unknown, and  also of the conditional mean function (see Remark \ref{rem:aipw}). Also, in small sample, IPW are sensitive to poor overlap (high variance). 
  Motivated by these considerations, we leverage the linear  structure to replace IPW with more stable balancing weights that we introduce here.

\vspace{2mm} 

\item \textit{Main balancing conditions:} The main step is in the choice of the balancing weights. The core idea is to decompose 
 \begin{equation} \label{eq:decomposition}
  \small 
  \begin{aligned} 
  \hat{\mu}_2(d_1, d_2)  =  \bar{X}_1\beta_{d_1, d_2}^{(1)}+ T_1 + T_2 +T_3,
  \end{aligned} 
\end{equation} 
where $\bar{X}_1\beta_{d_1, d_2}^{(1)}$ converges to $\mu(d_1,d_2)$ under standard $\sqrt{n}$-asymptotics, and  
$$ 
\small 
\begin{aligned} 
T_2 =  \hat{\gamma}_2(d_1, d_2)^\top \Big[Y_2 - H_2 \beta_{d_1,d_2}^{(2)}\Big] , \quad T_3=   \hat{\gamma}_1(d_1, d_2)^\top \Big[H_2 \beta_{d_1,d_2}^{(2)} -  X_1 \beta_{d_1, d_2}^{(1)} \Big] 
\end{aligned} 
$$  
do not depend on the estimation error for $\beta$. 
The remaining term $T_1$ is the key component that determines the high-dimensional bias due to the estimation error of $\hat{\beta}$. In particular, writing \small $T_1 =  \Big( \hat{\gamma}_1(d_1, d_2)^\top X_1 - \bar{X}_{1}\Big)  (\beta_{d_1, d_2}^{(1)} - \hat{\beta}_{d_1, d_2}^{(1)})
+ \Big(\hat{\gamma}_2(d_1, d_2)^\top H_2 - \hat{\gamma}_1(d_1, d_2)^\top H_2\Big)   (\beta_{d_1, d_2}^{(2)} - \hat{\beta}_{d_1, d_2}^{(2)})$, \normalsize we have
\begin{equation} \label{eqn:eqbalance}
\small 
\begin{aligned} 
T_1 \le&
  \underbrace{\| \hat{\beta}_{d_1,d_2}^{(1)} - \beta_{d_1, d_2}^{(1)}\| _1 \Big| \Big| \bar{X}_1 - \hat{\gamma}_1(d_1, d_2)^\top X_1 \Big| \Big|_{\infty}}_{(i)} +
 \underbrace{\| \hat{\beta}_{d_1,d_2}^{(2)} - \beta_{d_1, d_2}^{(2)}\| _1 \Big| \Big| \hat{\gamma}_2(d_1, d_2)^\top H_2 - \hat{\gamma}_1(d_1, d_2)^\top H_2 \Big| \Big|_{\infty}}_{(ii)}.
 \end{aligned} 
\end{equation} 

  The estimation error depends on the \textit{product} between the imbalance of covariates characterized by the expressions in $(i), (ii)$ and the estimation error of the coefficients, in the spirit of strong doubly-robustness properties. 
 Therefore a key insight is that to make the estimation error of $\hat{\beta}$ asymptotically negligible over each period, we want to guarantee that 
 \begin{equation} \label{eqn:bb} 
 \small 
 \begin{aligned} 
 \Big| \Big| \bar{X}_1 - \hat{\gamma}_1(d_1, d_2)^\top X_1 \Big| \Big|_{\infty}, \quad   \Big| \Big| \hat{\gamma}_2(d_1, d_2)^\top H_2 - \hat{\gamma}_1(d_1, d_2)^\top H_2 \Big| \Big|_{\infty}  
 \end{aligned} 
 \end{equation} 
 are sufficiently small. 
 The first term in Equation \eqref{eqn:bb} coincides with the balancing term in \cite{athey2018approximate}. However, 
 here we also require that histories in the second period are balanced, once \textit{reweighted} by the weights in the previous period.  This motivates the dynamic (sequential) balancing weights. To our knowledge, this paper is the first to derive such dynamic balancing conditions. 

\vspace{2mm}

\item \textit{Additional conditions on the weights:} 
The remaining two terms $T_2, T_3$
are mean zero under two conditions: (i) the weights $\hat{\gamma}_2$ are only functions of $H_2$ (and therefore $D_1,D_2$) but not functions of $Y_2$, and weights $\hat{\gamma}_1$ are only functions of $(D_1,X_1)$; (ii) $\hat{\gamma}_{i,2}$ differs from zero only for units with $\{D_{i,2} = d_2, D_{i,1} = d_1\}$ and $\hat{\gamma}_{i,1}$ differs from zero only for units with $D_{i,1} = d_1$. A special case of weights satisfying such conditions are IPW. 

 %Intuitively, we do not want that the weights ``overfit" by balancing using future information and at the same time, each period $t$ regression adjustment should use information from those individuals exposed to the treatment path up to period $t$. 

 \vspace{2mm} 
 
\item \textit{Complete algorithm:} Algorithm 1 presents the algorithmic details in generic $T$ periods. We choose weights sequentially: each period $t$ we minimize the $l_2$ norm of the weights to find stable weights. Such weights are non-zero only for units with treatment history equal to the target path $d_{1:t}$ up-to time $t$; they sum to one and are bounded not to assign too large weight to a few units. The main balancing condition requires that the history $H_{i,t}$ are balanced once reweighting $H_{i,t}$ by the balancing weights estimated in the previous period $t-1$.

\end{itemize}

We formalize this discussion below (Appendix \ref{lem:33} contains a formal proof). 

\begin{thm}[Balancing weights] \label{thm:residual_final} \label{lem:balancing1} Let assumptions \ref{ass:noant} - \ref{ass:linearity} hold. Then Equation \eqref{eq:decomposition} holds with $T_1$ bounded as in Equation \eqref{eqn:eqbalance}. 

In addition, suppose that $\hat{\gamma}_1$ is measurable with respect to the sigma algebra $\sigma(X_1, D_1)$ and $\hat{\gamma}_2$ is measurable with respect to the sigma algebra $\sigma(X_1, X_2, Y_1, D_1, D_2)$. Suppose in addition that $\hat{\gamma}_{i,1}(d_1, d_2) = 0$ if $D_{i,1} \neq d_1$ and $\hat{\gamma}_{i,2}(d_1, d_2) = 0$ if $(D_{i,1}, D_{i,2}) \neq (d_1, d_2)$. Then 
$
\mathbb{E}\Big[T_2 \Big| X_1, D_1, Y_1, X_2, D_2\Big] = \mathbb{E}\Big[T_3 \Big| X_1, D_1\Big] = 0.    
$
\end{thm}

\begin{rem}[Estimating when-to-treat policies] \label{rem:when_to_treat} The goal and focus of this paper is to study the effect of a specific counterfactual treatment-assignment sequence, often the object of interest in applications, e.g., \cite{robins2000marginal}, \cite{imai2015robust}, \cite{acemoglu2019democracy}. 
This differs from studying the effect of an optimal treatment path as in \cite{murphy2003optimal} or \cite{nie2021learning}. To gain further insights on the latter problem, define $\pi_t: \mathbb{R}^{p_t} \mapsto \{0,1\}$ a binary policy decision with $\pi(H_{i,t}) \in \{0,1\}$ indicating whether to treat individual $i$ given an observed history $H_{i,t}$ as defined in Equation \eqref{eqn:H_it}, and 
$$
\small 
\begin{aligned} 
V(\pi) = \mathbb{E}\left[Y_{i,2}(\pi)\right], \quad Y_{i,2}(\pi):= Y_{i,2}\Big(\pi_1(X_{i,1}), \pi_2(H_{i,2}(\pi_1(X_{i,1}))\Big) 
\end{aligned} 
$$ 
as the expected value of potential outcome $Y_{i,2}$ evaluated under a treatment trajectory $\pi_1(X_{i,1}), \pi_2(H_{i,2}(\pi(X_{i,1})))$. 
Different from our framework, where the choice of the balancing weight depends on the model for $\mathbb{E}[Y_{i,2}(d_1, d_2) | X_{i,1}]$, in this case the balancing weights must depend on the 
model for $\mathbb{E}[Y_{i,2}(\pi)| X_{i,1}]$. In the latter case, we must integrate over $\pi_2(H_{i,2}(\pi_1))$ as a function of $H_{i,2}$.  Appendix \ref{sec:appendix_when_to_treat} discusses an extension for this setting. 
\end{rem}

\subsection{Generalization to multiple time periods} \label{sec:multiple}

\begin{figure}[!ht]
\centering
\includegraphics[scale=0.8, page = 1, 
trim={1cm 17cm 1cm 2.5cm},clip]{./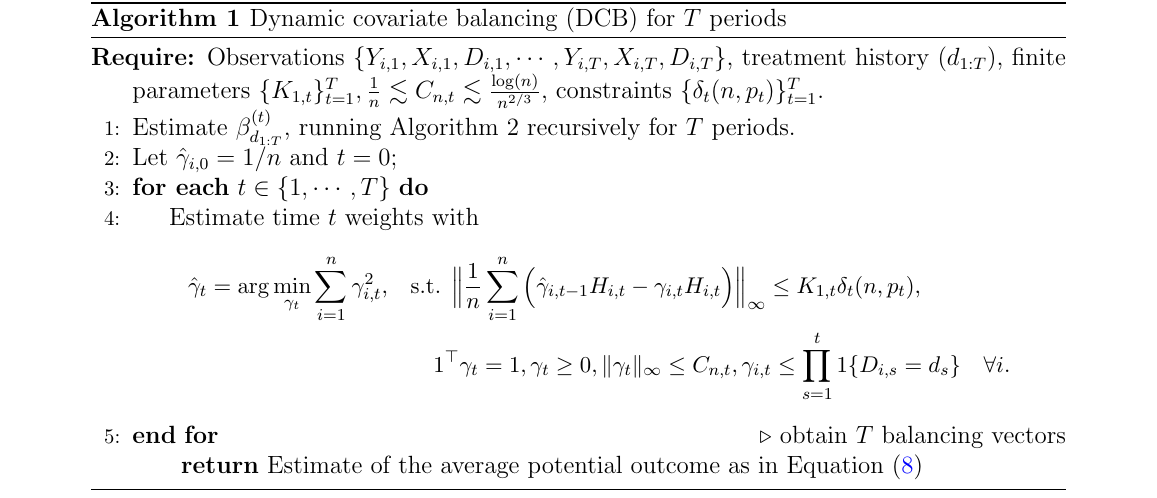} 
\end{figure}

%\begin{figure}[!ht]
%\centering
%\includegraphics[scale=0.8, page = 11, 
%trim={1cm 9.5cm 1cm 2.5cm},clip]{./covariate_balanacing_36 %} 
%\end{figure}

We now describe in details the procedure with finite $T$ periods. Let $d_{1:T} = (d_1, \cdots, d_T)$, 
\begin{equation}
\small 
\begin{aligned} 
\mathrm{ATE}(d_{1:T}, d_{1:T}') = \mu_T(d_{1:T}) - \mu_T(d_{1:T}'), \quad   \mu_T(d_{1:T}) =  \mathbb{E}\Big[Y_T(d_{1:T}) \Big]. 
\end{aligned} 
\end{equation}
This estimand denotes the difference in potential outcomes for two treatment histories $d_{1:T}, d_{1:T}'$. We denote 
\begin{equation} \label{eqn:H_it}
\small 
\begin{aligned} 
H_{i,t} = \Big[D_{i,1}, \cdots , D_{i,t - 1}, X_{i,1}, \cdots, X_{i,t}, Y_{i,1}, \cdots,  Y_{i,t-1} \Big] \in \mathbb{R}^{p_t}
\end{aligned} 
\end{equation}  
the vector containing information from time one to time $t$, after excluding the treatment assigned in the present period $D_t$. Interaction components may also be considered, omitted here for brevity. 
We let the potential history be
$
H_{i,t}(d_{1:(t-1)}) = \Big[d_{1:(t-1)}, X_{i,1:t}(d_{1:(t-1)}), Y_{i,1:(t-1)}(d_{1:(t-1)}) \Big]. 
$

\begin{ass} \label{ass:seqignm} For any $d_{1:T} \in \{0,1\}^T$,  and $t \le T$, 
\begin{itemize} 
\item[(A)] (No-anticipation) The potential history $H_{i,t}(d_{1:T})$ is constant in $d_{t:T}$;
%, for any $t \in \{1, ..., T\}$;
%, d_{1:T} \in \{0,1\}^T$. 
\item[(B)]   (Sequential ignorability) $\Big(Y_{i,T}(d_{1:T}), H_{i,t+1}(d_{1:(t+1)}), \cdots, H_{i,T-1}(d_{1:(T-1)})\Big) \perp D_{i,t} | H_t$;
\item[(C)]  (Potential projections) For some $\beta_{d_{1:T}}^{(t)} \in \mathbb{R}^{p_t}$,  
$$
\small 
\begin{aligned} 
\mathbb{E}\Big[Y_{i,T}(d_{1:T}) | D_{i, 1:(t-1)} = d_{1:(t-1)}, X_{i,1:t}, Y_{i,1:(t-1)}\Big] = H_{i,t}(d_{1:(t-1)})  \beta_{d_{1:T}}^{(t)}.
\end{aligned}
$$
\end{itemize} 
\end{ass} 
Assumption \ref{ass:seqignm} generalizes Assumptions \ref{ass:noant}-\ref{ass:linearity} from the two-period setting.  Identification follows similarly to Lemma \ref{lem:identification_model1}. For given weights $\hat{\gamma}_{1:T}$, and coefficients $\hat{\beta}^{(1:T)}$, we estimate
\begin{equation} \label{eqn:generalest}
\small 
\begin{aligned} 
\hat{\mu}_T(d_{1:T}) = & \sum_{i=1}^n \left\{\hat{\gamma}_{i,T}(d_{1:T}) Y_{i,T} -   \sum_{t = 2}^T \Big(\hat{\gamma}_{i,t}(d_{1:T}) - \hat{\gamma}_{i,t-1}(d_{1:T})\Big) H_{i,t} \hat{\beta}_{d_{1:T}}^{(t)} -  \Big(\hat{\gamma}_{i,1}(d_{1:T}) - \frac{1}{n} \Big) X_{i,1} \hat{\beta}_{d_{1:T}}^{(1)} \right\}.
\end{aligned} 
\end{equation}

Coefficients are estimated recursively as in the two periods setting (see Algorithm 2). 

\begin{lem} \label{lem:lemmam1} 
Let $\hat{\gamma}_{i,T}(d_{1:T}) = 0$ if $D_{i,1:T} \neq d_{1:T}$. For $\varepsilon_{i,t}(d_{1:T}) =  Y_{i,T}(d_{1:T}) - H_{i,t}(d_{1:(t-1)})  \beta_{d_{1:T}}^{(t)}$, 

\begin{equation}
\small 
\begin{aligned} 
\hat{\mu}_T(d_{1:T}) - \bar{X}_1 \beta_{d_{1:T}}^{(1)} = 
&\underbrace{\sum_{t = 1}^T \Big(\hat{\gamma}_t(d_{1:T}) H_t - \hat{\gamma}_{t-1} (d_{1:T})H_t\Big) (\beta_{d_{1:T}}^{(t)} - \hat{\beta}_{d_{1:T}}^{(t)})}_{(I_1)} 
+ \underbrace{\hat{\gamma}_T^\top(d_{1:T}) \varepsilon_T}_{(I_2)}  \\ 
&
+ \underbrace{\sum_{t=2}^T \hat{\gamma}_{t-1}(d_{1:T})\Big(H_t \beta_{d_{1:T}}^{(t)} - H_{t-1} \beta_{d_{1:T}}^{(t-1)}\Big)}_{(I_3)} 
%+ \sum_{t=1}^T \mathcal{O}(r(p_t)).
\end{aligned} 
\end{equation}  
 
\end{lem} 
The proof is in Appendix \ref{app:lem22}. 
Lemma \ref{lem:lemmam1} decomposes the estimation error into three components.  First, ($I_1$), depends on the estimation error of the coefficient and on balancing properties of the weights. ($I_1$) suggests imposing balancing conditions on \\ 
$ 
 \Big| \Big| \hat{\gamma}_t(d_{1:T}) H_t - \hat{\gamma}_{t-1}(d_{1:T}) H_t \Big| \Big|_{\infty}
$
each period. The components characterizing the estimation error are $ (I_2)=\hat{\gamma}_T(d_{1:T})^\top \varepsilon_T$,  and ($I_3$), which are mean zero as described in Appendix Lemma \ref{lem:lemmam2}.  Note that $\bar{X}_1 \beta_{d_{1:T}}^{(1)}$ converges to $\mu_T(d_{1:T})$ under standard $\sqrt{n}$-asymptotics.

\section{Theoretical properties and inference} \label{sec:theory}

Next, we study the theoretical properties of the estimator in finite $T$ periods. We consider a high dimensional regime where the dimension covariates in each period $p_1, \cdots, p_T$ can grow to infinity, as long as  $\log (\max_t p_t n)/n^{1/4} \to 0$. We impose the following conditions.

\begin{ass}[Overlap and tails' conditions] \label{ass:weakoverlap}  
Assume that (i) ${P(D_{i,t} = d_t | H_{i,t})}  \in (\delta,1 - \delta), \delta \in (0,1)$ for each $t \in \{1, \cdots, T\}$; and (ii) $H_{i,t}^{(j)}, \forall j$ is Sub-Gaussian given $H_{i,t-1}$ and $X_{i,1}^{(j)}, j \in \{1, \cdots, p_1\}$ is Sub-Gaussian.
 \end{ass}

Condition (i) is the overlap condition, standard in the causal inference literature. The overlap condition is sufficient (but not necessary, see the discussion of \cite{athey2018approximate} in cross-sectional settings) to show existence of a feasible solution of Algorithm 1; see Remark \ref{rem:overlap}. Condition (ii) is a tail restriction. Assumption \ref{ass:weakoverlap} can be relaxed by assuming that the product of the inverse probability weights times the covariates is sub-exponential  at the expense of more tedious derivations.

\begin{thm}[Existence of feasible weights] \label{thm:overlap} 
 Let Assumptions \ref{ass:seqignm}, \ref{ass:weakoverlap} hold. Consider $\delta_t(n,p_t) \ge c_{0,t} {n^{-1/2}}{\log^{3/2}(p_tn)}$ for a finite constant $c_{0,t} < \infty$, and $C_{n,t} \ge \frac{\bar{c}}{n \delta^t}$ for some sufficiently large constant $\bar{c} \in (0,\infty)$. 
Then with probability $\eta_n \rightarrow 1$, for each $t \in \{1, \cdots, T\}, T < \infty$, for some $N >0$, $n > N$, there exists a $\hat{\gamma}_t^*(\hat{\gamma}_{t-1})$ satisfying the constraint in Algorithm 1 as a function of the solution $\hat{\gamma}_{t-1}$ of Algorithm 1, where 
\begin{equation}\label{eqn:hh} 
\small 
\begin{aligned} 
\hat{\gamma}_{i,0}^* = 1/n, \quad \hat{\gamma}_{i,t}^*(\hat{\gamma}_{t-1}) = \hat{\gamma}_{i,t-1} w_{i,t}^* \Big/ \sum_{i=1}^n \hat{\gamma}_{i,t-1} w_{i,t}^*, \quad w_{i,t}^*:=  \frac{1\{D_{i,t} = d_t\}}{P(D_{i,t} = d_t | H_{i,t})} 
\end{aligned} 
\end{equation} 
 \end{thm}
 
 The proof is in Appendix \ref{app:overlap_proof}. In particular, the reader may refer to Appendix Lemma \ref{lem:firststat} for additional details on the construction of the feasible weights.  The algorithm thus finds weights that minimize the small sample variance, with the IPW weights $w_{i,t}^*$ (reweighted by the solution in the previous iteration $\hat{\gamma}_{i,t}$) being \textit{one} possible solution. 
 Minimizing the $l_2$ norms of the weights is a natural objective when the goal is to minimize the variance of the ATE estimator: following Theorem \ref{thm:thm_asym_t} below, under homoskedasticity of the residuals from each projection, the variance is proportional to a weighted sum of $||\hat{\gamma}_t||^2$. However, homoskedasticity is not necessary for our results.

\begin{cor} \label{cor:comparison} Let the conditions in Theorem \ref{thm:overlap} hold. Then  for some $N > 0$, $n > N$, with probability $\eta_n \rightarrow 1$,   there exist a sequence of feasible solutions $\Big\{\hat{\gamma}_t\Big\}_{t=1}^T$ such that 
 $
n ||\hat{\gamma}_{t}||^2 \le n||\hat{\gamma}_{t}^*(\hat{\gamma}_{t-1})||^2$ for each $t$, where  $\hat{\gamma}_{i,t}^*(\hat{\gamma}_{t-1})$ is as defined in Equation \eqref{eqn:hh}, and $n ||\hat{\gamma}_t||^2 \le n c_t ||\hat{\gamma}_{t-1}||^2$ for a finite constant $c_t < \infty$. 
\end{cor} 
Corollary \ref{cor:comparison} provides a desiderable stability property. It shows that the $l_2$ norm of the weights is upper bounded by the (stabilized) IPW weights reweighted by the solution in the previous step. From basic concentration inequalities, for $n$ sufficiently large, $\mathbb{E}[\hat{\gamma}_{i,t}^{*2}(\hat{\gamma}_{t-1})] \approx \mathbb{E}\Big[\frac{ \hat{\gamma}_{i,t-1}^2 }{P(D_{i,t} = d_t | H_{t})}\Big]$. In addition, the last part of the corollary shows that the weights' norm is controlled over each period by the norm in the previous period up to a finite multiplicative constant.

\begin{ass} \label{ass:4m} Let the following hold: for every $t \in \{1, \cdots, T\}$, $d_{1:T} \in \{0,1\}^T$, 
\begin{itemize} 
 \item[(i)]  $\max_t \| \hat{\beta}_{d_{1:T}}^{(t)} - \beta_{d_{1:T}}^{(t)} \| _1 \delta_t(n,p_t) = o_p(1/\sqrt{n})$, $\delta_t(n,p_t) \ge c_{0, t} {n^{-1/2}}{\log^{3/2}(p_tn)}$ for a finite constant $c_{0, t}$; In addition let either of the two following conditions hold: (a) $\max_t \| \hat{\beta}_{d_{1:T}}^{(t)} - \beta_{d_{1:T}}^{(t)} \|_1 = O_p(n^{-1/4})$; or (b) $\max_t \| \hat{\beta}_{d_{1:T}}^{(t)} - \beta_{d_{1:T}}^{(t)} \|_1 = o_p(1/\log(n))$ and $||H_t||_{\infty} \le \bar{H}$ almost surely for a finite constant $\bar{H} < \infty$ for all $t \ge 1$; 
 \item[(ii)]  Let $\nu_{i,t} = (H_{i,t +1} \beta_{d_{1:T}}^{(t+1)} - H_{i,t} \beta_{d_{1:T}}^{(t)}), \varepsilon_{i,T} = Y_{i,T} - H_{i,T} \beta_{d_{1:T}}^{(T)}$. For a finite constant $C$, $\mathbb{E}[\varepsilon_{i, T}^4 | H_{i, T}, D_{i, T}] < C$, $\mathbb{E}[\nu_{i,t}^4 | H_{i,t-1}, D_{i,t-1}] < C$, almost surely. In addition, $Y_{i,T}$ is a sub-gaussian random variable. 
 \item[(iii)]  $\mbox{Var}(\varepsilon_{i,T} | H_{i,T}, D_{i, T}), \mbox{Var}(H_{i,t} \beta_{d_{1:T}}^{(t)} - H_{i,t-1} \beta_{d_{1:T}}^{(t-1)} | H_{i,t-1}, D_{i,t-1}) > u_{min}$, almost surely, for some constant $u_{min} > 0$.  
 \end{itemize}  
\end{ass} 

Assumption \ref{ass:4m} imposes the consistency in estimating the outcome models. Condition (i) is attained for many high-dimensional estimators, such as the lasso method, under sparsity and restricted eigenvalues restrictions; see, e.g.,   \cite{buhlmann2011statistics}. An example and derivation for condition (i) for Lasso under sparsity is included in Example \ref{lem:suff}  (Appendix \ref{aa:suff}). As we discuss in Appendix Remark \ref{rem:restricted}, the restricted eigenvalue condition is imposed for all $H_t, t \ge 1$; references that study this condition from different angles include \cite{deshpande2023online}, Section C.4.

 \begin{thm}[Parametric convergence rate] \label{thm:convergence_rate_first}  Let the conditions in Theorem \ref{thm:overlap} and Assumption \ref{ass:4m} hold. 
Then, whenever $\log(n(\sum_t p_t))/n^{1/4} \to 0$ with $n,p_1, \cdots,p_T\to \infty$, it follows that 
 $
 \hat{\mu}_T(d_{1:T}) - \mu_T(d_{1:T}') = \mathcal{O}_P\Big(n^{-1/2}\Big).  
 $
 \end{thm}

Theorem \ref{thm:convergence_rate_first} guarantees a parametric convergence rate with high-dimensional covariates. 
%Observe that the theorem does not require conditions on the convergence rate of the estimated propensity score.

\begin{thm}[Inference]  \label{thm:thm_asym_t} 
Let the conditions in Theorem \ref{thm:overlap} and Assumption \ref{ass:4m} hold. 
Then, whenever $\log(n \sum_t p_t)/n^{1/4} \to 0$, as $n, p_1, \cdots, p_T \rightarrow \infty$, 
\begin{equation}
\small  
\begin{aligned} 
{  \frac{\sqrt{n} \Big(\hat{\mu}(d_{1:T}) - \mu(d_{1:T})\Big)}{\hat{V}_T(d_{1:T})^{1/2}}} \rightarrow_d \mathcal{N}(0,1) 
\end{aligned} 
\end{equation} 
where 
 $$
\small  
 \begin{aligned} 
& \hat{V}_T(d_{1:T}) =  \\ 
& \sum_{i = 1}^n \left\{ n\hat{\gamma}_{i,T}^2(d_{1:T})  (Y_{i, T} - H_{i,T} \hat{\beta}_{d_{1:T}}^{(T)})^2 + \sum_{t=1}^{T-1} n  \hat{\gamma}_{i,t}^2(d_{1:t}) (H_{i,t+1} \hat{\beta}_{d_{1:T}}^{t+1} - H_{i,t} \hat{\beta}_{d_{1:T}}^{t})^2 +  \frac{1}{n}   (\bar{X}_1 \hat{\beta}_{d_{1:T}}^{(1)} - X_{i,1} \hat{\beta}_{d_{1:T}}^{(1)})^2\right\}
\end{aligned}.  
$$ 
\end{thm}

Inference on ATE follows as a direct corollary for two histories $d_{1:T}, d_{1:T}'$ with $d_1 \neq d_1'$ (see Theorem \ref{cor:ate}), as described in Appendix \ref{sec:main_theorem}. Also, for inference conditional on baseline covariates in the first period $X_1$, the relevant variance is $\hat{V}_T(d_{1:T}) - \frac{1}{n} \sum_i (\bar{X}_1 \hat{\beta}_{d_{1:T}}^{(1)} - X_{i,1} \hat{\beta}_{d_{1:T}}^{(1)})^2$ (since we condition on $X_1$) and for the corresponding ATE is the sum of these two variances.

  \begin{rem}[Strict overlap assumption] \label{rem:overlap} Strict overlap (Assumption \ref{ass:weakoverlap} $(i)$) is not necessary to achieve parametric convergence rates whenever a feasible solution to Algorithm 1 exists (see for instance \cite{athey2018approximate}, Lemma 2 in cross-sectional settings). \qed 
  \end{rem}

\begin{rem}[Rate conditions and comparison with AIPW] \label{rem:aipw} As noted in \cite{hirshberg2021augmented} in static settings, the advantages of balancing weights compared to AIPW  is to be able to estimate causal effects of interest under essentially the same conditions for AIPW on the conditional mean function, but weaker conditions on the balancing weights. Specifically, in high-dimensional settings, AIPW requires conditions of the form $||\hat{e} - e|| = o_P(n^{-1/4}), ||\hat{\beta} - \beta|| = o_P(n^{-1/4})$, where $e$ denotes the propensity score. Here, we only require $||\hat{\beta} - \beta||_1 = O_p(n^{-1/4})$ with sub-gaussian histories $H_t$ and $||\hat{\beta} - \beta||_1 = o_p(1/\log(n))$ with uniformly bounded covariates and no condition on the propensity score.  We could think of balancing weights as inheriting a ``product-of-rate" condition, where here the estimation error takes the form: 
$
||\hat{\beta} - \beta||_1 \delta(n,p). 
$ 
\end{rem} 

\begin{rem}[Propagation of error over multiple periods]  A natural question is how estimation error varies as the number of periods increase. To shed light on this question, note that the estimation bias for $\hat{\mu}(d_{1:T})$ is bounded above by 
$$
\sum_{t=1}^T ||\hat{\beta}_{d_{1:T}}^{(t)} - \beta_{d_{1:T}}^{(t)} ||_1 \Big| \Big| \sum_{i=1}^n \hat{\gamma}_{i,t-1} H_{i,t} - \sum_{i=1}^n \hat{\gamma}_{i,t} H_{i,t} \Big| \Big|_{\infty}.
$$ 
Therefore, multiple time periods may affect the error through  the estimation error of the coefficient, and through the weights $\hat{\gamma}_t$ in the balancing component. The properties of estimated coefficients for Lasso are presented in Appendix \ref{aa:suff}. For the balancing component instead, existence of a feasible solution requires that the balancing component grows at rate $\sqrt{\log(t)}$, so that effectively the constants are of order $K_{1,t} = \log^{1/2}(t)$ (see Appendix Lemma \ref{lem:firststat}). Intuitively, as we move along the path over multiple time periods it becomes harder to guarantee approximate balance, reflecting into weaker constraints on the balancing set.  
\end{rem}

\section{Guide to practice: numerical studies and application} \label{sec:app}

\subsection{Implementation guide} 

 The complete Algorithm 1 is implemented off-the-shelf in the R-package {\tt DynBalancing}. 

It requires researchers to specify four main parameters: the length $h$ of the treatment history considered (i.e., carry-over effects), two treatment histories of length $h$, $d_{(T-h):T}, d_{(T-h):T}'$ to compare, the model used to estimate the coefficients ({\tt linear} or {\tt fully interacted}) as described in Algorithm 2, and whether to consider a {\tt pooled} regression.

\textit{Choosing the length of the treatment history with long panel} With short panels, selecting the length of the treatment history $h = T$ is natural. With long panels, this may reduce the effective sample size or be infeasible (as the effective sample becomes ``thinner"). This is because, as for IPW, the weights at time $t$ can be non zero only for those units observed over a given treatment path up to time $t$. Therefore, we recommend selecting a treatment history $h$ shorter than the number of periods $T$ (i.e., $h < T$), and estimate causal effects of the form
\begin{equation} \label{eqn:average_effect} 
\small 
\begin{aligned} 
\mathbb{E}\left[Y_{i,T}\Big(D_{1:(T - h)}, d_{T- h + 1}, \cdots, d_T\Big)\right] - 
\mathbb{E}\left[Y_{i,T}\Big(D_{1:(T - h)}, d_{T- h + 1}', \cdots, d_T'\Big)\right]
\end{aligned} 
\end{equation}  
for given treatment histories $d_{(T-h):T}, d_{(T-h):T}'$. Equation \eqref{eqn:average_effect} estimates the effect of exposing an individual to two different histories over the last $h$ periods and average over previous assignments. Our analysis and estimation follow similarly to Algorithm 1, with the difference that we construct balancing weights starting from period $T - h$ and proceed sequentially until time $T$ (observable characteristics before time $T - h$ can be used as additional controls). As in \cite{imai2018matching}, the focus on Equation \eqref{eqn:average_effect} makes our procedure robust to long panels. 

As a rule of thumb, as we illustrate in our application, we recommend report results for different choices of $h$ (say $h \in \{1, \cdots, 10\}$ in a long panel); our package reports and plots the estimated effects along-side standard errors which can help disentangle the trade-offs between identification of long-run effects against precision. In addition, it is useful to report $1/(n ||\hat{\gamma}_t||^2)$ as a measure of effective sample size at time $t$ for different values of $t$, which can help guide the choice of $h$ (larger choices of $1/(n ||\hat{\gamma}_t||^2)$ indicates more accurate treatment effects estimates).  \\ 
\textit{Choice of the model specification ({\tt linear} or {\tt fully interacted})} The estimation error $||\hat{\beta}_{d_{1:T}}^{(t)} - \beta_{d_{1:T}}^{(t)}||_1$ depends on modeling assumptions. For the {\tt fully interacted} model, $||\hat{\beta}_{d_{1:T}}^{(t)} - \beta_{d_{1:T}}^{(t)}||_1$ scales exponentially with $T$ as it considers all possible interactions with the treatment assignments $d_1, \cdots, d_T$. The {\tt linear} model avoids that the effective sample size shrinks exponentially in $T$ but imposes homogeneity restrictions of treatment effects as for example in \cite{acemoglu2019democracy}, by modeling treatment effects as additive and linear. See Algorithm 2 for more details. In addition, when {\tt pooled} is true, we consider a regression
$$
\begin{aligned} 
Y_{i,t}(d_{1:t}) =  \beta_0 + \beta_1 d_t + \beta_2 Y_{i,t-1}(d_{1:(t-1)}) + X_{i,t}(d_{1:(t-1)}) \gamma +  \tau_t + \varepsilon_{i,t},
\end{aligned}   
$$  
where $\tau_t$ denotes fixed effects, pooling together effects estimated in different periods.  We then cluster standard errors at the individual level to allow for correlation over time.

\begin{figure}[!ht]
\centering
\includegraphics[scale=0.8, page = 2, 
trim={1cm 15cm 1cm 2.5cm},clip]{./alg_final.pdf} 
\end{figure}

\begin{rem}[Tuning parameters] \label{rem:tuning} Similarly to one-dimensional setting \citep{athey2018approximate}, Algorithm 1 requires choosing tuning parameters. A complete description is in Algorithm \ref{alg:algtuning} and uses a data-adaptive procedure (i.e., researchers do not need to specify the tuning parameters). In a nutshell, we choose $\delta_t(n,p) = \log^{3/2}(p_t n)/n^{1/2}$ (here $p_t$ is the dimension of covariates at time $t$) as prescribed by the theoretical analysis in Section \ref{sec:theory} and for simplicity we set $C_{n,t} = \log(n) n^{-2/3}$. To guarantee balance with many covariates, we iterate over a grid of values for $K_1$, and select the smallest constant within this grid such that the optimization program admits a feasible solution (in Algorithm 1 we also refine the algorithm to weight more the constraints for which the coefficients are non-zero). This approach minimizes the estimator's bias and, within the set of weighting estimators with the smallest bias, selects the one with the smallest variance. Below and in Appendix \ref{sec:num} we illustrate the benefits of this procedure through numerical studies. \qed 
 \end{rem}
 
 \begin{rem}[Computational complexity] Algorithm 1 is a sequence of $T$ quadratic programs with linear constraints. Its complexity scales only polynomially with $n, p$. Appendix Figure \ref{fig:long_T2} shows that the computational time is between a few seconds and a few minutes for $T \in \{1, \cdots, 10\}$ on a personal laptop (including choosing the tuning parameters).  \qed 
 \end{rem} 
 
 \subsection{Numerical studies} \label{sec:numerics}

Next, we collect results from numerical experiments.
 We estimate 
$
\mathbb{E}\Big[Y_{i,T}(1, \cdots, 1) - Y_{i,T}(0, \cdots,0)\Big], T \in \{2,3\}.  
$
We let the baseline covariates $X_{i,1}$ be drawn from as i.i.d. $\mathcal{N}(0, \Sigma)$ with $\Sigma^{(i,j)} = 0.5^{|i-j|}$.  
Covariates in the subsequent period are generated according to an auto-regressive model
$
\{X_{i,t}\}_j =0.5 \{X_{i,t-1}\}_j + \mathcal{N}(0,  1),  j=1,\cdots,p_t.
$
Treatments are drawn from a logistic model that depends on all previous treatments and past covariates:
$
D_{i,t} \sim \mbox{Bern}\Big((1 + e^{\iota_{i,t}})^{-1}\Big)
$
with
\begin{equation}  \label{eqn:propmodel}
\small 
\begin{aligned} 
\iota_{i,t} =   \eta  \sum_{s=1}^t X_{i,s} \phi+ \sum_{s=1}^{t-1} \delta_s (D_{i,s} - \bar{D}_s) +  \xi_{i,t}, \quad \bar{D}_s =n^{-1} \sum_{i=1}^n D_{i,s}
\end{aligned} 
\end{equation}
 and $\xi_{i,t} \sim \mathcal{N}(0,1)$, for $t \in \{1, 2,3\}$.
 Here, $\eta, \delta$ controls the association between covariates and treatment assignments. 
We consider values of $\eta \in \{0.1, 0.3, 0.5\}$, $\delta_1  = 0.5, \delta_2 = 0.25$.  
We let $\phi \propto 1/j$, with $\|\phi \|_2^2 = 1$, similarly to balancing conditions presented in \cite{athey2018approximate}.  
The larger $\eta$ corresponds to weaker overlap (see Table~\ref{tab:summaries} in the Appendix).

%\begin{table}[H] \centering 
%  \caption{Summary statistics of the distribution of the propensity score in two and three periods in a sparse setting with $\mathrm{dim}(X) = 300$.} 
%  \label{tab:summaries} 
%\begin{tabular}{@{\extracolsep{5pt}} cccccc} 
%\\[-1.8ex]\hline 
%\hline \\[-1.8ex] 
% & Min & 1st Quantile & Median & 3rd Quantile & Max \\ 
%\hline \\[-1.8ex] 
%$T = 2, \eta = 0.1$ & $0.012$ & $0.126$ & $0.218$ & $0.248$ & $0.342$ \\ 
%$T = 3, \eta = 0.1$ & $0.003$ & $0.049$ & $0.097$ & $0.126$ & $0.175$ \\ 
%$T = 2, \eta = 0.3$ & $0.004$ & $0.105$ & $0.216$ & $0.259$ & $0.377$ \\ 
%$T = 3, \eta = 0.3$ & $0.0002$ & $0.031$ & $0.097$ & $0.153$ & $0.226$ \\ 
%$T = 2, \eta = 0.5$ & $0.001$ & $0.079$ & $0.216$ & $0.277$ & $0.429$ \\ 
%$T = 3, \eta = 0.5$ & $0.00000$ & $0.018$ & $0.094$ & $0.183$ & $0.286$ \\ 
%\hline \\[-1.8ex] 
%\end{tabular} 
%\end{table} 

We generate the outcome as
$
 Y_{i,t}(d_{1:t}) = \sum_{s = 1}^t \Big(X_{i,s} \beta + \lambda_{s, t} Y_{i,s-1} + \tau d_s\Big) +  \varepsilon_{i,t}(d_{1:t}),  \quad t=1,2,3,
$
where  elements of $\varepsilon_{i,t}(d_{1:t})$ are i.i.d. $ \mathcal{N}(0,1)$ and $\lambda_{1,2} = 1, \lambda_{1,3}, \lambda_{2,3} = 0.5$. We consider three different settings: \textit{Sparse} with  $\beta^{(j)} \propto 1\{j \le 10\}$, \textit{Moderate}  with  moderately sparse    $\beta^{(j)} \propto 1/j^2$ and the \textit{Harmonic} setting with $\beta^{(j)} \propto 1/j$. We set $\| \beta \|_2 =1, \tau = 1$. 

We consider the following competing methodologies: 

\begin{itemize} 
\item \textit{Augmented IPW}, with \textit{known} propensity score and with \textit{estimated} propensity score. The method replaces the balancing weights in Equation \eqref{eqn:myestimator} with the (estimated or known) propensity score. Estimation of the propensity score is performed using a logistic regression (denoted as aIPWl) and a penalized logistic regression (denoted as aIPWh) \citep[e.g.][for a discussion]{nie2021learning, bodoryevaluating}. For both AIPW and IPW we consider stabilized inverse probability weights. 
\item   \textit{CAEW (MSM)}: Although our  balancing weights in Algorithm 1 are novel (with or without regression adjustment), we can compare to other balancing procedures. In particular, we consider Marginal Structural Model (MSM) with balancing weights computed using the method in   \cite{yiu2018covariate, yiu2020joint} that, different from ours, require information about the propensity score. We follow Section 3 in \cite{yiu2020joint} for its implementation. (We do not also compare to \citealt{imai2015robust} for MSM since it is not feasible in high-dimensions.)

\item
\textit{``Dynamic" Double Lasso}: it estimates the effect of each treatment assignment separately, after conditioning on the present covariate and past history for each period using the double lasso discussed in one period from \cite{belloni2014inference}.

 \item \textit{Naive Lasso}: it runs a regression controlling for covariates and treatment assignments. 
 
\item \textit{Sequential Estimation}: it estimates the conditional mean in each time period sequentially using the lasso method, and it predicts end-line potential outcomes as a function of the estimated potential outcomes in previous periods.
%, whereas here we consider a (correctly specified) potential outcome model linear in covariates. 

\item \textit{DiD switchback}: it is a DiD estimator similar to \cite{de2024difference}. 

\item \textit{Simple LP} (Local Projection): it projects $Y_T$ onto baseline covariates $X_{i,1}$ and treatment $D_{i,1}$ and take the coefficient multiplying $D_{i,1}$ as the estimated effect, while penalizing the coefficients for $X_{i,1}$ via Lasso.  
\end{itemize} 
For Dynamic Covariate Balancing, \textit{DCB}, the choice of tuning parameters is data adaptive, and it uses a grid-search method discussed in Appendix \ref{app:algorithms} and Remark \ref{rem:tuning}.  We estimate coefficients as in Algorithm 1 for DCB and (a)IPW, with a linear model in treatment assignments. Estimation of the penalty for the lasso methods is performed via cross-validation.

We consider $\mathrm{dim}(\beta) = \mathrm{dim}(\phi) = 100$ and set the sample size to be  $n = 400$. We set $p_1=101, p_2=203, p_3=305$ as number of covariates in each period.

In Table \ref{tab:mse2} we collect results for the average mean squared error for estimating the average treatment effect in two and three periods. 
 Throughout all simulations, the proposed method significantly outperforms any other competitor, with one single exception for $T = 2$, good overlap and harmonic design. It also outperforms using known propensity score, consistently with our findings in Theorem \ref{thm:overlap}, where we show that the propensity score is a feasible solution of DCB weights (and in the absence of knowledge of the propensity score).  

Finally, in Appendix \ref{sec:num} we consider more extensive simulation studies with a longer time horizon, a misspecified (non-linear) model, low and high dimensional settings among additional simulation designs.

  \definecolor{glaucous}{rgb}{0.38,0.51,0.71}

\begin{table*}[!ht]\centering
\caption{Mean Squared Error (MSE) for estimating the average treatment effect of always vs never being under treatment of Dynamic Covariate Balancing (DCB) across $200$ repetitions with sample size $400$ and $101$  variables in time period 1. This implies that the number of variables in time period $2$ and $3$ are $203$ and $304$. Oracle Estimator is denoted with aIPW$^*$ whereas aIPWh(l) denote AIPW with high(low)-dimensional estimated propensity. CAEW (MSM) corresponds to the method in \cite{yiu2020joint}, D.Lasso is adaptation of  Double Lasso \citep{belloni2014inference}.}    \label{tab:mse2} 
%\ra{1.0}
\scalebox{0.7}{\begin{tabular}{@{}lrrrcrrrcrrr@{}}\toprule
& \multicolumn{3}{c}{$\eta=0.1$} & \ & \multicolumn{3}{c}{$\eta=0.3$} & \ & \multicolumn{3}{c}{$\eta=0.5$} \\
\cmidrule{2-4} \cmidrule{6-8}  \cmidrule{10-12}  
&\small sparse&\small mod &\small harm && \small sparse&\small  mod  & \small harm && \small sparse& \small mod  & \small harm  \\\Xhline{.8pt} 
 \rowcolor{lightgray}\multicolumn{12}{c}{$T=2$} \\ \Xhline{.8pt} 
 \small aIPW$^*$  & $0.069$ & $0.092$ & $0.071$ &&$0.102$ & $0.104$ & $0.118$ &&0.131&$0.127$ & $0.132$\\
 \cline{2-4}  \cline{6-8}  \cline{10-12} 
 \rowcolor{glaucous!60}  \small DCB & $0.060$ & ${0.077}$ & $0.075$&& ${0.092}$ & ${0.076}$ & ${0.084}$&&0.099&0.077&0.085\\
\small aIPWh & ${0.064}$ & $0.091$ & ${0.070}$&&$0.180$ & $0.204$ & $0.218$ &&$0.265$ & $0.312$ & $0.368$\\
\small aIPWl & $0.260$ & $0.229$ & $0.212$ && $0.157$ & $0.201$ & $0.165$&& $0.214$ & $0.234$ & $0.213$  \\
\small IPWh & $2.37$ & $1.78$ & $2.80$&&$10.19$ & $6.49$ & $11.72$ &&$15.25$ & $8.09$ & $16.67$\\
\small Seq.Est. & $0.932$ & $1.333$ & $0.692$ && $1.388$ & $1.787$ & $1.152$&& $1.759$ & $1.795$ & $1.664$  \\
\small  Lasso & $0.247$ & $0.410$ & $0.132$ && $0.509$ & $0.710$ & $0.298$ && $0.762$ & $0.948$ & $0.560$ \\ 
\small CAEW & $0.432$ & $0.444$ & $0.517$ &&$1.934$ & $1.274$ & $1.974$&& $3.376$ & $2.168$ & $4.423$  \\
\small Dyn.D.Lasso & $0.124$ & $0.118$ & $0.256$ && $0.208$ & $0.147$ & $0.430$&& $0.218$ & $0.153$ & $0.554$ \\
\small DiD Switchback & $2.06$ & $1.71$ & $1.60$  && $14.52$ & $6.98$ & $20.72$ && $38.79$ & $6.98$ & $52.48$ \\
\small Simple LP & $1.28$ & $1.40$ & $1.37$  && $1.613$ & $1.682$ & $1.573$ && $1.777$ & $1.689$ & $1.808$ \\
\Xhline{.8pt}  \rowcolor{lightgray}\multicolumn{12}{c}{$T=3$} \\ \Xhline{.8pt} 
  \small aIPW$^*$  & $0.226$ & $0.296$ & $0.261$&&  $0.403$ & $0.251$ & $0.339$&& $0.472$ & $0.496$ & $0.562$\\
 \cline{2-4}  \cline{6-8}  \cline{10-12} 
  \rowcolor{glaucous!60}  \small DCB & ${0.155}$ & ${0.208}$ & ${0.199}$  &&  ${0.257}$ & ${0.217}$ & ${0.329}$  && $0.294$ & $0.267$ & $ 0.455$\\
\small aIPWh  & $0.201$ & $0.273$ & $0.280$&& $0.595$ & $0.747$ & $0.835$&& $0.999$ & $1.328$ & $1.607$  \\
\small aIPWl & $0.823$ & $0.625$ & $0.829$ &&$0.623$ & $0.704$ & $0.638$&& $1.078$ & $1.396$ & $1.234$ \\
\small IPWh &  $11.03$ & $8.09$ & $12.84$&&$34.65$ & $20.34$ & $39.37$ &&$47.65$ & $23.30$ & $45.47$\\
\small Seq.Est. & $2.608$ & $4.016$ & $2.316$  && $3.722$ & $5.269$ & $3.818$&& $5.279$ & $6.829$ & $5.467$  \\
\small   Lasso & $0.409$ & $0.492$ & $0.514$ && $0.559$ & $0.732$ & $0.507$ && $1.290$ & $1.315$ & $1.174$ \\
\small CAEW  & $3.580$ & $2.446$ & $4.279$  && $18.50$ & $12.07$ & $22.85$ && $30.07$ & $18.71$ & $33.01$  \\
\small Dyn.D.Lasso & $0.471$ & $0.344$ & $0.679$  && $0.694$ & $0.378$ & $1.182$ && $0.964$ & $0.383$ & $1.594$ \\
\small DiD Switchback & $24.9$ & $27.98$ & $20.52$  && $21.07$ & $7.503$ & $38.33$ && $59.23$ & $22.15$ & $89.07$ \\
\small Simple LP & $7.38$ & $7.54$ & $7.38$  && $8.180$ & $8.154$ & $8.294$ && $9.131$ & $9.087$ & $9.217$ \\
\bottomrule
\end{tabular} }
\end{table*}

 \subsection{Empirical illustration} 
 
In this section, we present an empirical application for studying the effect of democracy on GDP growth using data from \cite{acemoglu2019democracy}. \cite{acemoglu2019democracy} studied dynamic treatment effects of democracy under GDP growth under sequential ignorability  \citep[Assumption 1 in][]{acemoglu2019democracy}.  Figure \ref{fig:treatment_path2} illustrates the dynamics of treatments. Many units switch treatment over time, violating standard event studies designs.

 The data (available at \url{https://www.journals.uchicago.edu/doi/suppl/10.1086/700936}) consist of a collection of countries observed between $1960$ and $2010$. We consider observations starting from $1989$. After removing missing values, we run regressions with 141 countries. The outcome is the log-GDP in the country $i$ in period $t$ as in \cite{acemoglu2019democracy}. 
We use the same treatment specification as in \cite{acemoglu2019democracy}, which is binary. We study the effect of exposing countries at time $t$ to democracy for in $s$ years before (and including) $t$ versus not exposing them to democracy for the previous $s$ years. Namely, the estimand is the $s$-long run effect of democracy, after averaging over past assignments.%
  We let $s \in \{1, \cdots, 20\}$ to study the impact from one to twenty years of democracy.

For each country, we condition on lag outcomes in the past four years as in the preferred specification of \cite{acemoglu2019democracy}, and past four treatments. We consider a pooled regression and two alternative specifications. The first is parsimonious and includes dummies for different regions (continents) and different intercepts for different periods. The second one controls for the past four outcomes, past four treatments, for the geographical region, and colonial history as in  \cite{acemoglu2019democracy}. 
Coefficients are estimated as in Algorithm 2 with {\tt model} $=$ {\tt linear}.

\definecolor{aero}{rgb}{0.49,0.73,0.91}
\definecolor{airsuperiorityblue}{rgb}{0.45,0.63,0.76}
\definecolor{babyblueeyes}{rgb}{0.63,0.79,0.95}
\definecolor{beaublue}{rgb}{0.74,0.83,0.9}
\definecolor{glaucous}{rgb}{0.38,0.51,0.71}

\begin{figure}[!ht]
\centering 
\includegraphics[scale=0.7]{./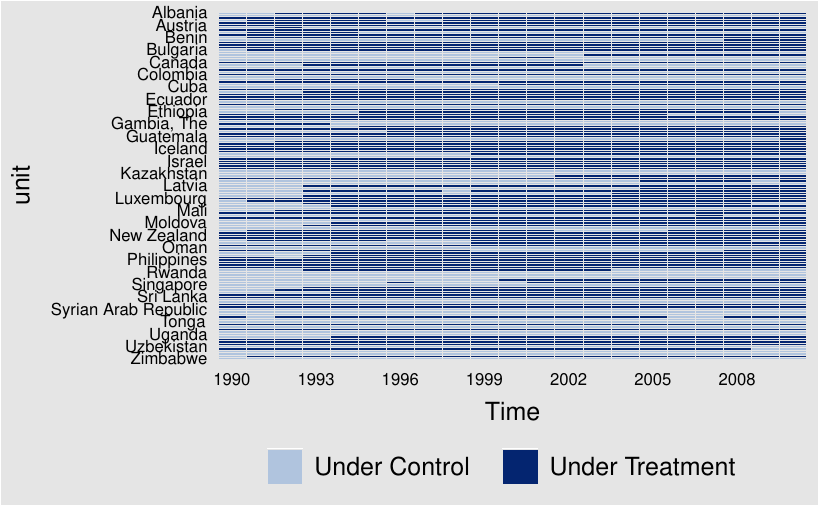}

\caption{The figure illustrates the dynamics of treatments. } 
\label{fig:treatment_path2} 
\end{figure}

\begin{figure}[!ht] 
\centering 
\includegraphics[scale=0.35]{./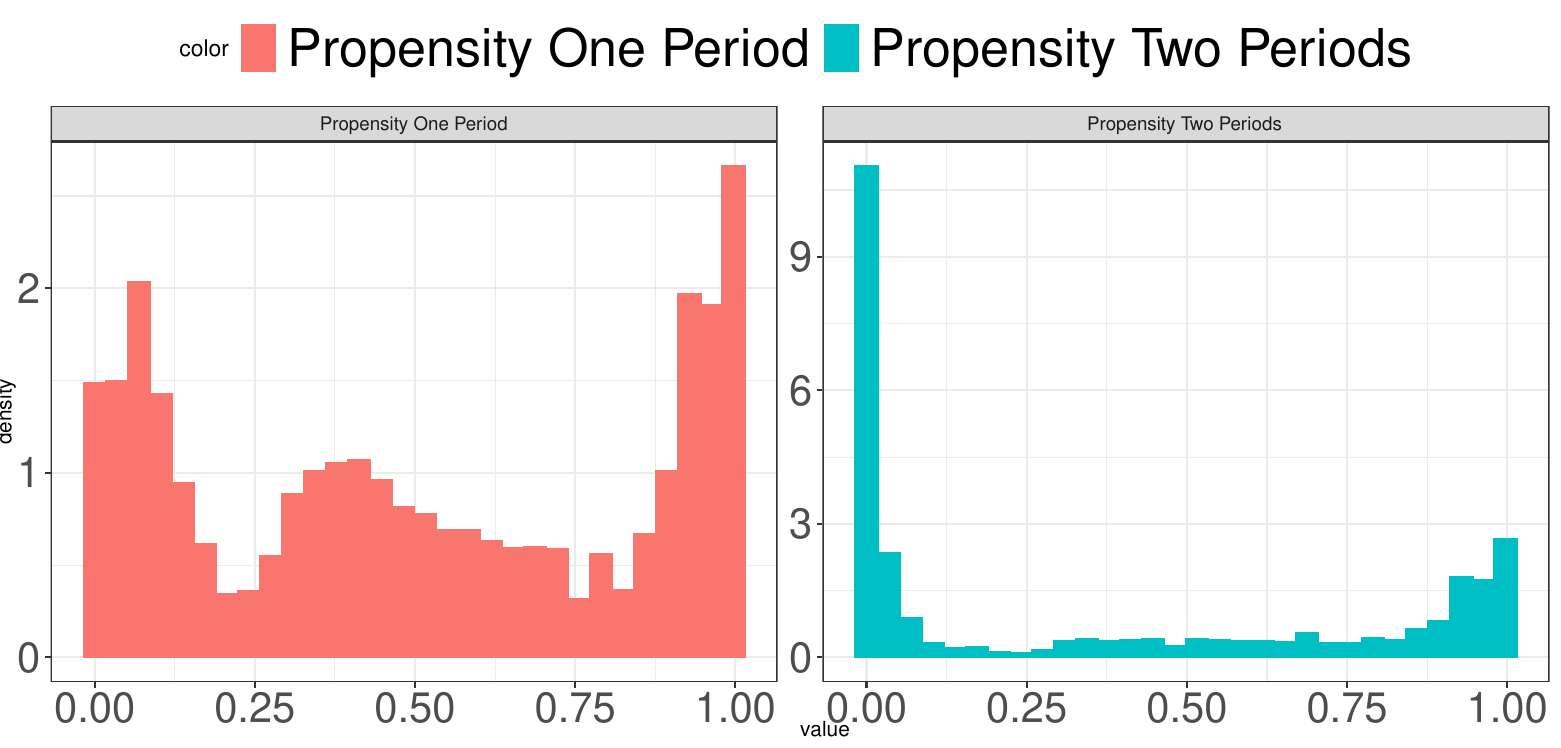} 
\caption{Estimated probability of treatment for one year (left-panel) and two consecutive years (right-panel). Estimation is performed via logistic regression with a pooled regression with year, region fixed effects and four lagged outcomes. The right panel also controls for the past treatment assignment. The figure illustrates the sensitivity of inverse probability weights to longer time horizon, motivating more stable balancing weights.} \label{fig:overlap}
\end{figure}

\begin{table}[!htbp]
\centering
\caption{Diagnostics across horizons with time horizon $h= 3$: inverse of weights dispersion for DCB weights and $1/||w_t||^2$ for IPW where $w_t$ is the stabilized inverse probability weight. Estimation of IPW is performed via logistic regression with a pooled regression with year, region fixed effects and four lagged outcomes. Larger number indicates better precision.}
\label{tab:diagnostics_h}
\begin{tabular}{lccc|ccc}
\toprule
$h = 3$ & \multicolumn{3}{c|}{$\mathbb{E}[Y_T(\mathbf{1}_3, D_{T-3:1})]$} & \multicolumn{3}{c}{$\mathbb{E}[Y_T(\mathbf{0}_3, D_{(T-3):1}))]$} \\
\cmidrule(lr){2-4}\cmidrule(lr){5-7}
& $1/||\hat{\gamma}_1||^2$ & $1/||\hat{\gamma}_2||^2$ & $1/||\hat{\gamma}_3||^2$ & $1/||w_1||^2$ & $1/||w_2||^2$ & $1/||w_3||^2$ \\
\midrule
DCB & 67 & 64 & 62 & 34 & 34 & 34 \\
IPW & 73 & 17 & 10  & 32 & 23 & 23 \\
\bottomrule
\end{tabular}
\end{table}

Figure \ref{fig:acemoglu} collects our results. Democracy has a statistically insignificant effect over the first few years and a statistically significant positive impact on long-run GDP growth after three years. Point estimates are in sign and magnitude consistent with what found by \cite{acemoglu2019democracy}, and results are robust across the two specifications for DCB. 
 
 We compare our method to (i) the \textit{linear estimator} reported by \cite{acemoglu2019democracy} (Table 2, Column 3), where dynamic effects are estimated by propagating the effect over past outcomes at each period (we consider two specifications, with and without unit fixed effects -- both report similar results); (ii) the \textit{simple local projection}, that projects the outcome on the treatment and the past outcome $s$ periods before, with and without country fixed effects, time fixed effects and controlling for lagged outcome at time $s$.

 The simple local projection approach reports small point estimates compared to other methods. This result is consistent with our theoretical discussion: local projections average over the distribution of \textit{future} assignments. Therefore, the causal effects estimated by the local projection differ from the target long-run effect, which instead \textit{fixes} future treatment assignments. 
 The effect estimated as in \cite{acemoglu2019democracy} is larger than the local projection when including country fixed effects, but significantly smaller than the effect estimated through DCB. Therefore, the specification in \cite{acemoglu2019democracy} may capture some but not all the long-run effects. After controlling for imbalance with DCB, average treatment effects are twice as large. The results from \cite{acemoglu2019democracy} with and without unit fixed effects report almost identical results. 
%Finally, AIPW presents much larger point estimates for long-time horizons (for a twenty-year horizon, AIPW estimates a $50\%$ increase as opposed to a $30\%$ increase reported by DCB). AIPW and DCB differ due to propensity score misspecification or instability for longer time horizons. 

To investigate differences with (A)IPW methods, the right panel in Figure \ref{fig:acemoglu} presents comparisons in terms of the imbalance over the lagged outcome at time $t - 1$ when using balancing or inverse probability weights. As \cite{acemoglu2019democracy} note, the lags outcome may capture most variation in treatment. Therefore, an imbalance in lagged GDP may suggest the presence of bias. We report the relative improvement in absolute imbalance (average across the potential outcomes under treatment and control) and observe substantial gain over using inverse probability weights. Such gains illustrate the advantage of balancing in small sample.

Figure \ref{fig:overlap} complements Figure \ref{fig:acemoglu} showing instability of inverse probability weights, and Figure \ref{fig:dispersion} in the Appendix show that DCB weights present less dispersion than IPW weights. As a helpful diagnostic, in Table \ref{tab:diagnostics_h} we report $1/(||\hat{\gamma}_t||^2)$ for the DCB method as well as for IPW weights, where $\hat{\gamma}_t$ is replaced by the AIPW weight. This measure is indicative of the level of precision and effective sample size (a larger number indicates better precision). We find substantial improvements of DCB over IPW especially for weights estimated for longer time horizons.

 \begin{figure}[!ht]
 \centering
 \includegraphics[scale=0.35]{./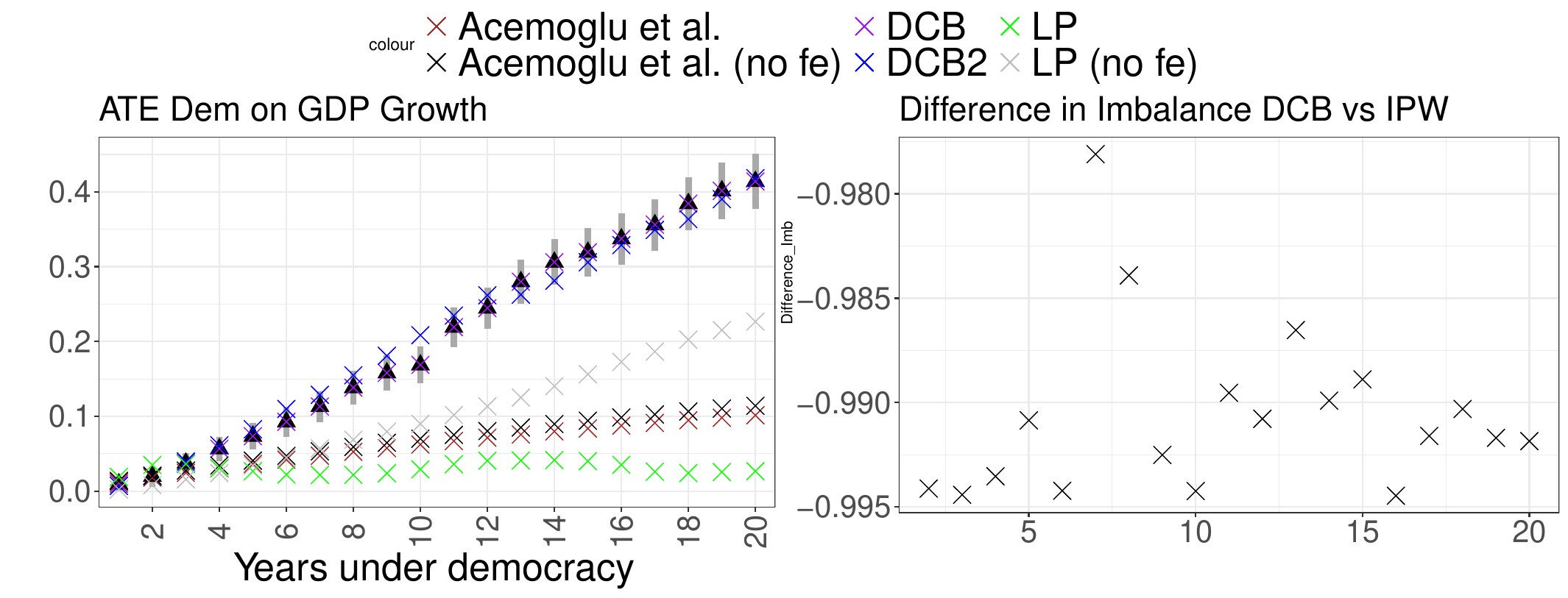}
 \caption{Left-hand side: pooled regression from $t \in \{1989, \cdots, 2010\}$. Gray region denotes the $90\%$ confidence band for the least parsimonious model. DCB and DCB2 refer to two separate specification, with DCB corresponding to the more parsimonious one. LP denotes a local projection on $t$ periods before. (no fe) indicates a specification without country fixed effects. Right-hand side reports $\log(|I(dcb)| + 1)/\log(|I(aipw)| + 1) - 1 \approx |I(dcb)|/|I(aipw)| - 1$, where $I(\cdot)$ denotes the imbalance (as in Lemma \ref{lem:balancing1}) in the lagged outcome usin DCB or IPW weights.  } \label{fig:acemoglu} 
 \end{figure}
 
 \appendix 
 
 \newpage 
\begin{center}
\vspace{-1.8cm}{\Large \textbf{Appendix}} 
\bigskip
\end{center}

\renewcommand\thefigure{\thesection.\arabic{figure}} 
\setcounter{figure}{0}   
\renewcommand\thetable{\thesection.\arabic{table}} 
\setcounter{table}{0} 
\setcounter{page}{0}
\startcontents[appendix]
\printcontents[appendix]{}{1}{\setcounter{tocdepth}{3}}
\numberwithin{equation}{section}
\numberwithin{table}{section}
\numberwithin{figure}{section}
\newpage

\section{Extended discussion for dynamic balancing}  \label{sec:extended_discussion}

In this section we provide an extended discussion of the method. 

\subsection{Estimation of the coefficients} 

% We discuss the relaxations of the conditions discussed above in the following remark. 
%A particular case where the above model holds is also when $X_{i,2}, X_{i,1}$ are jointly Gaussians.  However, note that the model includes possible interactions between past outcomes and covariates and any interactions with the treatment.

%\begin{rem}[Approximate linearity] All our results hold if we relax Assumption \ref{ass:linearity}  to be of the form 
%$$
%\mu_1(x_1, d_1, d_2) = x_1^\top\beta_{d_1, d_2}^{(1)} + O_p(r_{p_1}), \quad \mu_2(x_1, x_2, y_, d_1, d_2) = (d_1, x_1, x_2, y_1)^\top \beta_{d_1, d_2}^{(2)} + O_p(r_{p_2}). 
%$$
%Here, we let $\sqrt{n} r_{p_1}, \sqrt{n} r_{p_2} \rightarrow 0$ as $n, p_1, p_2 \rightarrow \infty$; hence non-linear models are permissible, whenever such models can be ``well-approximated'' by linear functions, in the same spirit of \cite{jorda2005estimation}. \qed 
%\end{rem} 

As a first step, we estimate the regression coefficients with Algorithm \ref{alg:coefficients1} for $T = 2$ (and recursively as in Algorithm 2 in the main text for general $T$). Recall, that from Lemma \ref{lem:identification_model1}, we have 
$$
\small 
\begin{aligned} 
& \mathbb{E}\Big[Y_{i,2} \Big| H_{i,2}, D_{i,2} = d_2, D_{i,1} = d_1\Big] = \mathbb{E}\Big[Y_{i,2}(d_1, d_2) \Big| H_{i,2}, D_{i,1} = d_1\Big] = 
 H_{i,2}(d_1) \beta_{d_1, d_2}^{(2)}  \\ 
& \mathbb{E}\Big[\mathbb{E}\Big[Y_{i,2} \Big| H_{i,2}, D_{i,2} = d_2, D_{i,1} = d_1\Big] \Big| X_{i,1}, D_{i,1} = d_1\Big]  =
 \mathbb{E}\Big[Y_{i,2}(d_1, d_2) \Big| X_{i,1} \Big] = X_{i,1} \beta_{d_1, d_2}^{(1)}. 
\end{aligned} 
$$ 
Intuitively, identification of the linear model requires recursive projections of each conditional expectation (controlling for the treatment sequence of interest) onto past information. Therefore, 
 the algorithm recursively projects the estimated conditional mean functions on past histories. The penalization is by using lasso for estimation \citep{hastie2015statistical}. The algorithm considers two separate model specifications. The first allows for arbitrary heterogeneity in observable characteristics. This specification is cumbersome for longer time horizons because the effective sample size shrinks exponentially with the number of periods. The second specification assumes separable treatment effects, and it is more parsimonious. It is possible to model heterogeneity in treatment effects in the second specification by including interaction terms between observable characteristics and treatments. We require that the parameters of the estimated model converge to the true parameters of the linear model for each regression at a rate of order $O(n^{-1/4})$. This condition is typically attained for lasso and discussed in detail in  Section \ref{sec:theory} (Assumption \ref{ass:4m} (i) and discussion therein). Therefore, given the convergence rate requirement, a more parsimonious model such as the linear model in Algorithm \ref{alg:coefficients1} imposes stronger restrictions on the estimation error of the parameters.

\begin{algorithm} [h]   \caption{Recursive local projection for $t = 2$}\label{alg:coefficients1}
    \begin{algorithmic}[1]
    \Require Observations, history $d_{1:2} = (d_1, d_2)$, {\tt model} $\in$ {\tt \{full interactions, linear\}}. 
    \If{{\tt model} $=$ {\tt full interactions}}
    \State Let $\hat{\beta}_{d_{1:2}}^{(2)} $ the coefficient of the regression of $Y_{i,2}$ onto $H_{i,2}$ for all $i: (D_{i, 1:2} = d_{1:2})$ (dropping collinear columns); 
     \State Let $\hat{\beta}_{d_{1:2}}^{(1)} $ the coefficient of the regression of $H_{i,2} \hat{\beta}_{d_{1:2}}^{(2)}$ onto $X_{i,1}$ for $i$  that has  $D_{i,1} = d_{1}$ (dropping collinear columns). 
    \Else 
    \State Let $\tilde{\beta}^{(2)} $ the coefficient of the regression of $Y_{i,2}$ onto $(H_{i,2}, D_{i,2}, D_{i,1})$ for all $i$ (without penalizing $(D_{i,1}, D_{i,2})$) and create fitted values $H_{i,2} \hat{\beta}_{D_{i,1}, d_2}^{(2)} := (H_{i,2}, d_2, D_{i,1}) \tilde{\beta}^{(2)}$; 
     \State Let $\tilde{\beta}_{d_2}^{(1)} $ the coefficient of the regression of $(H_{i,2}, d_2, D_{i,1}) \hat{\beta}^{(2)}$ onto $(X_{i,1}, D_{i,1})$ for all  $i$ (without penalizing $D_{i,1}$) and create fitted values $X_{i,1} \hat{\beta}_{d_1, d_2}^{(1)} := (X_{i,1}, d_1)\tilde{\beta}_{d_{2}}^{(1)}$ for all $i$. 
    \EndIf  
    \State  
    \Return Predictions $\{X_{i,1} \hat{\beta}_{d_1, d_2}^{(1)}\}_{i=1}^n, \{H_{i,2} \hat{\beta}_{d_1, d_2}^{(2)}\}_{i: D_{i,1} = d_1}$
         \end{algorithmic}
\end{algorithm}

%Note that the algorithm for the linear (second) specification builds predictions in the second period only for those units with $D_{i,1} = d_1$, and for all units in the first period. This is without loss of generality, since the remaining units receive a zero weight. 

\begin{rem}[Estimation error of the coefficients in $T$ periods] \label{rem:ee} The estimation error $||\hat{\beta}_{d_{1:T}}^{(t)} - \beta_{d_{1:T}}^{(t)}||_1$ in $T$ periods depends on modeling assumptions. For the fully interacted model, $||\hat{\beta}_{d_{1:T}}^{(t)} - \beta_{d_{1:T}}^{(t)}||_1$ scales exponentially with $T$ since this model requires running regressions over the subsample with treatment histories $D_{1:t} = d_{1:t}$. Additional assumptions permit to estimate $\hat{\beta}_{d_{1:T}}^{(t)}$ using more information from the sample. A simple example is to explicitly model the effect of the treatment history $d_{1:T}$ on the outcome as in the linear model, or to assume that carry-over effects of treatments on future outcomes at time $t$ occur over a finite number of $h < t$ periods, as, for example, in \cite{imai2018matching}. \qed 
\end{rem}

\subsection{Dynamic estimator: illustration with IPW weights }

 Because linearity may only hold as we control for high-dimensional covariates, we cannot directly use our predictions $H_2 \hat{\beta}^{(2)}$ or $X_1 \hat{\beta}^{(1)}$ for valid causal inference. We instead must guarantee a vanishing high-dimensional bias through reweighting. 
 
Given the estimated coefficients $\hat{\beta}^{(1)}, \hat{\beta}^{(2)}$, and following previous literature on doubly-robust scores \citep{tchetgen2012semiparametric, zhang2013robust, jiang2015doubly, nie2021learning}, we propose an estimator that exploits linearity while reweighting observations to guarantee balance. Formally, we consider an estimator 
\begin{equation} \label{eqn:myestimatorb}
\small 
\begin{aligned} 
\hat{\mu}_2(d_1, d_2; \hat{\gamma}_1, \hat{\gamma}_2) &= \sum_{i=1}^n  \Big\{ \hat{\gamma}_{i,2}(d_1, d_2) Y_{i,2} -  \Big(\hat{\gamma}_{i,2}(d_1, d_2) - \hat{\gamma}_{i,1}(d_1, d_2)\Big)  H_{i,2} \hat{\beta}_{d_1, d_2}^{(2)}\Big\} \\& -  \sum_{i=1}^n \Big(\hat{\gamma}_{i,1}(d_1, d_2) - \frac{1}{n} \Big)  X_{i,1} \hat{\beta}_{d_1, d_2}^{(1)}. 
\end{aligned} 
\end{equation}
The estimator in \eqref{eqn:myestimator} uses regression adjustments over each period, and reweight observations by weights $\hat{\gamma}_1, \hat{\gamma}_2$ (inputs of the estimator). The construction of such estimator leverages properties of influence functions \citep{tchetgen2012semiparametric}.   We will omit the arguments $(\hat{\gamma}_1, \hat{\gamma}_2)$ in $\hat{\mu}_2$ whenever clear from the context.

A first choice of the weights are inverse probability weights (IPW). As for multi-valued treatments \citep{imbens2000role}, these weights for the first and second period can be written as 
\begin{equation} \label{eqn:ipw} 
\frac{1\{D_{i,1} = d_1\}}{n P(D_{i,1}  = d_1 | X_{i,1})}, \quad \frac{1\{D_{i,1} = d_1\}}{n P(D_{i,1}  = d_1 | X_{i,1})} \times  \frac{1\{D_{i,2} = d_2\}}{P(D_{i,2} = d_2 | Y_{i,1}, X_{i,1}, X_{i,2}, D_{i,1})}. 
\end{equation}

However, in high dimensions, IPW weights require the correct specification of the propensity score, which in practice may be unknown. In particular, the conditions of the AIPW are that \citep[e.g.][]{bradic2021high}
$$
||\hat{e}^{(t)} - e^{(t)}||_2 = o_p(n^{-1/4}), \quad  || \hat{\beta}^{(t)} - \beta^{(t)}||_2 = o_p(n^{1/4})
$$ 
where $e^{(t)}$ denotes the propensity score in each period $t$. These conditions therefore, in high dimensions impose restrictions on the propensity score. Our goal is to find a set of weights so that we can achieve consistent estimation even without $o_p(n^{-1/4})$ estimation of the propensity score, and only require $ || \hat{\beta}^{(t)} - \beta^{(t)}||_1 = O_p(n^{-1/4})$. 

In addition, we would like weights that are stable in finite sample, by minimizing their norm.

\subsection{Dynamic covariate balancing}

  Motivated by these considerations, we leverage the local projection model and propose replacing IPW with more stable weights.

We start studying covariate balancing conditions induced by the local projection model. By denoting $\bar{X}_1$ the sample average of covariates $X_1$, 
we can write 
  \begin{equation} \label{eq:decomposition}
  \hat{\mu}_2(d_1, d_2)  =  \bar{X}_1\beta_{d_1, d_2}^{(1)}+ T_1 + T_2 +T_3, 
\end{equation} 
where 
\begin{equation} 
\small 
\begin{aligned} 
T_1&=  \Big( \hat{\gamma}_1(d_1, d_2)^\top X_1 - \bar{X}_{1}\Big)  (\beta_{d_1, d_2}^{(1)} - \hat{\beta}_{d_1, d_2}^{(1)})
+ \Big(\hat{\gamma}_2(d_1, d_2)^\top H_2 - \hat{\gamma}_1(d_1, d_2)^\top H_2\Big)   (\beta_{d_1, d_2}^{(2)} - \hat{\beta}_{d_1, d_2}^{(2)})
\end{aligned} 
\end{equation} 
and
\begin{align*} 
T_2 &=  \hat{\gamma}_2(d_1, d_2)^\top \Big[Y_2 - H_2 \beta_{d_1,d_2}^2\Big] , \quad T_3=   \hat{\gamma}_1(d_1, d_2)^\top \Big[H_2 \beta_{d_1,d_2}^2 -  X_1 \beta_{d_1, d_2}^{(1)} \Big]. 
\end{align*}

  The estimation error enters through $T_1$. We bound this as follows. 
  
 \begin{lem}[Covariate balancing conditions] \label{lem:balancing1} The following holds
\begin{align*} \label{eqn:eqbalance}
T_1 \le&
  \underbrace{\| \hat{\beta}_{d_1,d_2}^1 - \beta_{d_1, d_2}^{(1)}\| _1 \Big| \Big| \bar{X}_1 - \hat{\gamma}_1(d_1, d_2)^\top X_1 \Big| \Big|_{\infty}}_{(i)} +
\nonumber \\
 &  \underbrace{\| \hat{\beta}_{d_1,d_2}^2 - \beta_{d_1, d_2}^{(2)}\| _1 \Big| \Big| \hat{\gamma}_2(d_1, d_2)^\top H_2 - \hat{\gamma}_1(d_1, d_2)^\top H_2 \Big| \Big|_{\infty}}_{(ii)}.
\end{align*} 
\end{lem}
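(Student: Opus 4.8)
The plan is to observe that $T_1$, as defined just above the statement, is nothing more than a sum of two scalar (inner) products, and to bound each one by H\"older's inequality. Writing the definition out,
\[
T_1 = \Big(\hat{\gamma}_1(d_1,d_2)^\top X_1 - \bar{X}_1\Big)\big(\beta_{d_1,d_2}^1 - \hat{\beta}_{d_1,d_2}^1\big) + \Big(\hat{\gamma}_2(d_1,d_2)^\top H_2 - \hat{\gamma}_1(d_1,d_2)^\top H_2\Big)\big(\beta_{d_1,d_2}^2 - \hat{\beta}_{d_1,d_2}^2\big),
\]
each summand pairs a vector of covariate imbalances with a vector of coefficient estimation errors: the $p_1$-dimensional imbalance $\hat{\gamma}_1(d_1,d_2)^\top X_1 - \bar{X}_1$ with the error $\beta_{d_1,d_2}^1 - \hat{\beta}_{d_1,d_2}^1$, and the $p_2$-dimensional imbalance $\hat{\gamma}_2(d_1,d_2)^\top H_2 - \hat{\gamma}_1(d_1,d_2)^\top H_2$ with the error $\beta_{d_1,d_2}^2 - \hat{\beta}_{d_1,d_2}^2$.

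First I would bound $T_1 \le |T_1|$ and apply the triangle inequality across the two summands. To each summand I would then apply the elementary H\"older bound $|a^\top b| \le \|a\|_\infty \|b\|_1$. For the first term, taking $a = \hat{\gamma}_1(d_1,d_2)^\top X_1 - \bar{X}_1$ and $b = \beta_{d_1,d_2}^1 - \hat{\beta}_{d_1,d_2}^1$ gives precisely term $(i)$, after using the norm symmetries $\|\hat{\gamma}_1(d_1,d_2)^\top X_1 - \bar{X}_1\|_\infty = \|\bar{X}_1 - \hat{\gamma}_1(d_1,d_2)^\top X_1\|_\infty$ and $\|\beta_{d_1,d_2}^1 - \hat{\beta}_{d_1,d_2}^1\|_1 = \|\hat{\beta}_{d_1,d_2}^1 - \beta_{d_1,d_2}^1\|_1$. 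Applying the same inequality to the second term with $a = \hat{\gamma}_2(d_1,d_2)^\top H_2 - \hat{\gamma}_1(d_1,d_2)^\top H_2$ and $b = \beta_{d_1,d_2}^2 - \hat{\beta}_{d_1,d_2}^2$ yields term $(ii)$. Summing the two bounds closes the argument.

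I do not expect a genuine obstacle here: the substantive content is really the decomposition \eqref{eq:decomposition} that isolates $T_1$ as the single term carrying both an imbalance factor and an estimation-error factor, and once that is in hand the claimed bound is a one-line H\"older estimate. The only points deserving a moment's care are checking that the dimensions align so that the imbalance in the baseline covariates pairs with the $\beta^1$ error while the imbalance in $H_2$ pairs with the $\beta^2$ error, and noting that bounding $T_1$ (rather than $|T_1|$) by $(i)+(ii)$ is legitimate precisely because the right-hand side is a sum of nonnegative products of norms.
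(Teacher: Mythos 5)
Your proposal is correct and matches the paper's own treatment: the paper does not give a separate proof of this lemma but obtains the identical bound on the term $(I_1)$ in the general $T$-period decomposition (Lemma \ref{lem:lemmam1}) via exactly the same H\"older step $|a^\top b| \le \|a\|_\infty \|b\|_1$ applied summand by summand in the proof of Theorem \ref{thm:thm_asym_t}. Nothing is missing.
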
 
 Element $(i)$ is equivalent to what is discussed in \cite{athey2018approximate} in one period setting. Element $(ii)$ depends on the additional error induced by dynamics in the second period. 
 % Intuitively, weights in the first period must guarantee balance of covariates in the first period across different groups. Weights in the second period must instead guarantee balance in the entire history in the second period $H_2$ conditional on the weights in the first period.
 Therefore  the above suggests controlling the following norms 
 \begin{equation} \label{eqn:bbb} 
 \Big| \Big| \bar{X}_1 - \hat{\gamma}_1(d_1, d_2)^\top X_1 \Big| \Big|_{\infty}, \quad   \Big| \Big| \hat{\gamma}_2(d_1, d_2)^\top H_2 - \hat{\gamma}_1(d_1, d_2)^\top H_2 \Big| \Big|_{\infty} .
 \end{equation} 
 By imposing that the first norm converges to zero, the weights in the first-period balance covariates in the first period only. The second condition requires that histories in the second period are balanced, \textit{given} the weights in the previous period. 

 %The moment conditions discussed above differ from \cite{imai2015robust} since, in our setting, the number of balancing conditions grows only linearly with the number of periods. The balancing condition also differs from \cite{yiu2018covariate}, since the condition is agnostic of the propensity score function. Finally, differently from \cite{kallus2018optimal} we do not impose  worst-case  balancing  but rather aim at providing balance across each time point sequentially. 

The remaining terms in \eqref{eq:decomposition} are mean zero under the following conditions. 

\begin{lem}[Balancing error] \label{lem:residual} Let assumptions \ref{ass:noant} - \ref{ass:linearity} hold. Suppose that $\hat{\gamma}_1$ is measurable with respect to the sigma algebra $\sigma(X_1, D_1)$ and $\hat{\gamma}_2$ is measurable with respect to the sigma algebra $\sigma(X_1, X_2, Y_1, D_1, D_2)$. Suppose in addition that $\hat{\gamma}_{i,1}(d_1, d_2) = 0$ if $D_{i,1} \neq d_1$ and $\hat{\gamma}_{i,2}(d_1, d_2) = 0$ if $(D_{i,1}, D_{i,2}) \neq (d_1, d_2)$. Then 
$$ 
\mathbb{E}\Big[T_2 \Big| X_1, D_1, Y_1, X_2, D_2\Big] = 0, \quad \mathbb{E}\Big[T_3 \Big| X_1, D_1\Big] = 0.   
$$ 
\end{lem}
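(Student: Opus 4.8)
The plan is to reduce each vector statement to its unit-wise summands and verify that every summand is conditionally mean zero; linearity of $T_2$ and $T_3$ in the weights then yields the claim after summing over $i$. Concretely, it suffices to prove
\[
\mathbb{E}\Big[\hat\gamma_{i,2}(d_1,d_2)\big(Y_{i,2}-H_{i,2}\beta_{d_1,d_2}^2\big)\,\Big|\,X_1,D_1,Y_1,X_2,D_2\Big]=0,
\]
\[
\mathbb{E}\Big[\hat\gamma_{i,1}(d_1,d_2)\big(H_{i,2}\beta_{d_1,d_2}^2-X_{i,1}\beta_{d_1,d_2}^1\big)\,\Big|\,X_1,D_1\Big]=0
\]
for every $i$. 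The two recurring tools are (a) the measurability hypotheses, which let me treat the weights (and any quantity measurable with respect to the conditioning $\sigma$-algebra) as constants and pull them outside the conditional expectation, and (b) the support restrictions, which let me insert the indicator of the relevant treatment path at no cost, since the weight already vanishes off that event.

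For $T_2$ I would argue as follows. Because $\hat\gamma_{i,2}(d_1,d_2)$ is $\sigma(X_1,X_2,Y_1,D_1,D_2)$-measurable and $H_{i,2}=[D_{i,1},X_{i,1},X_{i,2},Y_{i,1}]$ is itself a function of the conditioning variables, both factor out, leaving $\hat\gamma_{i,2}(d_1,d_2)\big(\mathbb{E}[Y_{i,2}\mid X_1,D_1,Y_1,X_2,D_2]-H_{i,2}\beta_{d_1,d_2}^2\big)$. By the i.i.d.\ assumption the conditional expectation collapses to $\mathbb{E}[Y_{i,2}\mid H_{i,2},D_{i,2}]$, since unit $i$'s outcome is independent of the other units' data given its own history. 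I then multiply by $\mathbf{1}\{(D_{i,1},D_{i,2})=(d_1,d_2)\}$, which is harmless because $\hat\gamma_{i,2}(d_1,d_2)=0$ off this event; on the event the first line of Lemma \ref{lem:identification_model1} gives $\mathbb{E}[Y_{i,2}\mid H_{i,2},D_{i,2}=d_2,D_{i,1}=d_1]=H_{i,2}\beta_{d_1,d_2}^2$, so the bracket vanishes identically and the summand is zero.

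The term $T_3$ is where the real work lies, and I expect it to be the main obstacle, because $H_{i,2}$ is \emph{not} measurable with respect to the coarser conditioning $\sigma$-algebra $\sigma(X_1,D_1)$ (it involves $X_{i,2}$ and $Y_{i,1}$), so it cannot simply be pulled out. Pulling out the $\sigma(X_1,D_1)$-measurable weight $\hat\gamma_{i,1}(d_1,d_2)$ and the baseline term $X_{i,1}\beta_{d_1,d_2}^1$, and reducing to single-unit conditioning by independence, leaves $\hat\gamma_{i,1}(d_1,d_2)\big(\mathbb{E}[H_{i,2}\beta_{d_1,d_2}^2\mid X_{i,1},D_{i,1}]-X_{i,1}\beta_{d_1,d_2}^1\big)$. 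Inserting $\mathbf{1}\{D_{i,1}=d_1\}$ (again free, since $\hat\gamma_{i,1}(d_1,d_2)=0$ when $D_{i,1}\neq d_1$), and invoking consistency together with the no-anticipation of Assumption \ref{ass:noant} to replace the observed history $H_{i,2}$ by the potential history $H_{i,2}(d_1)$ on $\{D_{i,1}=d_1\}$, the inner expectation becomes $\mathbb{E}[H_{i,2}(d_1)\beta_{d_1,d_2}^2\mid X_{i,1},D_{i,1}=d_1]$. The second line of Lemma \ref{lem:identification_model1}---precisely the iterated-expectation identity---equates this to $X_{i,1}\beta_{d_1,d_2}^1$, so the bracket again cancels and the summand is zero. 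Summing over $i$ then delivers both $\mathbb{E}[T_2\mid X_1,D_1,Y_1,X_2,D_2]=0$ and $\mathbb{E}[T_3\mid X_1,D_1]=0$, completing the argument.
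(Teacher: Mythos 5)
Your proof is correct and follows essentially the same route as the paper: the paper derives Lemma \ref{lem:residual} as the two-period specialization of Lemma \ref{lem:lemmam2}, whose proof is exactly your computation (pull out the measurable weights, use the support restriction to work on the event $D_{i,1:t}=d_{1:t}$, represent $H_{i,t}\beta^t$ as a conditional expectation of the potential outcome, and apply the tower property together with sequential ignorability). The only cosmetic difference is that you invoke Lemma \ref{lem:identification_model1} directly where the paper re-derives the same iterated-expectation identity inside the proof of Lemma \ref{lem:lemmam2}.
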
 

The proof is in Appendix \ref{lem:33}. %To our knowledge, Lemma \ref{lem:balancing1} and Lemma \ref{lem:residual} are novel to the literature.
Lemma \ref{lem:residual} conveys a key insight: if we can guarantee that each component in Equation \eqref{eqn:bb} is sufficiently small, $\hat{\mu}$ is centered around the target estimand plus a small estimation error (since $\mathbb{E}[\bar{X}_1] \beta_{d_1, d_2}^{(1)} = \mathbb{E}[Y_{i,2}(d_1, d_2)]$). Lemma \ref{lem:residual} imposes the following intuitive conditions. The balancing weights in the first period are non-zero only for those units whose assignment in the first period coincide with the target assignment $d_1$, and similarly for the assignments $(d_1, d_2)$ in the second period. Moreover, we can only balance based on information observed before the realization of potential outcomes but not based on future information. A special case of balancing weights satisfying such conditions are IPW weights in Equation \eqref{eqn:ipw}. 

Figure \ref{fig:illustration} provides a visual description of the balancing condition, when the goal is to estimate $\mathbb{E}[Y_2(1,1)]$. In the first period we balance covariates of those individuals with shaded areas (both light and dark gray) with those of all the individuals (red box). That is, we balance the covariates in the first period between the treated and control units \textit{in the first period only}. In the second period we balance covariates between the gray and black box, once we reweight the covariates by the weights obtained in the first period. 

 \begin{figure}[!ht]
 \centering
 \vspace{-8mm}
    \begin{tikzpicture}[auto,
 block/.style ={rectangle, draw=blue, thick, fill=blue!20, text width=5em,align=center, rounded corners, minimum height=2em},
 block1/.style ={rectangle, draw=blue, thick, fill=blue!20, text width=5em,align=center, rounded corners, minimum height=2em},
 line/.style ={draw, thick, -latex',shorten >=2pt},
 cloud/.style ={draw=red, thick, ellipse,fill=red!20,
 minimum height=1em}]

\coordinate (1) at (-4,3);
\coordinate (2) at (-2,3);
\coordinate (3) at (-2,5);
\coordinate (4) at (-4,5);
\coordinate (5) at ($(1)!.5!(2)$); 
\coordinate (6) at ($(2)!.5!(3)$);
\coordinate (7) at ($(3)!.5!(4)$);
\coordinate (8) at ($(1)!.5!(4)$);
\coordinate (9) at ($(1)!.5!(3)$);

\coordinate (10) at (-4,3);
\coordinate (11) at (-2,3);
\coordinate (12) at (-2,1);
\coordinate (13) at (-4,1);
\coordinate (14) at ($(10)!.5!(11)$); 
\coordinate (15) at ($(11)!.5!(12)$);
\coordinate (16) at ($(12)!.5!(13)$);
\coordinate (17) at ($(10)!.5!(13)$);
\coordinate (18) at ($(10)!.5!(14)$);

\coordinate (21) at (-2,3);
\coordinate (22) at (0,3);
\coordinate (23) at (0,5);
\coordinate (24) at (-2, 5);
\coordinate (25) at ($(21)!.5!(22)$); 
\coordinate (26) at ($(22)!.5!(23)$);
\coordinate (27) at ($(23)!.5!(24)$);
\coordinate (28) at ($(21)!.5!(24)$);
\coordinate (29) at ($(21)!.5!(23)$);

\coordinate (30) at (-2,3);
\coordinate (31) at (0,3);
\coordinate (32) at (0,1);
\coordinate (33) at (-2,1);
\coordinate (34) at ($(30)!.5!(31)$); 
\coordinate (35) at ($(31)!.5!(32)$);
\coordinate (36) at ($(32)!.5!(33)$);
\coordinate (37) at ($(30)!.5!(33)$);
\coordinate (38) at ($(30)!.5!(34)$);

%\foreach \i/\Position in {1/below, 2/below, 3/above, 4/above, 5/below, 6/right, 7/above, 8/left, 9/above right} {
%    \fill (\i) circle (1pt) node [\Position] {\tiny \i};
%}      

  \draw [fill=,lightgray,lightgray] (-4,5) rectangle (-2,3); 
    
   \draw [fill=darkgray] (-2,5) rectangle (0,3);

    \node[circle] (a) at (-3, 5.8) {$D_{i,2} = 1$};
   \node[circle] (a) at (-1, 5.8) {$D_{i,2} = 0$};
   
   \node[circle] (a) at (-5.3, 4) {$D_{i,1} = 1$};
     \node[circle] (a) at (-5.3, 2) {$D_{i,1} = 0$};

\draw (1)--(2)--(3)--(4)-- cycle;

\draw (21)--(22)--(23)--(24)-- cycle;
\draw (10)--(11)--(12)--(13)-- cycle;
 \draw (30)--(31)--(32)--(33)-- cycle;
  
 \draw[red,thick] ($(-2.3, 2.2)+(-2,3)$)  rectangle ($(0,0.8)+(0.3,0)$);
   \draw[thick] ($(-2.1, 2.2)+(-2,3)$)  rectangle ($(0,2.7)+(0.1,0)$);

%\node[ellipse, draw=gray, fit=(a) (b) (c) (d) (e) (f) (d), inner sep=-0.1mm] (all) {};

    \end{tikzpicture}
   
  \caption{Illustration for balancing when estimating $\mathbb{E}[Y_2(1,1)]$. }  \label{fig:illustration}
\end{figure}

\begin{algorithm} [!htp]   \caption{Dynamic  covariate balancing (DCB): multiple time periods}\label{alg:alg3}
    \begin{algorithmic}[1]
    \Require Observations $\{Y_{i,1}, X_{i,1}, D_{i,1},\cdots, Y_{i,T}, X_{i,T}, D_{i,T}\}$, treatment history $(d_{1:T})$, finite parameters $\{K_{1,t}\}_{t=1}^T$, constraints $\delta_1(n,p_1), \delta_2(n,p_2), \cdots, \delta_T(n,p_T)$. 
    \State   Estimate $\beta_{d_{1:T}}^{(t)} $, running Algorithm \ref{alg:coefficients1} recursively for $T$ (instead of two) periods. 
    \State Let $\hat{\gamma}_{i,0} = 1/n$ and $t=0$; 
    \ForEach{$t \in \{1, \cdots, T\}$}
   
     \State  Estimate time $t$ weights with
\begin{equation}  \label{eqn:constraint_set}
\begin{aligned} 
 \hat{\gamma}_t  = \arg\min_{\gamma_t} \sum_{i = 1}^n \gamma_{i,t}^2, \quad  
& \text{s.t. }   \Big\|\frac{1}{n} \sum_{i = 1}^n \Big(\hat{\gamma}_{i,t-1} H_{i,t} - \gamma_{i,t}H_{i,t}\Big) \Big\|_{\infty} \le K_{1,t} \delta_t(n,p_t),  
\\
&  \quad 1^\top \gamma_{t} = 1, \gamma_{t} \ge 0,  \| \gamma_{t}\| _{\infty} \le  \log(n) n^{-2/3},   \gamma_{i,t} \le \prod_{s=1}^t 1\{D_{i,s} = d_s\} \forall i
\end{aligned}
\end{equation} 
\EndFor
  \Comment{obtain $T$ balancing vectors}

        \Return Estimate of the average potential outcome as in Equation \eqref{eqn:generalest}

         \end{algorithmic}
\end{algorithm}

\subsection{Complete algorithm description} \label{sec:parameters}

Algorithm \ref{alg:alg3} presents the details to estimate the balancing weights with $T$ periods. 
First, we construct weights in the first period that are nonzero only for those individuals with treatment at time $t = 1$ equal to the target treatment status $d_1$. In the first period, we balance baseline covariates between the treated and control groups as in \cite{athey2018approximate}. Second, we estimate $\hat{\gamma}_{2}$ for the desired treatment history $(D_{i,1}, D_{i,2}) = (d_1, d_2)$. The weights $\hat{\gamma}_{i,2}$ are not zero only for individuals with treatment history $(D_{i,1}, D_{i,2}) = (d_1, d_2)$  as discussed in Lemma \ref{lem:residual}. The estimated weights $\hat{\gamma}_{2}$ balance observable characteristics between different treatment groups at time $t = 2$, after \textit{reweighting} with the weights estimated in the previous period. We choose weights that sum to one, are positive, and do not assign the largest weight to a few observations. For each period, the optimization problem solves a quadratic program recursively 
that minimizes the weights' variances.

The program in Algorithm \ref{alg:alg3} is a sequence of $T$ quadratic programs with linear constraints. Its computational complexity scales polynomially with $n, p$. Figure \ref{fig:long_T2} provides an example showing that the computational time is between a few seconds and a few minutes for $T \in \{1, \cdots, 10\}$ on a personal laptop.  For $T = 20$ in our empirical application the running time takes approximately 30 minutes to run on one core on a personal laptop (in practice, researchers can run the optimization on ten or more cores).

\begin{rem}[Long panels, pooled estimation and imbalanced panels] \label{rem:pooled} In the presence of long panels, researchers may consider a 
regression of the form 
$$
Y_{i,t}(d_{1:t}) =  \beta_0 + \beta_1 d_t + \beta_2 Y_{i,t-1}(d_{1:(t-1)}) + X_{i,t}(d_{1:(t-1)}) \gamma +  \tau_t + \varepsilon_{i,t},  
$$  
where $\tau_t$ denotes fixed effects, and an estimand of the form
\begin{equation} \label{eqn:estimand2}
 \mathbb{E}[Y_{i,t + h}(D_{1:t}, d_{t+1}, \cdots, d_{t+h})] - \mathbb{E}[Y_{i,t + h}(D_{1:t}, d_{t+1}', \cdots, d_{t+h}')],   
\end{equation}
for given $d_{t+1:(t+h)}, d_{(t+1):(t +h)}'$. Equation \eqref{eqn:estimand2} denotes the effect of changing the treatment path in the last $h$ periods, averaging over past assignments. The regression specification and the estimand in Equation \eqref{eqn:estimand2} reduce the dimensionality as $T$ grows. It is possible to include groups fixed effects directly in the covariates of interest. 
%The estimation can be performed by considering each $(i,t)$ as an observation for all $t > h$. 
In particular, because of the definition of the estimand, researchers only have to estimate $h$ instead of $T$ balancing weights. Inference must cluster standard errors at the individual level. 

Finally, note that our method allows for imbalanced panels since estimation is performed sequentially (both for the coefficients and for the weights). If some observations are missing over some periods, the algorithm will exclude such units when estimating the coefficients and weights for that period(s) but not for the remaining ones. However, a sufficiently large (proportional to $n$) number of observations must be observed for $h$ consecutive periods. \qed 
\end{rem}

\subsection{Tuning parameters: practical implementation} \label{app:algorithms}

%Algorithm 2 presents balancing with multiple periods. 
% Its extensions for a linear model on the treatment assignments (hence using all in-sample information) follows similarly to Algorithm 1. 

 Similarly to balancing in one-dimensional setting \citep{athey2018approximate}, Algorithm \ref{alg:alg3} requires choosing tuning parameters for Equation \eqref{eqn:constraint_set}. A complete description is in Algorithm \ref{alg:algtuning} and uses a data-adaptive procedure. In a nutshell, we choose $\delta_t(n,p) = \log^{3/2}(p_t n)/n^{1/2}$ (here $p_t$ is the dimension of covariates at time $t$) as prescribed by the theoretical analysis in Section \ref{sec:theory}. To guarantee balance with many covariates, we first select the smallest constant $K_1$ for covariates with non-zero estimated coefficients and second the smallest constant for the remaining covariates until a feasible solution is reached. This choice minimizes the estimator's bias and, within the set of weighting estimators with the smallest bias, selects the one with the smallest variance while prioritizing balance on covariates with non-zero coefficients. Section \ref{sec:num} illustrates the benefits of this procedure.

 %Algorithm \ref{alg:algtuning} presents the choice of the tuning parameters. The algorithm imposes stricter tuning on those covariates whose coefficients are non-zero. Whenever many coefficients (more than one-third)  are non-zero, we impose a stricter balancing on those with the largest size. 

\begin{algorithm}[!h]   \caption{Tuning Parameters for DCB}\label{alg:algtuning}
    \begin{algorithmic}[1]
    \Require Observations $\{Y_{i,1}, X_{i,1}, D_{i,1},..., Y_{i,T}, X_{i,T}, D_{i,T}\}$, $\delta_t(n,p)$, treatment history $(d_{1:T})$,  $L_t, U_t$, grid length $G$, number of grids $R$. 
    \State Estimate coefficients as in Algorithm 2  and let $\hat{\gamma}_{i,0} = 1/n$;
    \State Define $R$ grids of length $G$, denoted as $\mathcal{G}_1, ..., \mathcal{G}_R$, equally between $L_t$ an $U_t$. 
    \State Define  
    $$
    \mathcal{S}_1 = \{j: |\hat{\beta}^{t, (j)}| \neq 0\}, \quad \mathcal{S}_2 = \{j: |\hat{\beta}^{t, (j)}| = 0\}.
    $$
    \State (Non-sparse regression): if $|\mathcal{S}_1|$ is too large (i.e., $>  \mathrm{dim}(\hat{\beta}^t)/3$), select $\mathcal{S}_1$ the set of the $1/3^{rd}$ largest coefficients in absolute value and $\mathcal{S}_2 = \mathcal{S}_1^c$.  
    \ForEach {$s_1 \in 1:G $}
    \ForEach {$K_{1,t}^a \in \mathcal{G}_{s_1}$}
    \ForEach {$K_{1,t}^b \in \mathcal{G}_{s_1}$}
    \State  Let  $\hat{\gamma}_{i,t} = 0, \text{ if } D_{i,1:t} \neq d_{1:t}$ and define $\hat{\gamma}_t:= \text{argmin}_{\gamma_t} \sum_{i = 1}^n \gamma_{i,t}^2$
\begin{equation} 
\begin{aligned} 
\text{s.t. } & \Big|\frac{1}{n} \sum_{i = 1}^n \hat{\gamma}_{i,t-1} H_{i,t}^{(j)} - \gamma_{i,t}H_{i,t}^{(j)} \Big| \le K_{1,t}^a \delta_t(n,p), \quad \forall j: \hat{\beta}^{t, (j)} \in \mathcal{S}_1 \\ &\Big|\frac{1}{n} \sum_{i = 1}^n \hat{\gamma}_{i,t-1} H_{i,t}^{(j)} - \gamma_{i,t}H_{i,t}^{(j)} \Big| \le K_{1,t}^b \delta_t(n,p) \quad \forall j:\hat{\beta}^{t, (j)} \in \mathcal{S}_2  \\ &\sum_{i = 1}^n \gamma_{i,t} = 1, \quad ||\gamma_{t}||_{\infty} \le  \log(n) n^{-2/3}, \gamma_{i,t} \ge 0. 
\end{aligned}
\end{equation} 
\State \textbf{Stop if}: a feasible solution exists. 
\EndFor
\EndFor
\EndFor

        \Return $\hat{\mu}_T(d_{1:T})$ 
        %Estimate of the average potential outcome as 
        % \begin{align} 
 %\begin{aligned} 
%\frac{1}{n} \sum_{i=1}^n \Big(\hat{\gamma}_{i,T} Y_{i,T} - \sum_{t = 2}^T %\Big(\hat{\gamma}_{i,t} - \hat{\gamma}_{i,t-1}\Big) H_{i,t} \hat{\beta}_{d_{1:T}}^t - \Big(\hat{\gamma}_{i,1}- \hat{\gamma}_{i,0}\Big) X_{i,1} \hat{\beta}_{d_{1:T}}^1\Big)
 %\end{aligned} 
%\end{align}

         \end{algorithmic}
\end{algorithm}

\subsection{Comparisons with standard local projections} \label{sec:lp1}

We pause here and highlight key differences from local projections in economics. 
Different from standard local projections, and building on the literature on marginal structural models \citep{robins2000marginal} here we assume and identify a local projection model for potential outcomes $Y_{2}(d_1, d_2)$ instead of realized outcomes $Y_2$ \citep[see][for reviews of these models]{jorda2005estimation, montiel2021local}. To illustrate this difference, consider a two periods settings without time-varying covariates, and
\begin{equation} \label{eqn:y} 
\begin{aligned} 
Y_{i,t} & =  Y_{i,t - 1} \alpha + D_{i,t} \beta + X_{i,1} \gamma  + \varepsilon_{i, t}, \quad \mathbb{E}\Big[\varepsilon_{i,t} \Big| Y_{i,t-1}, D_{i,t}, D_{i,t-1}, X_{i,1}\Big] = 0. 
\end{aligned} 
\end{equation} 
Under Equation \eqref{eqn:y}, we can write 
\begin{equation}  \label{eqn:cc} 
\begin{aligned} 
\mathbb{E}\Big[Y_{i,2} | X_{i,1}, D_{i,1}\Big] & = \alpha \beta D_{i,1} + \beta \mathbb{E}\Big[D_{i,2} | X_{i,1}, D_{i,1}\Big] + X_{i,1} (\gamma + \alpha \gamma) \\ 
\mathbb{E}\Big[Y_{i,2} | X_{i,1}, D_{i,2}, D_{i,1}\Big] & = \alpha \beta D_{i,1} + \beta D_{i,2}  + X_{i,1} (\gamma + \alpha \gamma) + \alpha \mathbb{E}[\varepsilon_{i, 1} | D_{i,2}, X_{i,1}] \\ 
\mathbb{E}\Big[Y_{i,2}(d_1, d_2) | X_{i,1}, D_{i,1}\Big] & = \alpha \beta d_1 + \beta d_2 + X_{i,1} (\gamma + \alpha \gamma)
\end{aligned}
\end{equation} 
The first equation is a reduced form corresponding to regressing the outcome at time $t = 2$ on past treatments and covariates, the second equation is its equivalent also controlling for $D_{i,2}$, and the third is the reduced form for potential outcomes.  

The first equation in \eqref{eqn:cc} shows that the interpretation of the parameters of the local projection of the observed outcome $Y_{i,2}$ onto $(D_{i,1}, X_{i,1})$ depends on properties of $\mathbb{E}\Big[D_{i,2} | X_{i,1}, D_{i,1}\Big]$. Once we project $Y_{i,2}$ onto $(D_{i,1}, X_{i,1})$, the estimated coefficient for $D_{i,1}$ denotes the effect of treating an individual at time $t = 1$ only if $D_{i,2}$ and $D_{i,1}$ are independent.

The second equation shows that controlling for $D_{i,2}$ may lead to omitted variable bias on the coefficient multiplying $D_{i,2}$ if future treatments depend on past outcomes. 
Therefore, standard local projections recover estimands whose interpretation depends on the distribution of the treatments, whereas the treatments' distribution may change with a change in policy. A third possible approach and different from the specifications in \eqref{eqn:cc} is to estimate each equation for $Y_{i,t}, Y_{i,t-1}, X_{i,t}$ and obtain the desired $\mathrm{ATE}(\cdot)$ through products and sums of the coefficients. This approach is prone to large estimation error with high-dimensional time-varying covariates because it estimates a separate model for each covariate. This follows similarly to discussion motivating local projections over vector auto-regression models \citep{jorda2005estimation, montiel2021local}.  Our approach leverages the potential outcome model in the third equation in \eqref{eqn:cc}, whose parameters do not depend on the realized treatments $D's$.

Our problem also connects to the literature on two-way fixed effects and Difference-in-Differences \citep[see][for an overview]{roth2023s}. This literature focuses on staggered adoption, whereas treatments here can change arbitrarily over time. In particular, the parallel trend assumption in DiD designs prohibits dynamic selection into treatment considered here \citep[see][for a discussion]{ghanem2022selection, marx2022parallel}.

Using Equation \eqref{eqn:cc} for a simple illustration, we can interpret a version of parallel trends as 
\begin{equation} \label{eqn:pt} 
\mathbb{E}\Big[\varepsilon_{i,t} - \varepsilon_{i,t-1}\Big| D_{i, 1:T} = \mathbf{1}\Big] = \mathbb{E}\Big[\varepsilon_{i,t} - \varepsilon_{i,t-1} \Big| D_{i, 1:T} = \mathbf{0}\Big], 
\end{equation}  
violated when \textit{future} assignments depend on past outcomes (and therefore $\varepsilon_{i,t}$).

Therefore, DiD designs are suited in the presence of unobserved additive unit-level confounders but lack of dynamic selection different from (and complementary to) our framework.

\subsection{Discussion: Estimation of policy treatment rules} \label{sec:appendix_when_to_treat} 

In this subsection we briefly illustrate balancing weights can be extended to study treatment policy sequences in a simple linear model, focusing on the two periods settings for brevity. We leave to future research studying this extension to non-linear settings. Following Remark \ref{rem:when_to_treat}, consider a simple linear model of the form (which we interpret as an approximation in high dimensional settings) 
$$
\mathbb{E}\Big[Y_{i,2}(\pi) | X_{i,1}\Big] = X_{i,1} \tilde{\beta}_\pi^{(1)}, \quad \mathbb{E}\Big[Y_{i,2}(\pi) | H_{i,2}\Big]  = H_{i,2} \tilde{\beta}_{\pi}^{(2)}. 
$$ 
Identification and estimation of the coefficients follows similarly what discussed in Lemma \ref{lem:identification_model1}, where we can estimate $\tilde{\beta}_{\pi}^{(2)}$ by regressing 
$$
\Big[Y_{i,2}(\pi) | H_{i,2}, D_{i,1} = \pi_1(X_{i,1}), D_{i,2} = \pi_2(H_{i,2})\Big] = \mathbb{E}\Big[Y_{i,2}(\pi) | H_{i,2}\Big] 
$$ 
(noting that $X_{i,1}$ is measurable with respect to $H_{i,2}$). For $\tilde{\beta}_{\pi}^{(1)}$ we follow the identification strategy 
$$
\mathbb{E}\left[\Big[Y_{i,2}(\pi) | H_{i,2}, D_{i,1} = \pi_1(X_{i,1}), D_{i,2} = \pi_2(H_{i,2})\Big]\Big|X_{i,1}, D_{i,1} = \pi_1(X_{i,1})\right] = \mathbb{E}\Big[Y_{i,2}(\pi) | X_{i,1}\Big]. 
$$ 
We can then construct balancing weights as in Algorithm 1 where we replace the constraint $\gamma_{i,t} \le \prod_{s = 1}^t 1\{D_{i,s} = d_s\}$ with the constraint $\gamma_{i,t} \le \prod_{s=1}^t 1\{D_{i,s} = \pi_s(H_{i,s})\}$. We construct an estimator of the following form 
$$
\hat{V}(\pi) =  \sum_{i=1}^n \left\{\hat{\gamma}_{i,2}   Y_{i,2} -   \Big(\hat{\gamma}_{i,2} - \hat{\gamma}_{i,1} \Big) H_{i,2} \hat{\beta}_{\pi}^{(2)} -  \Big(\hat{\gamma}_{i,1}  - \frac{1}{n} \Big) X_{i,1} \hat{\beta}_{\pi}^{(1)} \right\}.
$$ 
The core intuition is that because $\hat{\gamma}_{i,2} > 0$ only if $D_{i,1} = \pi_1(X_{i,1})$ \textit{and} $D_{i,2} = \pi_2(H_{i,2})$, and $\hat{\gamma}_{i,1} > 0$ only if $D_{i,1} = \pi_1(X_{i,1})$, we can equivalently write 

$$
\hat{V}(\pi) =  \sum_{i=1}^n \left\{\hat{\gamma}_{i,2}   Y_{i,2}(\pi) -   \Big(\hat{\gamma}_{i,2} - \hat{\gamma}_{i,1} \Big) H_{i,2}(\pi) \hat{\beta}_{\pi}^{(2)} -  \Big(\hat{\gamma}_{i,1}  - \frac{1}{n} \Big) X_{i,1} \hat{\beta}_{\pi}^{(1)} \right\}, 
$$ 
where we used consistency of potential outcomes for $Y_{i,2}(\pi), H_{i,2}(\pi)$. We can then follow verbatim our argument in the main text to motivate and justify the (linear) balancing constraints.

 \section{Lemmas} \label{sec:lem}

\subsection{Proof of Lemma \ref{lem:identification_model1}} \label{sec:lemma1}

 The first equation in Lemma \ref{lem:identification_model1} is a direct consequence of condition (A) in Assumption \ref{ass:seqign}, and the linear model assumption (Assumption \ref{ass:linearity}). Consider the second equation in Lemma \ref{lem:identification_model1}. By condition (B) in Assumption \ref{ass:seqign}, we have  
$$
\begin{aligned} 
 \mathbb{E}\Big[Y_{i,2}(d_1, d_2) \Big| X_{i,1} \Big] = \mathbb{E}\Big[ Y_{i,2}(d_1, d_2) \Big| X_{i,1}, D_{i,1} = d_1\Big]. 
\end{aligned} 
$$ 
Using the law of iterated expectations (since $X_{i,1}$ is measurable with respect to $H_{i,2}$)
$$
\begin{aligned} 
\mathbb{E}\Big[ Y_{i,2}(d_1, d_2) \Big| X_{i,1}, D_{i,1} = d_1\Big] = \mathbb{E}\Big[\mathbb{E}[Y_{i,2}(d_1,d_2) | H_{i,2}, D_{i,1} = d_1] | X_{i,1}, D_{i,1} = d_1\Big].  
\end{aligned} 
$$ 
Using condition (A) in Assumption \ref{ass:seqign}, we have 
$$
\mathbb{E}[Y_{i,2}(d_1,d_2) | H_{i,2}, D_{i,1} = d_1] = \mathbb{E}[Y_{i,2}(d_1,d_2) | H_{i,2}, D_{i,1} = d_1, D_{i,2} = d_2]
$$ 
the proof completes as $\mathbb{E}[Y_{i,2}(d_1,d_2) | H_{i,2}, D_{i,1} = d_1, D_{i,2} = d_2] = \mathbb{E}[Y_{i,2} | H_{i,2}, D_{i,1} = d_1, D_{i,2} = d_2]$ as a consequence of condition (A) in Assumption \ref{ass:seqign}.

\subsection{Proof of Lemma \ref{lem:lemmam1} }  \label{app:lem22}

We prove the main lemmas for multiple periods as the two-periods case is a special case. 

 Throughout the proof we omit the argument $d_{1:T}$ of $\hat{\gamma}_t(d_{1:T})$ for notational convenience. 
 Recall that $\hat{\gamma}_{i,T} = 0$ if $D_{i, 1:T} \neq d_{1:T}$. Therefore, 
 $$
 \small 
 \begin{aligned} 
 \hat{\gamma}_{i,T} Y_{i,T} =  \hat{\gamma}_{i,T} Y_{i,T}(d_{1:T}) = \hat{\gamma}_{i,T}(H_{i,T} \beta_{d_{1:T}}^T + \varepsilon_{i,T}). 
 \end{aligned} 
 $$
 We can write 
$$
\small 
\begin{aligned} 
& \sum_{i=1}^n \Big(\hat{\gamma}_{i,T} Y_{i,T} - \sum_{t = 2}^T (\hat{\gamma}_{i,t} - \hat{\gamma}_{i,t-1}) H_{i,t} \hat{\beta}_{d_{1:T}}^t - (\hat{\gamma}_{i,1} - \frac{1}{n}) X_{i,1} \hat{\beta}_{d_{1:T}}^1\Big) \\
&=  \sum_{i=1}^n \Big(\hat{\gamma}_{i,T} H_{i,T} \beta_{d_{1:T}}^T - \sum_{t = 2}^T (\hat{\gamma}_{i,t} - \hat{\gamma}_{i,t-1}) H_{i,t} \hat{\beta}_{d_{1:T}}^t - (\hat{\gamma}_{i,1} - \frac{1}{n}) X_{i,1} \hat{\beta}_{d_{1:T}}^1\Big)  + \hat{\gamma}_T^\top \varepsilon_T.
\end{aligned} 
$$
Consider first the term 
$$
\begin{aligned} 
& \sum_{i=1}^n  (\hat{\gamma}_{i,T} H_{i,T} \beta_{d_{1:T}}^T -(\hat{\gamma}_{i,T} - \hat{\gamma}_{i,T-1})H_{i,T} \hat{\beta}_{d_{1:T}}^T) = (\hat{\gamma}_{T}^\top H_T - \hat{\gamma}_{T-1}^\top H_{T}) (\beta_{d_{1:T}}^T - \hat{\beta}_{d_{1:T}}^T) + \hat{\gamma}_{T-1}^\top H_T \beta_{d_{1:T}}^T. 
\end{aligned} 
$$
For any $s > 1$, 
$$
\small 
\begin{aligned} 
&\sum_{i=1}^n  (\hat{\gamma}_{i,s} - \hat{\gamma}_{i,s-1})H_{i,s} \hat{\beta}_{d_{1:T}}^s = (\hat{\gamma}_{s}^\top H_{s} - \hat{\gamma}_{s-1}^\top H_{s}) (\hat{\beta}_{d_{1:T}}^s - \beta_{d_{1:T}}^s) + \hat{\gamma}_{s}^\top H_{s} \beta_{d_{1:s}}^s - \hat{\gamma}_{s - 1}^\top H_s \beta_{d_{1:s}}^s. 
\end{aligned} 
$$
For $s = 1$, it follows 
$$
\small 
\begin{aligned} 
&\sum_{i=1}^n  (\hat{\gamma}_{i,1} - \frac{1}{n})X_{i,1} \hat{\beta}_{d_{1:T}}^1 = (\hat{\gamma}_{1}^\top X_{1} - \bar{X}_1) (\hat{\beta}_{d_{1:T}}^1 - \beta_{d_{1:T}}^1) + \hat{\gamma}_{1}^\top X_{1} \beta_{d_{1:s}}^1 - \bar{X}_1 \beta_{d_{1:s}}^1. 
\end{aligned} 
$$
Therefore, we can write 
$$
\small 
\begin{aligned} 
&\sum_{i=1}^n \Big(\hat{\gamma}_{i,T} Y_{i,T} - \sum_{t = 2}^T (\hat{\gamma}_{i,t} - \hat{\gamma}_{i,t-1}) H_{i,t} \hat{\beta}_{d_{1:T}}^t - (\hat{\gamma}_{i,1} - \frac{1}{n}) X_{i,1} \hat{\beta}_{d_{1:T}}^1\Big) \\ &= 
 (\hat{\gamma}_{T}^\top H_T - \hat{\gamma}_{T-1}^\top H_{T}) (\beta_{d_{1:T}}^T - \hat{\beta}_{d_{1:T}}^T) + \sum_{s=2}^{T - 1} (\hat{\gamma}_{s}^\top H_{s} - \hat{\gamma}_{s-1}^\top H_{s}) (\beta_{d_{1:T}}^s - \hat{\beta}_{d_{1:T}}^s) + (\hat{\gamma}_{1}^\top X_{1} - \bar{X}_1) (\beta_{d_{1:T}}^1 - \hat{\beta}_{d_{1:T}}^1)  \\ &  + \hat{\gamma}_{T-1}^\top H_T \beta_{d_{1:T}}^T  + \gamma_T^\top \varepsilon_T - \Big[\sum_{s=2}^{T - 1}  \hat{\gamma}_{s}^\top H_{s} \beta_{d_{1:s}}^s - \hat{\gamma}_{s-1}^\top H_s \beta_{d_{1:s}}^s\Big]  - \hat{\gamma}_{1}^\top X_{1} \beta_{d_{1:s}}^1 + \bar{X}_1 \beta_{d_{1:s}}^1. 
\end{aligned} 
$$ 
The proof completes after collecting the desired terms.

\subsection{Additional lemma in multiple periods and proof of Lemma \ref{lem:residual}} \label{lem:33}

We prove Lemma \ref{lem:residual} as a special case of the following lemma. 

\begin{lem} \label{lem:lemmam2}
Let Assumption \ref{ass:seqignm} hold. Suppose that the sigma algebra $\sigma(\hat{\gamma}_t(d_{1:T})) \subseteq \sigma(H_t ,D_t)$. Suppose in addition that $\hat{\gamma}_{i,t}(d_{1:T}) = 0$ if $D_{i,1:t} \neq d_{1:t}$. Then 
$$
\small 
\begin{aligned} 
\mathbb{E}\Big[\hat{\gamma}_{i,t-1}(d_{1:T})H_t \beta_{d_{1:T}}^{(t)} - \hat{\gamma}_{i,t-1}(d_{1:T}) H_{t-1} \beta_{d_{1:T}}^{(t-1)} \Big| H_{t-1}, D_{t-1}\Big] = 0.
\end{aligned} 
$$
\end{lem}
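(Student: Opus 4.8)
The plan is to prove the conditional mean-zero property by repeatedly invoking the tower property of conditional expectation, together with the measurability structure imposed on the weights $\hat{\gamma}_{t-1}(d_{1:T})$ and the sequential ignorability condition (B) of Assumption \ref{ass:seqignm}. The key observation is that by the definition $\varepsilon_{i,t}(d_{1:T}) = Y_{i,T}(d_{1:T}) - H_{i,t}(d_{1:(t-1)})\beta_{d_{1:t}}^T$ and Condition (C), the term $H_{i,t}\beta_{d_{1:T}}^t$ is exactly the conditional expectation of the end-line potential outcome $Y_{i,T}(d_{1:T})$ given the history up to period $t$ (on the event where realized treatments agree with the target), while $H_{i,t-1}\beta_{d_{1:T}}^{t-1}$ is the conditional expectation given the \emph{coarser} history up to $t-1$. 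So the bracketed difference inside the statement is a difference of two nested conditional expectations of the same random variable, which is the canonical setup for a telescoping/tower argument.

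First I would use the measurability assumption $\sigma(\hat{\gamma}_{t-1}(d_{1:T})) \subseteq \sigma(\mathcal{F}_t, D_t)$ — and more specifically the fact that $\hat{\gamma}_{i,t-1}$ depends only on information through period $t-1$ — to pull the weight $\hat{\gamma}_{i,t-1}(d_{1:T})$ outside the conditional expectation given $\mathcal{F}_{t-1}, D_{t-1}$, since it is measurable with respect to that sigma-algebra. This reduces the claim to showing
\[
\mathbb{E}\Big[H_{i,t}\beta_{d_{1:T}}^t - H_{i,t-1}\beta_{d_{1:T}}^{t-1} \,\Big|\, \mathcal{F}_{t-1}, D_{t-1}\Big] = 0
\]
on the relevant event. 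Second, I would invoke Condition (C) to rewrite each term as a conditional expectation of $Y_{i,T}(d_{1:T})$: the term $H_{i,t-1}\beta_{d_{1:T}}^{t-1}$ equals $\mathbb{E}[Y_{i,T}(d_{1:T}) \mid \mathcal{F}_{t-1}, D_{1:(t-1)} = d_{1:(t-1)}]$, and $H_{i,t}\beta_{d_{1:T}}^t$ equals the analogous conditional expectation with the finer conditioning set including period-$t$ covariates. Third, I would apply the tower property to collapse the iterated expectation of the finer object back down to the coarser one, so that the two terms cancel.

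The step I expect to be the main obstacle is the careful handling of the conditioning on treatment realizations. The restriction $\hat{\gamma}_{i,t}(d_{1:T}) = 0$ when $D_{i,1:t} \neq d_{1:t}$ means the weights effectively restrict attention to the event $\{D_{i,1:(t-1)} = d_{1:(t-1)}\}$, and this is where sequential ignorability (B) must be used: to pass from the conditional expectation of $Y_{i,T}(d_{1:T})$ given $\mathcal{F}_t$ alone to the version that also conditions on $D_{i,t-1} = d_{t-1}$, I need that $Y_{i,T}(d_{1:T})$ is independent of $D_{i,t-1}$ given $\mathcal{F}_{t-1}$, which is precisely what Condition (B) supplies (applied at period $t-1$). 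The delicate point is aligning the index bookkeeping — ensuring that the potential histories $H_{i,t}(d_{1:(t-1)})$ entering Condition (C) are the counterfactual quantities, while the weights multiply the \emph{observed} $H_{i,t}$, and that these coincide on the event the weights are nonzero by no-anticipation (A) and the consistency of observed and potential outcomes when $D_{i,1:t} = d_{1:t}$. Once this correspondence between observed and potential quantities is established on the support of the weights, the telescoping argument goes through cleanly.
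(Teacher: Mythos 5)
Your proposal is correct and follows essentially the same route as the paper's proof: pull out the weight by measurability, restrict to the event where the weights are nonzero, use Condition (C) to identify $H_{i,t}\beta^t_{d_{1:T}}$ and $H_{i,t-1}\beta^{t-1}_{d_{1:T}}$ as nested conditional expectations of $Y_{i,T}(d_{1:T})$, collapse via the tower property, and invoke sequential ignorability (B) to drop $D_{t-1}$ from the conditioning set. The "delicate point" you flag about reconciling the conditioning sets is exactly the step the paper handles by writing $\mathbb{E}[Y_{i,T}(d_{1:T})\mid \mathcal{F}_{t-1}, D_{t-1}] = \mathbb{E}[Y_{i,T}(d_{1:T})\mid \mathcal{F}_{t-1}]$.
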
 

\begin{proof} 
Since $\hat{\gamma}_{i,t}(d_{1:T})$ is equal to zero if $D_{i,1:t} \neq d_{1:t}$ we can focus to the case where $D_{i,1:t} = d_{1:t}$. Since weights at time $t-1$ are measurable with respect to $H_{t-1}, D_{t-1}$, we only need to show 
\begin{equation} 
 \mathbb{E}[\hat{\gamma}_{i,t - 1}(d_{1:T}) H_t \beta_{d_{1:T}}^t | H_{t-1}, D_{-i, t-1}, D_{i,(1:(t-1))} = d_{1:(t-1)}] =  \hat{\gamma}_{i,t}(d_{1:T}) H_{t-1} \beta_{d_{1:T}}^{t-1}.  
\end{equation} 
 On the event that $D_{i,(1:(t-1))} \neq d_{1:(t-1)}$ the expression is zero on both sides and the result trivially holds. Therefore, we can implicitely assume that $D_{i, (1:(t-1))} = d_{1:(t-1)}$ since otherwise the result trivially holds. 
Under Assumption \ref{ass:seqignm} we can write 
\begin{equation}
\begin{aligned} 
 \mathbb{E}[\hat{\gamma}_{i,t - 1}(d_{1:T}) H_t \beta_{d_{1:T}}^t | H_{t-1}, D_{t-1}] & = \mathbb{E}\Big[ \hat{\gamma}_{i,t - 1}(d_{1:T}) \mathbb{E}[ Y_{i,T}(d_{1:T}) | H_t, D_t] \Big |H_{t-1}, D_{t-1}\Big] \\ & = 
\hat{\gamma}_{i,t - 1}(d_{1:T}) \mathbb{E}[ Y_{i,T}(d_{1:T}) | H_{t-1},D_{t-1}] 
\end{aligned} 
\end{equation}  
by the tower property of the expectation and the definition of $H_t$. Now notice that under Assumption \ref{ass:seqignm} (B), 
$
\begin{aligned} 
\mathbb{E}[ Y_{i,T}(d_{1:T}) | H_{t-1}, D_{t-1}] = \mathbb{E}[ Y_{i,T}(d_{1:T}) | H_{t-1} ] .
\end{aligned} 
$
Therefore  
\begin{equation}
\hat{\gamma}_{i,t - 1}(d_{1:T}) \mathbb{E}[ Y_{i,T}(d_{1:T}) | H_{t-1} ] = \hat{\gamma}_{i,t - 1}(d_{1:T}) H_{i,t-1} \beta_{d_{1:(t-1)}}^{t-1} .
\end{equation} 
  \end{proof}

\subsection{Sufficient conditions for lasso} \label{aa:suff}

In this section we provide sufficient conditions for the convergence rate of lasso in two periods as in Assumption \ref{ass:4m} (i). $T$-periods follows similarly. 
\begin{lem}[Sufficient conditions for Lasso] \label{lem:suff} Suppose that $||H_2||_{\infty}, ||X_1||_{\infty}$ are uniformly bounded almost surely and $||\beta_{d_{1:2}}^{(2)}||_0, ||\beta_{d_{1:2}}^{(1)}||_0 \le s, ||\beta_{d_{1:2}}^{(2)}||_{\infty}, ||\beta_{d_{1:2}}^{(1)}||_{\infty} \le s$. Suppose that $H_2, X_1$ both satisfy the restricted eigenvalue assumption, and the column normalization condition \citep{negahban2012unified}. (Sufficient conditions that guarantee that the restricted eigenvalue assumption holds are discussed in \citep{negahban2012unified}.) Suppose that $\hat{\beta}_{d_{1:2}}^{(1)}, \hat{\beta}_{d_{1:2}}^{(2)}$ are estimated with Lasso as in Algorithm 2 with a full interaction model and with penalty parameter $\lambda_n \asymp s\sqrt{\log(p)/n}$. Let Assumptions \ref{ass:noant} - \ref{ass:linearity} hold. Let $\varepsilon_2(d_{1:2}) | H_2$ be subgaussian almost surely and  $\nu_1(d_1) | X_1$ be sub-gaussian almost surely.   Then for $t \in \{1, 2\}$, 
$$
\Big| \Big| \hat{\beta}_{d_{1:2}}^{(t)} - 
\beta_{d_{1:2}}^{(t)} \Big| \Big|_1 = \mathcal{O}_p\Big(s^2 \sqrt{\log(p)/n}\Big) . 
$$ 
Therefore, 
$$
\| \hat{\beta}_{d_{1:2}}^{(t)} - \beta_{d_{1:2}}^{(t)} \| _1 \delta_t(n, p) = o_p(1/\sqrt{n}),$$ 
for $\delta_t(n,p) \asymp \log(np)/n^{1/4}$ and $s^2 \log^{3/2}(np)/n^{1/4} = \mathcal{O}(1)$. 
\end{lem} 
The proof is discussed below and follows similarly to \cite{negahban2012unified}, with minor modifications. The above result provides a set of sufficient conditions such that Assumption \ref{ass:4m} (i) holds for a feasible choice of $\delta_t$.
Interestingly, the estimation error propagates each period through stricter restrictions on the sparsity parameter $s$ compared to the standard lasso method. The reason is because of the recursive approach.

\begin{proof} The result for 
$$
\Big| \Big|\hat{\beta}_{d_{1:2}}^2 - \beta_{d_{1:2}}^2 \Big| \Big|_1 = \mathcal{O}_p\Big(s \sqrt{\log(p)/n}\Big)
$$
follows verbatim from \cite{negahban2012unified} Corollary 2. For the result for $\hat{\beta}_{d_{1:2}}^1$ it suffices to notice, following the same argument from \cite{negahban2012unified} (Corollary 1), that 
$$
\Big| \Big|\hat{\beta}_{d_{1:2}}^1 - \beta_{d_{1:2}}^1 \Big| \Big|_1 = O(s \lambda_n ), \text{ for } \lambda_n \ge \Big| \Big| \frac{1}{n} X_1^\top (H_2 \hat{\beta}_{d_{1:2}}^2 - X_1 \beta_{d_{1:2}}^1)\Big| \Big|_{\infty},  
$$ 
since here we used the estimated outcome $H_2 \hat{\beta}_{d_{1:T}}^2$ as the outcome of interest in our estimated regression instead of the true outcome and the error is defined as $H\hat{\beta}_{d_{1:2}}^{(t)} - X_1 \beta_{d_{1:2}}^1$. (Formally, here to compute $\mathcal{R}^*(\nabla \mathcal{L}(\theta^*))$ in \cite{negahban2012unified}'s notation we need to account for the loss function to depend on the estimated outcome.) The upper bound as a function of $\lambda_n$ follows directly from Theorem 1 in \cite{negahban2012unified}. (Note that Theorem 1 in  \cite{negahban2012unified} does not depend on the distribution of the data and is a deterministic statement which holds under strong convexity at the true regression parameter. For a linear model, strong convexity is satisfied under the restricted eigenvalue assumption which does not depend on the regression parameter.)
 We note that we can write 
$$
\begin{aligned} 
\Big| \Big|\frac{1}{n} X_1^\top (H_2 \hat{\beta}_{d_{1:2}}^2 - X_1 \beta_{d_{1:2}}^1)\Big| \Big|_{\infty} & \le \Big| \Big| \frac{1}{n} X_1^\top \nu_1\Big| \Big|_{\infty}  + \Big| \Big| \frac{1}{n} X_1^\top (H_2 \hat{\beta}_{d_{1:2}}^2 - H_2 \beta_{d_{1:2}}^2)\Big| \Big|_{\infty} \\ &= \Big| \Big| \frac{1}{n} X_1^\top \nu_1\Big| \Big|_{\infty}  + \Big| \Big| \frac{1}{n} X_1^\top H_2 (\beta_{d_{1:2}}^2 - \hat{\beta}_{d_{1:2}}^2)\Big| \Big|_{\infty}  \\
&\le \Big| \Big| \frac{1}{n} X_1^\top \nu_1\Big| \Big|_{\infty}  + ||X_1||_{\infty} ||H_2||_{\infty}  ||\beta_{d_{1:2}}^2 - \hat{\beta}_{d_{1:2}}^2||_1. 
\end{aligned}
$$ 
We now study each component separately. By sub-gaussianity, since $\mathbb{E}[\nu_1 | X_1] = 0$ by Assumption \ref{ass:linearity}, we have for all $t > 0$, by Hoeffding inequality and the union bound, 
$$
P\Big(\Big| \Big| \frac{1}{n} X_1^\top \nu_1\Big| \Big|_{\infty} > t\Big| X_1\Big) \le p \exp\Big(-M \frac{t^2 n}{s}\Big)
$$ 
for a finite constant $M$. This result follows since $\nu_1 \le ||\beta_1||_1 ||X_{1}^{(j)}||_{\infty} \le M s$. It implies that
$$
\Big| \Big| \frac{1}{n} X_1^\top \nu_1\Big| \Big|_{\infty} = O_p(\sqrt{s\log(p)/n})
$$ 
The second component instead is $O_p(s \sqrt{\log(p)/n})$ by the bound on $||\beta_{d_{1:2}}^2 - \hat{\beta}_{d_{1:2}}^2||_1$. This complete the proof. Finally, observe also that the same argument follows recursively for any finite $T$, with the estimation error depending on $T$. 
\end{proof} 

\begin{rem}[Restricted eigenvalue condition] \label{rem:restricted}  Note that here we impose the restricted eigenvalue condition on both $H$ and $X$. Assume that $H$ and $X$ have bounded entries for simplicity. Because $H$ has independent rows and similarly $X$ has independent rows (units are independent across $i$, but not over time necessarily), and because $H$ and $X$ have bounded entries, a sufficient condition for the restricted eigenvalue condition to hold with high probability is 
$$
\lambda_{min}\Big(\mathbb{E}[(X_i - \mathbb{E}[X_i])(X_i - \mathbb{E}[X_i])^\top]\Big), \lambda_{min}\Big(\mathbb{E}[(H_{i,2} - \mathbb{E}[H_{i,2}])(H_{i,2} - \mathbb{E}[H_{i,2}])^\top]\Big) \ge \bar{\lambda} > 0
$$ 
where $\lambda_{min}$ indicates the smallest eigenvalue and $\bar{\lambda}$ is a constant bounded away from zero (see Theorem 10.5.11 in \cite{vershynin2018high}). This condition poses a restriction on the time dependence (since $H_{i,2}$ contains covariates over both periods): we cannot have that $H_{i,2}$ is perfectly collinear across different columns. It allows however the column of $H_{i,2}$ to be  dependent but not collinear. References studying this condition include \cite{deshpande2023online}. \qed 
\end{rem} 

\subsection{Auxiliary Lemmas}

In the lemmas below we will refer to $a, c_0, C$ as finite constants. 

\begin{lem}[Existence of Feasible $\hat{\gamma}_1$] \label{lem:gamma_1_star} 
Suppose that $X_{i,1}^{(j)}$ is subgaussian for all $j \in \{1, \cdots, p_1\}$, $X_{i,1} \in \mathbb{R}^{p_1}$. Suppose that for $d_1 \in \{0,1\}$, $P(D_{i,1} = d_1 | X_{i,1}) \in(\delta, 1 - \delta)$, for some $\delta \in (0,1)$. For finite constants $c_0, C < \infty$, with probability at least $1 - 5/n$, for 
$\log(2np_1)/n \le c_0$, $\delta_1(n,p_1) \ge C\sqrt{2\log(2np_1)/n}$, there exists a feasible $\hat{\gamma}_1$ satisfying the constraints in Algorithm 1. In addition, 
$$\lim_{n \rightarrow \infty} P\Big(n ||\hat{\gamma}_1||_2^2 \le \mathbb{E}\Big[\frac{1}{P(D_{i,1} = d_1 | X_{i,1})}\Big] \Big) = 1. $$  
\end{lem}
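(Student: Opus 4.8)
The plan is to prove feasibility by exhibiting the self-normalized inverse-probability weight of Theorem~\ref{thm:overlap_first} as an explicit point of the constraint set in \eqref{eqn:constraint_set1}, and then to obtain the variance bound from optimality of $\hat\gamma_1$. First I would take the candidate
\[
\hat\gamma_{i,1}^* = \frac{1}{n}\frac{1\{D_{i,1}=d_1\}}{P(D_{i,1}=d_1\mid X_{i,1})}\Big/ S_n,\qquad S_n=\frac{1}{n}\sum_{j=1}^n\frac{1\{D_{j,1}=d_1\}}{P(D_{j,1}=d_1\mid X_{j,1})},
\]
and check each constraint in turn. The sum-to-one, non-negativity, and support constraints ($\hat\gamma_{i,1}^*=0$ when $D_{i,1}\neq d_1$) hold by construction. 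For the box constraint, note that $S_n$ has conditional mean one and averages terms bounded by $1/\delta$ under the overlap hypothesis $P(\cdot)\in(\delta,1-\delta)$, so Hoeffding's inequality gives $|S_n-1|\lesssim\sqrt{\log n/n}$, whence $S_n\geq 1/2$ on an event of probability at least $1-2/n$; on this event every weight is at most $2/(n\delta)=O(1/n)$, which is dominated by $\log(n)n^{-2/3}$ for $n$ large since $n^{-1}/(n^{-2/3}\log n)\to 0$.

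The crux is the balancing constraint. Writing the per-coordinate imbalance of the unnormalized weights $w_{i,1}=1\{D_{i,1}=d_1\}/(nP)$ as $\sum_i(w_{i,1}-n^{-1})X_{i,1}^{(j)}$, I would use the defining property of inverse-probability weights, $\mathbb{E}[w_{i,1}\mid X_{i,1}]=1/n$, to conclude that this sum is conditionally mean zero given $X_1$. Conditionally on $X_1$ the summands are independent across $i$ and bounded by $M/(n\delta)$, so Hoeffding together with a union bound over the $p$ coordinates controls $\max_j|\sum_i(w_{i,1}-n^{-1})X_{i,1}^{(j)}|$ by a multiple of $(M/\delta)\sqrt{\log(np)/n}$ with probability at least $1-2/n$. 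The passage to the normalized weights is then handled by the decomposition
\[
\sum_i\hat\gamma_{i,1}^*X_{i,1}^{(j)}-\bar X_1^{(j)}=\frac{1}{S_n}\Big(\sum_i w_{i,1}X_{i,1}^{(j)}-\bar X_1^{(j)}\Big)+\Big(\tfrac{1}{S_n}-1\Big)\bar X_1^{(j)},
\]
bounding the first term by the preceding display (using $S_n\geq 1/2$) and the second by $|S_n-1|\,M\lesssim M\sqrt{\log n/n}$. Collecting these estimates shows the imbalance is at most $\delta_1(n,p)$ whenever $\delta_1(n,p)\geq CMa\sqrt{2\log(2np)/n}$, the constant $CMa$ absorbing $1/\delta$ and the numerical Hoeffding factor; summing the $O(1/n)$ failure probabilities yields feasibility with probability at least $1-5/n$. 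This calibration of $\delta_1$ against $\sqrt{\log(np)/n}$ is the main obstacle, since it is where the conditional-mean-zero structure, the maximal inequality over the $p$ coordinates, and the self-normalization must be combined.

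For the variance statement, since $\hat\gamma_1$ minimizes $\|\gamma_1\|_2^2$ over a set that contains $\hat\gamma_1^*$ on the above event, optimality gives $\|\hat\gamma_1\|_2^2\le\|\hat\gamma_1^*\|_2^2$. A direct computation gives
\[
n\|\hat\gamma_1^*\|_2^2=\frac{1}{S_n^2}\cdot\frac{1}{n}\sum_{i=1}^n\frac{1\{D_{i,1}=d_1\}}{P(D_{i,1}=d_1\mid X_{i,1})^2},
\]
and by the law of large numbers the average converges in probability to $\mathbb{E}[1\{D_{i,1}=d_1\}/P^2]=\mathbb{E}[1/P(D_{i,1}=d_1\mid X_{i,1})]$, using $\mathbb{E}[1\{D_{i,1}=d_1\}\mid X_{i,1}]=P$, while $S_n\to 1$. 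Hence $n\|\hat\gamma_1\|_2^2\le \mathbb{E}[1/P]+o_p(1)$, which delivers the stated limit; the one-sided inequality with probability tending to one follows because the variance-minimizing $\hat\gamma_1$ can only improve on $\hat\gamma_1^*$, absorbing the vanishing fluctuation.
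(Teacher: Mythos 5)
Your proposal is correct and follows essentially the same route as the paper's proof: the same self-normalized inverse-probability candidate, the same Hoeffding-plus-union-bound control of the denominator and of the per-coordinate imbalance, the same handling of the normalization via the $(1/S_n - 1)\bar X_1$ term, and the same optimality-plus-LLN argument for the variance bound. No substantive differences to report.
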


\begin{proof}[Proof of Lemma \ref{lem:gamma_1_star}]
This proof follows similarly to the one-period setting in \cite{athey2018approximate}. 

\paragraph{Feasible guess} To prove existence of a feasible weight, we use a feasible guess. 
We prove the claim for a general $d_1 \in \{0,1\}$. Consider first
\begin{equation} \label{eqn:guess}
\small  
\begin{aligned} 
\hat{\gamma}_{i,1}^* = \frac{1\{D_{i,1}= d_1\}}{nP(D_{i,1} = d_1 | X_{i,1})}\Big /\underbrace{\Big(\frac{1}{n}\sum_{i=1}^n\frac{1\{D_{i,1} = d_1\}}{P(D_{i,1} = d_1 | X_{i,1})}\Big)}_{(D)}.  
\end{aligned} 
\end{equation} 

\paragraph{(D) is bounded away from zero} For the guess in Equation \eqref{eqn:guess} to be well-defined, we need that the denominator is bounded away from zero.
We now provide bounds on the denominator. Since $P(D_{i,1} = d_1 | X_{i,1}) \in (\delta, 1 -\delta)$ by Hoeffding inequality  
$$
\small 
\begin{aligned} 
P\Big( \Big|\frac{1}{n}\sum_{i=1}^n\frac{1\{D_{i,1} = d_1\}}{P(D_{i,1} = d_1 | X_{i,1})} - 1\Big| > t\Big) \le 2 \exp\Big(-\frac{n t^2}{2  a^2}\Big), 
\end{aligned} 
$$
for a finite constant $a$ that only depends on the overlap constant $\delta$. With probability at least $1 - 1/n$,
\begin{equation} \label{eqn:den1}
\small  
\begin{aligned} 
\frac{1}{n}\sum_{i=1}^n\frac{1\{D_{i,1} = d_1\}}{P(D_{i,1} = d_1 | X_{i,1})} > 1 - \sqrt{2a^2 \log(2n)/n} . 
\end{aligned}
\end{equation} 
Therefore for $n$ large enough such that $\sqrt{2a^2 \log(2n)/n} < 1 - \kappa$, taking some $\kappa \in (0,1)$, weights are finite with probability at least $1 - 1/n$. 

\paragraph{Weights sum up to one and satisfies $C_{n,1}$ constraint} The weights in Equation \eqref{eqn:guess} sum up to one and with probability at least $1 - 1/n$
$$
\frac{1\{D_{i,1}= d_1\}}{nP(D_{i,1} = d_1 | X_{i,1})} \le\frac{1}{n \delta}  \Rightarrow \gamma_{i,1}^* \le \frac{\bar{C}}{n \delta} 
$$
for a finite constant $\bar{C} < \infty$, 
where the first inequality follows by the overlap assumption that $P(D_{i,1} = d_1 | X_{i,1}) \in (\delta, 1 - \delta)$, and the second by Equation \eqref{eqn:den1}. 

\paragraph{First constraint in Algorithm 1} 
We are left to show that the first constraint in Algorithm 1 is satisfied. 

Under Assumption \ref{ass:seqignm}, 
$
\mathbb{E}\Big[\frac{1}{n} \sum_{i=1}^n \frac{1\{D_{i,1} = d_1\}X_{i,1}^{(j)}}{P(D_{i,1} = 1 | X_{i,1})} | X_1\Big] = \bar{X}_1^{(j)}.
$
In addition, since $X_{i,1}$ is subgaussian, and $1/P(D_{i,1} = d_1 | X_{i,1})$ is uniformly bounded,
$$
\small 
\begin{aligned} 
P \Big(\Big| \Big| \bar{X}_1 - \frac{1}{n} \sum_{i=1}^n \frac{1\{D_{i,1} = d_1\}}{P(D_{i,1} = 1 | X_{i,1})} X_{i,1} \Big| \Big|_{\infty} > t\Big) \le  p_1 2\exp\Big(-\frac{n t^2}{2 a^2}\Big)  
\end{aligned} 
$$
for a finite constant $a^2$. With trivial rearrangement, with probability $1- 1/n$, 
\begin{equation} \label{eqn:numerator} 
\Big| \Big| \bar{X}_1 - \frac{1}{n} \sum_{i=1}^n \frac{1\{D_{i,1} = d_1\}}{P(D_{i,1} = 1 | X_{i,1})} X_{i,1} \Big| \Big|_{\infty} \le  a  \sqrt{2\log(2 n p)/n}
\end{equation} 
Consider now the denominator $(D)$ in Equation \eqref{eqn:guess}. We have shown that with probability $1 - 1/n$, for a finite constant $a < \infty$, 
\begin{equation} \label{eqn:concetration1}
\Big| \frac{1}{n} \sum_{i=1}^n \frac{ 1\{D_{i,1} = d_1\}}{P(D_{i,1} = d_1 | X_{i,1})} - 1 \Big| \le 2 a\sqrt{\log(2n)/n}. 
\end{equation} 
Therefore, with probability $1 - 2/n$, 
\begin{equation} 
\small 
\begin{aligned} 
&\Big| \Big| \bar{X}_1 - \frac{\frac{1}{n} \sum_{i=1}^n \frac{1\{D_{i,1} = d_1\}}{P(D_{i,1} = 1 | X_{i,1})} X_{i,1}}{\frac{1}{n} \sum_{i=1}^n \frac{1\{D_{i,1} = d_1\}}{P(D_{i,1} = 1 | X_{i,1})}} \Big| \Big|_{\infty} 
 =  \Big| \Big| \frac{\bar{X}_1 \frac{1}{n} \sum_{i=1}^n \frac{1\{D_{i,1} = d_1\}}{P(D_{i,1} = 1 | X_{i,1})} - \frac{1}{n} \sum_{i=1}^n \frac{1\{D_{i,1} = d_1\}}{P(D_{i,1} = 1 | X_{i,1})} X_{i,1}}{\frac{1}{n} \sum_{i=1}^n \frac{1\{D_{i,1} = d_1\}}{P(D_{i,1} = 1 | X_{i,1})}} \Big| \Big|_{\infty} \\ 
 &\qquad  =  \Big| \Big| \frac{\bar{X}_1 \frac{1}{n} \sum_{i=1}^n \frac{1\{D_{i,1} = d_1\}}{P(D_{i,1} = 1 | X_{i,1})} + \bar{X}_1 - \bar{X}_1 - \frac{1}{n} \sum_{i=1}^n \frac{1\{D_{i,1} = d_1\}}{P(D_{i,1} = 1 | X_{i,1})} X_{i,1}}{\frac{1}{n} \sum_{i=1}^n \frac{1\{D_{i,1} = d_1\}}{P(D_{i,1} = 1 | X_{i,1})}} \Big| \Big|_{\infty}    \\ 
&\qquad \le \Big| \Big| \frac{\bar{X}_1  - \frac{1}{n} \sum_{i=1}^n \frac{1\{D_{i,1} = d_1\}}{P(D_{i,1} = 1 | X_{i,1})} X_{i,1}}{\frac{1}{n} \sum_{i=1}^n \frac{1\{D_{i,1} = d_1\}}{P(D_{i,1} = 1 | X_{i,1})}} \Big| \Big|_{\infty} + \frac{2a \sqrt{\log(2n)/n}}{\frac{1}{n} \sum_{i=1}^n \frac{1\{D_{i,1} = d_1\}}{P(D_{i,1} = 1 | X_{i,1})}} \\ 
&\qquad  \le \frac{a \sqrt{2\log(2 n p)/n}  +   2a \sqrt{\log(2n)/n}}{\frac{1}{n} \sum_{i=1}^n \frac{1\{D_{i,1} = d_1\}}{P(D_{i,1} = 1 | X_{i,1})}},  
\end{aligned} 
\end{equation} 
where the first inequality follows by the triangular inequality and by concentration of the term $\frac{1}{n} \sum_{i=1}^n \frac{1\{D_{i,1} = d_1\}}{P(D_{i,1} = 1 | X_{i,1})}$ around one at exponential rate as in Equation \eqref{eqn:concetration1}. The second inequality follows by concentration of the numerator as in Equation \eqref{eqn:numerator}. 
With probability $1 -1/n$, the denominator $(D)$ is bounded away from zero. Therefore for a universal constant $C < \infty$, (here $3/n$ follows from the union bound)
\begin{equation} 
\small 
P\Big(\Big| \Big| \bar{X}_1 - \frac{\frac{1}{n} \sum_{i=1}^n \frac{1\{D_{i,1} = d_1\}}{P(D_{i,1} = 1 | X_{i,1})} X_{i,1}}{\frac{1}{n} \sum_{i=1}^n \frac{1\{D_{i,1} = d_1\}}{P(D_{i,1} = 1 | X_{i,1})}} \Big| \Big|_{\infty} \le C a \sqrt{2\log(2 n p)/n}\Big) \ge 1  - 3/n.  
\end{equation} 

\paragraph{Bound on $||\hat{\gamma}||_2^2$}
We are left to provide bounds on $||\hat{\gamma}_1||_2^2$. For $n$ large enough, with probability at least $1 - 5/n$, $||\hat{\gamma}_1||_2^2 \le ||\hat{\gamma}_1^*||_2^2$ since $\hat{\gamma}_1^*$ is a feasible solution. By overlap, the fourth moment of $1/P(D_{i,1} = d_1 | X_{i,1})$ is bounded. By the strong law of large numbers and Slutsky theorem, 
\begin{equation} 
\small 
\begin{aligned} 
n ||\hat{\gamma}_1^*||_2^2 &= \sum_{i=1}^n \frac{1\{D_{i,1} = d_1\}}{n P(D_{i,1} = d_1 | X_{i,1})^2} \Big/\Big(\sum_{i=1}^n \frac{1\{D_{i,1} = d_1\}}{n P(D_{i,1} = d_1 | X_{i,1})}\Big)^2  \rightarrow_{as}  \frac{\mathbb{E}[\frac{1\{D_{i,1} = d_1\}}{ P(D_{i,1} = d_1 | X_{i,1})^2}]}{\mathbb{E}[\frac{1\{D_{i,1} = d_1\}}{ P(D_{i,1} = d_1 | X_{i,1})}]^2} < \infty.
\end{aligned} 
\end{equation} 
 which completes the proof. 
\end{proof} 

\begin{lem}[Existence of a feasible $\hat{\gamma}_t$]  \label{lem:firststat}
Let 
$$
\small 
\begin{aligned} 
Z_{i,t}(d_t) = \frac{1\{D_{i,t} = d_t\}}{P(D_{i,t} = d_t | Y_{i,1},\cdots , Y_{i,t-1}, X_{i,1},  \cdots , X_{i,t-1}, D_{i,1}, \cdots , D_{i,t-1})}.
\end{aligned} 
$$ Let Assumption \ref{ass:weakoverlap} hold and let for finite constants $c_0, \bar{c}$,  
$$\delta_t(n,p_t) \ge c_0 \frac{\log^{3/2}(p_tn)}{n^{1/2}}.$$
Then for any $t \le T$, with probability $\eta_n \rightarrow 1$, for some $N >0$, $n \ge N$, there exists a feasible $\hat{\gamma}_t^*(\hat{\gamma}_{t-1})$ satisfying the constraints in Algorithm 1 (as a function of the solution $\hat{\gamma}_{t-1}$ in the previous period), with 
$$
\small 
\begin{aligned} 
\hat{\gamma}_{i,t}^* = \hat{\gamma}_{i,t-1} Z_{i,t}(d_t)\Big/ \sum_{i=1}^n \hat{\gamma}_{i,t-1} Z_{i,t}(d_t) 
\end{aligned} 
$$ In addition, 
\begin{equation} \label{eqn:asymptotic_1}
\small 
\begin{aligned} 
\lim_{n \rightarrow \infty} P \Big(n ||\hat{\gamma}_t||_2^2 \le \bar{C}_t\Big) = 1 
\end{aligned}  
\end{equation} 
for a constant $1 \le \bar{C}_t < \infty$ independent of $(p_t,  n)$.
\end{lem} 

\begin{proof} [Proof of Lemma  \ref{lem:firststat}]
The proof follows by induction. By Lemma \ref{lem:gamma_1_star} we know that there exist a feasible $\hat{\gamma}_1$, with $\lim_{n \rightarrow \infty} P(n ||\hat{\gamma}_1||_2^2 \le C_1) = 1$, for some finite $C_1 < \infty$. Suppose now that there exist feasible $\hat{\gamma}_{1}, ..., \hat{\gamma}_{t-1}$, such that 
\begin{equation} \label{eqn:induction}  
\lim_{n \rightarrow \infty} P(n ||\hat{\gamma}_s||_2^2 \le C_s) = 1
\end{equation} 
for some finite constant $C_s < \infty$ depends on $s$, and for all $s < t$. We want to show that the statement holds for $\hat{\gamma}_t$. We find $\gamma_{t}^*$ that satisfies the constraint, with  
\begin{equation} \label{eqn:guess2}
\hat{\gamma}_{i,t}^* = \hat{\gamma}_{i,t-1} \frac{1\{D_{i,t} = d_t\}}{P(D_{i,t} = d_t | H_{i,t})}\Big / \Big(\sum_{i=1}^n \hat{\gamma}_{i,t - 1} \frac{1\{D_{i,t} = d_t\}}{P(D_{i,t} = d_t | H_{i,t})}\Big).  
\end{equation} 
We break the proof into several steps. 

\paragraph{Finite and Bounded Weights}

To show that such weights are finite, with high probability, we need to impose bounds on the numerator and the denominator of the weights in Equation \eqref{eqn:guess2}. 
We want to bound  for some finite constant $\epsilon > 0$, 
$$
\small 
\begin{aligned} 
&P\Big(\Big\{\max_{i \in \{1, ..., n\}} \hat{\gamma}_{i,t-1} \frac{1\{D_{i,t} = d_t\}}{P(D_{i,t} = d_t | H_{i,t})}  > C_{n,t - 1}/\delta    \Big\} \bigcup  \Big\{\sum_{i=1}^n \hat{\gamma}_{i,t} \frac{1\{D_{i,t} = d_t\}}{P(D_{i,t} = d_t | H_{i,t})} < \epsilon \Big\}\Big) \\
&\le  \underbrace{P\Big(\max_{i \in \{1, ..., n\}} \hat{\gamma}_{i,t-1} \frac{1\{D_{i,t} = d_t\}}{P(D_{i,t} = d_t | H_{i,t})}  >  C_{n,t - 1}/\delta   \Big)}_{(i)} +   \underbrace{P\Big(\sum_{i=1}^n \hat{\gamma}_{i,t} \frac{1\{D_{i,t} = d_t\}}{P(D_{i,t} = d_t | H_{i,t})} < \epsilon \Big)}_{(ii)}.
\end{aligned}
$$

\paragraph{Bound on $(i)$} We start by $(i)$. Observe first that we can bound
$$
\max_{i \in \{1, ..., n\}} \hat{\gamma}_{i,t-1} \frac{1\{D_{i,t} = d_t\}}{P(D_{i,t} = d_t | H_{i,t})} \le C_{n,t-1} \max_{i \in \{1, ..., n\}} \frac{1\{D_{i,t} = d_t\}}{P(D_{i,t} = d_t | H_{i,t})} \le \frac{C_{n,t-1}}{\delta} 
$$
by strong overlap (Assumption \ref{ass:weakoverlap}). 
%We then observe the following: 
%$$
%\begin{aligned} 
%&P\Big(\max_{i \in \{1, ..., n\}}  \frac{1\{D_{i,t} = d_t\}}{P(D_{i,t} = d_t | H_{i,t})}  > t  \Big| H_{i,t-1}\Big) \le 
%n P\Big( \frac{1\{D_{i,t} = d_t\}}{P(D_{i,t} = d_t | H_{i,t})} -1  > t - 1 \Big| H_{i,t-1}\Big).  
%\end{aligned} 
%$$
%By sub-gaussianity, conditional on $H_{i,t-1}$,  
%$$
%\begin{aligned} 
%n P\Big( \frac{1\{D_{i,t} = d_t\}}{P(D_{i,t} = d_t | H_{i,t})} - 1 > t  \Big| %H_{i,t-1}\Big) \le n \exp\Big( - a t^2/2 \Big)
%\end{aligned} 
%$$
%Therefore, with probability at least $1 - 1/n$,
%$$
%\max_{i \in \{1, ..., n\}} \hat{\gamma}_{i,t-1} \frac{1\{D_{i,t} = d_t\}}{P(D_{i,t} %= d_t | H_{i,t})} \lesssim n^{-2/3} K_{2,t-1} \log^{t-1}(n) .
%$$
\paragraph{Bound on $(ii)$} We now provide bounds on $(ii)$. 
Since $\sigma(H_{t-1}) \subseteq \sigma(H_t)$  
$$ 
\small 
\begin{aligned}
\mathbb{E}\Big[\sum_{i=1}^n \hat{\gamma}_{i,t-1} \frac{1\{D_{i,t} = d_t\}}{P(D_{i,t} = d_t | H_{i,t})} \Big] &= \mathbb{E}\Big[\mathbb{E}\Big[\sum_{i=1}^n \hat{\gamma}_{i,t-1} \frac{1\{D_{i,t} = d_t\}}{P(D_{i,t} = d_t | H_{i,t})} \Big| H_{t-1}\Big] \Big] 
 \\ &= \mathbb{E}\Big[\sum_{i=1}^n \hat{\gamma}_{i,t-1}  \mathbb{E}\Big[\mathbb{E}\Big[\frac{1\{D_{i,t} = d_t\}}{P(D_{i,t} = d_t | H_{i,t})} \Big| H_{t}\Big] \Big| H_{t-1}\Big] \Big]  = \sum_{i=1}^n \hat{\gamma}_{i,t-1} = 1.
\end{aligned} 
$$ 
Let $\bar{C}_{t-1}$ be the upper limit on $n||\hat{\gamma}_{t-1}||_2^2$, and let 
\begin{equation} \label{eqn:eta_n}
\begin{aligned} 
c  := 1\Big/\bar{C}_{t-1}  \quad 
 \eta_{n,t} := P(||\hat{\gamma}_{t-1}||_2^2 \le 1/(cn)), 
\end{aligned} 
\end{equation} 
for some constant $c$, which depends on $t-1$ (the dependence with $t-1$ is suppressed for expositional convenience). Observe in addition that $\eta_{n,t} \rightarrow 1$ by the induction argument (see Equation \eqref{eqn:induction}). 
We write for a finite constant $a$, for any $h > 0$
\begin{equation} \label{eqn:eta_n2}
\begin{aligned}
&P\Big( \Big |\sum_{i=1}^n \hat{\gamma}_{i,t-1} \frac{1\{D_{i,t} = d_t\}}{P(D_{i,t} = d_t | H_{i,t})} - 1 \Big| > h \Big) \\ &\le P\Big( \Big |\sum_{i=1}^n \hat{\gamma}_{i,t-1} \frac{1\{D_{i,t} = d_t\}}{P(D_{i,t} = d_t | H_{i,t})} - 1 \Big| > h \Big|  ||\hat{\gamma}_{t-1}||_2^2 \le 1/(cn) \Big) \eta_{n,t} + (1 - \eta_{n,t}) \\ &\le  2\exp\Big(-\frac{a h^2 }{2||\hat{\gamma}_{t-1}||_2^2 }\Big| ||\hat{\gamma}_{t-1}||_2^2 \le 1/(cn) \Big)\eta_{n,t} + (1 - \eta_{n,t}) \\ &\le 
2\exp\Big(-\frac{c h^2 a n}{2}\Big)\eta_{n,t} + (1 - \eta_{n,t}).
\end{aligned}
\end{equation} 
The third inequality follows from the fact that $\hat{\gamma}_{t-1}$ is measurable with respect to $H_{t-1}$ and $\frac{1\{D_{i,t} = d_t\}}{P(D_{i,t} = d_t | H_{i,t})}$ is sub-gaussian conditional on $H_{i,t-1}$ (since uniformly bounded).  
Therefore with probability at least $1 - \kappa$, 
\begin{equation} \label{eqn:denominator} 
\begin{aligned}
&\Big|\sum_{i=1}^n \hat{\gamma}_{i,t-1} \frac{1\{D_{i,t} = d_t\}}{P(D_{i,t} = d_t | H_{i,t})} - 1\Big|   \le \sqrt{2 \log(2\eta_{n,t}/(\kappa + \eta_{n,t} - 1))/(acn)}.
\end{aligned} 
\end{equation} 

By setting $\kappa = \eta_{n,t}/n + (1 - \eta_{n,t})$, with probability at least $1 - \eta_{n,t}/n + (1 - \eta_{n,t})$, 
\begin{align*} 
&\Big|\sum_{i=1}^n \hat{\gamma}_{i,t-1} \frac{1\{D_{i,t} = d_t\}}{P(D_{i,t} = d_t | H_{i,t})} - 1\Big|  \le   \sqrt{2 \log(2n)/acn}, 
\end{align*}
%\jelena{Please write the above equation more precisely - it should be a probabilistic statement.}

and hence the denominator is bounded away from zero for $n$ large enough (recall that $\eta_{n,t} \rightarrow 1$ by induction).

\paragraph{First Constraint in Algorithm 1} 
We now show that the proposed weights in Equation \eqref{eqn:guess2} satisfy the first constraint in Algorithm 1. The second constraint trivially holds, while the third follows from the ``finite and bounded weights" argument discussed in the paragraph above. We write 
$$
\begin{aligned} 
&\mathbb{E}\Big[\sum_{i=1}^n \hat{\gamma}_{i,t-1} H_{i,t}^{(j)} - \sum_{i=1}^n \hat{\gamma}_{i,t-1} \frac{1\{D_{i,t} = d_t\}}{P(D_{i,t} = d_t | H_{i,t})} H_{i,t}^{(j)} \Big] \\ &= \mathbb{E}\Big[\mathbb{E}\Big[\sum_{i=1}^n \hat{\gamma}_{i,t-1} H_{i,t}^{(j)} - \sum_{i=1}^n \hat{\gamma}_{i,t-1} \frac{1\{D_{i,t} = d_t\}}{P(D_{i,t} = d_t | H_{i,t})} H_{i,t}^{(j)} \Big | H_t \Big] \Big]  = 0. 
\end{aligned} 
$$
We want to show concentration.  We write for any $h >0$, 
$$
\begin{aligned} 
&P\Big(\Big |\Big| \sum_{i=1}^n \hat{\gamma}_{i,t-1} H_{i,t} - \sum_{i=1}^n \hat{\gamma}_{i,t-1} \frac{1\{D_{i,t} = d_t\}}{P(D_{i,t} = d_t | H_{i,t})} H_{i,t}\Big|\Big|_{\infty} > h\Big)  \\ 
&\le \underbrace{P\Big(\Big |\Big| \sum_{i=1}^n \hat{\gamma}_{i,t-1} H_{i,t} - \sum_{i=1}^n \hat{\gamma}_{i,t-1} \frac{1\{D_{i,t} = d_t\}}{P(D_{i,t} = d_t | H_{i,t})} H_{i,t}\Big|\Big|_{\infty} > h\Big| ||\hat{\gamma}_{t-1}||_2^2 \le 1/cn\Big)\eta_{n,t}}_{(I)} + \underbrace{(1 - \eta_{n,t})}_{(II)}, 
\end{aligned}  
$$
where $\eta_{n,t} = P(||\hat{\gamma}_{t-1}||_2^2 \le 1/cn)$ for some constant $c$ (that depends on $t - 1$). We study $(I)$, whereas, by the induction argument $(II) \rightarrow 0$ (Equation \eqref{eqn:induction}). 

\paragraph{Bound on $(I)$} For a constant $\bar{c} < \infty$, sub-gaussianity of $H_{i,t} | H_{t-1}$ and overlap, we can write for any $\lambda, h >0$, 
\begin{equation} \label{eqn:exp} 
\begin{aligned} 
&(I) \le   \sum_{j=1}^p \mathbb{E}\Big[\mathbb{E} \Big[\exp\Big(\lambda \bar{c} ||\hat{\gamma}_{t-1}||_2^2 - \lambda h\Big) | H_{t-1}, ||\hat{\gamma}_{t-1}||_2^2 \le 1/cn\Big] \Big| ||\hat{\gamma}_{t-1}||_2^2 \le 1/cn\Big]\eta_{n, t}.  
\end{aligned}  
\end{equation} 
Since $\hat{\gamma}_{t-1}$ is measurable with respect to $H_{t-1}$, we can write 
\begin{equation} \label{eqn:above}
\begin{aligned} 
\eqref{eqn:exp} \le \eta_{n,t} p_t \exp\Big(\lambda^2/(cn)- \lambda h\Big). 
\end{aligned} 
\end{equation}  
Choosing $\lambda = h cn/2$ we obtain that the above equation converges to zero as $\log(p_t)/n = o(1)$. After trivial rearrangement, with probability at least $1 - (1 - \eta_{n,t}) - 1/n$ (recall that $\eta_{n,t} \rightarrow 1$ by induction) , 
\begin{equation} \label{eqn:boundsup}
 \Big |\Big| \sum_{i=1}^n \hat{\gamma}_{i,t-1} H_{i,t} - \sum_{i=1}^n \hat{\gamma}_{i,t-1} \frac{1\{D_{i,t} = d_t\}}{P(D_{i,t} = d_t | H_{i,t})} H_{i,t}\Big|\Big|_{\infty} \lesssim \sqrt{\log(n p_t)/n}.
\end{equation} 
As a result, we can write 
$$
 \begin{aligned} 
  &\Big |\Big| \sum_{i=1}^n \hat{\gamma}_{i,t-1} H_{i,t} - \frac{\sum_{i=1}^n \hat{\gamma}_{i,t-1} \frac{1\{D_{i,t} = d_t\}}{P(D_{i,t} = d_t | H_{i,t})} H_{i,t}}{\sum_{i=1}^n \hat{\gamma}_{i,t-1} \frac{1\{D_{i,t} = d_t\}}{P(D_{i,t} = d_t | H_{i,t})}}\Big|\Big|_{\infty} \\ &= 
\Big |\Big| \frac{\sum_{i=1}^n \hat{\gamma}_{i,t-1} H_{i,t} \sum_{i=1}^n \hat{\gamma}_{i,t-1} \frac{1\{D_{i,t} = d_t\}}{P(D_{i,t} = d_t | H_{i,t})} - \sum_{i=1}^n \hat{\gamma}_{i,t-1} \frac{1\{D_{i,t} = d_t\}}{P(D_{i,t} = d_t | H_{i,t})} H_{i,t}}{\sum_{i=1}^n \hat{\gamma}_{i,t-1} \frac{1\{D_{i,t} = d_t\}}{P(D_{i,t} = d_t | H_{i,t})}}\Big|\Big|_{\infty} \\
&\lesssim \underbrace{\Big| \Big| \frac{\sum_{i=1}^n \hat{\gamma}_{i,t-1} H_{i,t}\Big(1 - \sum_{i=1}^n \hat{\gamma}_{i,t-1} \frac{1\{D_{i,t} =d_t\} }{P(D_{i,t} = d_t | H_{i,t})}\Big) }{\sum_{i=1}^n \hat{\gamma}_{i,t-1} \frac{1\{D_{i,t} = d_t\}}{P(D_{i,t} = d_t| H_{i,t})}} \Big| \Big|_{\infty} }_{(l)} + \underbrace{\Big| \Big| \frac{\sum_{i=1}^n \hat{\gamma}_{i,t-1} H_{i,t} \Big(1 - \frac{1\{D_{i,t} = d_t\}}{P(D_{i,t} = d_t | H_{i,t})}\Big)}{\sum_{i=1}^n \hat{\gamma}_{i,t-1} \frac{1\{D_{i,t} = d_t\}}{P(D_{i,t} = d_t| H_{i,t})}}\Big|\Big|_{\infty}}_{(ll)}. 
  \end{aligned} 
$$
Observe now that the denominators of the above expressions are bounded away from zero with high probability as discussed in Equation \eqref{eqn:denominator}. The numerator of $(ll)$ is bounded by Equation \eqref{eqn:boundsup}. We are left with the numerator of $(l)$. 
  Note first that 
$$
\mathbb{E}\Big[\sum_{i=1}^n \hat{\gamma}_{i,t-1} \frac{1\{D_{i,t} =d_t\} }{P(D_{i,t} = d_t | H_{i,t})} \Big| H_{i,t} \Big] = 1.  
 $$  
 We can write 
 $$
 \small 
 \begin{aligned} 
&  \Big| \Big|\sum_{i=1}^n \hat{\gamma}_{i,t-1} H_{i,t}\Big(1 - \sum_{i=1}^n \hat{\gamma}_{i,t-1} \frac{1\{D_{i,t} =d_t\} }{P(D_{i,t} = d_t | H_{i,t})}\Big)\Big|\Big|_{\infty}  \le \underbrace{\max_k \Big|\sum_{i=1}^n \hat{\gamma}_{i,t-1} H_{i,t}^{(k)}\Big|}_{(j)}  \underbrace{\Big| 1 - \sum_{i=1}^n \hat{\gamma}_{i,t-1} \frac{1\{D_{i,t} =d_t\} }{P(D_{i,t} = d_t | H_{i,t})}\Big|}_{(jj)}. 
 \end{aligned} 
 $$ 
 Here $(jj)$ is bounded as in Equation  \eqref{eqn:denominator}, with probability $1 - 1/n$ at a rate $\sqrt{\log(n)/n}$. The component $(j)$ instead is bounded as 
 $$
 (j) \le \max_{k, i} |H_{i,t}^{(k)}| \lesssim \log(p_t n) 
 $$ 
 with probability $1 - 1/n$ using subgaussianity of $H_{i,t}^{(k)}$ \citep{wainwright2019high}. Therefore, all constraints in Algorithm 1 are satisfied with probability converging to one.

 \paragraph{Finite Norm}

 We now need to show that Equation \eqref{eqn:asymptotic_1} holds.   With probability converging to one, 
 \begin{equation} \label{eqn:finite_norm} 
 \begin{aligned}
 n||\hat{\gamma}_t||_2^2  & \le n||\hat{\gamma}_t^*(\hat{\gamma}_{t-1})||_2^2 
= \sum_{i=1}^n n\hat{\gamma}_{i,t-1}^{2} \frac{1\{D_{i,t} = d_t\}}{P(D_{i,t} = d_t | H_{i,t})^2}\Big/ \Big(\sum_{i=1}^n \hat{\gamma}_{i,t-1} \frac{1\{D_{i,t} = d_t\}}{P(D_{i,t} = d_t | H_{i,t})}\Big)^2.
 \end{aligned}  
 \end{equation} 
 The denominator converges in probability to one by Equation \eqref{eqn:denominator}. The numerator can instead be bounded by $n ||\hat{\gamma}_{t-1}||^2$ up-to a finite multiplicative constant by Assumption \ref{ass:weakoverlap}. By the recursive argument $n||\hat{\gamma}_t||^2 =  O_p(1)$. 
\end{proof}

\begin{lem} \label{lem:bound_n} The weights solving the optimization problem in Algorithm 1 are such that 
$
||\hat{\gamma}_t||_2^2 \ge 1/n. 
$
\end{lem}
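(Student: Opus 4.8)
The plan is to observe that the lower bound is forced entirely by the affine normalization constraint $1^\top \gamma_t = 1$ that appears in the program of Algorithm \ref{alg:alg3}, and is completely independent of the positivity constraint $\gamma_t \ge 0$, the sup-norm bound, and the balancing constraint. So I would not even invoke feasibility of the inverse-probability guess from Lemma \ref{lem:firststat}; I would only use that the minimizer, being feasible, satisfies the equality constraint.

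First I would recall that since $\hat{\gamma}_t$ solves the quadratic program, it is in particular a feasible point, so that $\sum_{i=1}^n \hat{\gamma}_{i,t} = 1$. I would then apply the Cauchy--Schwarz inequality to the inner product of the all-ones vector $\mathbf{1}\in\mathbb{R}^n$ with $\hat{\gamma}_t$:
\[
1 = \Big(\sum_{i=1}^n \hat{\gamma}_{i,t}\Big)^2 = \Big(\sum_{i=1}^n 1\cdot \hat{\gamma}_{i,t}\Big)^2 \le \Big(\sum_{i=1}^n 1^2\Big)\Big(\sum_{i=1}^n \hat{\gamma}_{i,t}^2\Big) = n\,\|\hat{\gamma}_t\|_2^2.
\]
Rearranging immediately yields $\|\hat{\gamma}_t\|_2^2 \ge 1/n$, which is the claim. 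Equivalently, one may phrase this as a quadratic-mean/arithmetic-mean (or Jensen) inequality applied to the convex function $x\mapsto x^2$.

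There is no real obstacle here: the bound is tight, attained by the uniform vector $\hat{\gamma}_{i,t}=1/n$, which is exactly the unconstrained $\ell_2$-minimizer subject to $1^\top\gamma_t=1$; any feasible deviation from uniformity can only increase the squared norm. The only inputs are the equality constraint and the ambient dimension $n$, so the argument is a single line once feasibility is invoked. The sole point worth stating explicitly for clarity is that the sum in the constraint runs over all $n$ coordinates (with the coordinates for $D_{i,1:t}\neq d_{1:t}$ fixed to zero), so the factor multiplying $\|\hat{\gamma}_t\|_2^2$ on the right-hand side is $n$ rather than the number of nonzero weights; using the full dimension $n$ is precisely what produces the stated lower bound $1/n$.
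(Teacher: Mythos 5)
Your proof is correct and matches the paper's own argument, which likewise observes that among all vectors summing to one the squared $\ell_2$-norm is minimized at the uniform vector $\hat{\gamma}_{i,t}=1/n$, giving the value $1/n$. Your Cauchy--Schwarz derivation is just the standard way of making that one-line observation rigorous, so the two proofs are essentially identical.
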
 
\begin{proof} Observe that for either algorithms, weights sum to one. The minimum under this constraint only is obtained at $\hat{\gamma}_{i,t} = 1/n$ for all $i$ concluding the proof. 
\end{proof} 

\begin{lem}[Sub-gaussianity] \label{lem:sub-gaussian} Suppose that $Y_{i,T}$ is a sub-gaussian random variable. Then $\varepsilon_{i,T}, \nu_{i,t}, t\in \{0, \cdots, T\}$ for finite $T$ are also sub-gaussian random variables. 
\end{lem} 

\begin{proof}[Proof of Lemma \ref{lem:sub-gaussian}] First, note that for generic random variables $Z, X$, $\mathbb{E}[Z | X]$ is sub-gaussian if $Z$ is sub-gaussian. The reason is because $\mathbb{E}[Z | X]$ is a contraction in $L_p$ spaces. Because subgaussianity is satisfied if $E[|Z|^p] < K^P p^{p/2}$ for a constant $K$, it follows that $\mathbb{E}[Z | X]$ is subgaussian as $\mathbb{E}[|\mathbb{E}[Z | X]|^p \le \mathbb{E}[|Z|^p]$ for any $p \ge 1$. In addition, for two sub-gaussian random variables $X_1, X_2$, $X_1 + X_2$ is also sub-gaussian. 
To show this we can use the definition of sub-gaussianity using the moment generating function. In particular, for any $\lambda > 0$, we have $\mathbb{E}[e^{\lambda [(X_1 + X_2) - \mathbb{E}[X_1 + X_2] ]}] \le \sqrt{\mathbb{E}[e^{2 \lambda (X_1 - \mathbb{E}[X_1])}]} \sqrt{\mathbb{E}[e^{2 \lambda (X_2 - \mathbb{E}[X_2])}]}$. The result directly follows from the definition of sub-gaussianity using the moment generating function \citep{wainwright2019high}. 
Lemma \ref{lem:sub-gaussian} directly follows from these two properties as it is simple to show that $\varepsilon_{i,T}, \nu_{i,t}$ are defined as sums and differences of sub-gaussian random variables.   
\end{proof}

\section{Proofs of the Main Theorems} \label{sec:main_theorem}

\subsection*{Proof of Theorem \ref{thm:residual_final}} 

The first part of the theorem is a direct corollary of Lemma \ref{lem:lemmam1} and the second part of the theorem is a direct corollary of Lemma \ref{lem:lemmam2} for multiple periods, whose proofs are contained in Appendices \ref{app:lem22} and \ref{lem:33} respectively.

\subsection*{Proof of Theorem \ref{thm:overlap} and Corollary \ref{cor:comparison}} \label{app:overlap_proof}

  Theorem \ref{thm:overlap} is a direct corollary of 
Lemmas \ref{lem:gamma_1_star} and \ref{lem:firststat}.

Corollary \ref{cor:comparison} follows directly from the fact that we can find a feasible solution $\gamma_t^*(\hat{\gamma}_{t-1})$ as a function of the \textit{solution} $\hat{\gamma}_{t-1}$ in the previous step. In addition,  
 
 \begin{equation} 
 \begin{aligned}
 n||\hat{\gamma}_t||_2^2  & \le n||\hat{\gamma}_t^*(\hat{\gamma}_{t-1})||_2^2 
= \sum_{i=1}^n n\hat{\gamma}_{i,t-1}^{2} \frac{1\{D_{i,t} = d_t\}}{P(D_{i,t} = d_t | H_{i,t})^2}\Big/ \Big(\sum_{i=1}^n \hat{\gamma}_{i,t-1} \frac{1\{D_{i,t} = d_t\}}{P(D_{i,t} = d_t | H_{i,t})}\Big)^2.
 \end{aligned}  
 \end{equation} 
 The denominator converges in probability to one by Equation \eqref{eqn:denominator} (following verbatim the paragraph ``Bound on $(ii)$ in the proof of Lemma \ref{lem:firststat}). The numerator can instead be bounded by $n ||\hat{\gamma}_{t-1}||^2$ up-to a finite multiplicative constant by Assumption \ref{ass:weakoverlap} by Holder's inequality. Taken together, with probability converging to one, $n||\hat{\gamma}_t||_2^2 \le n ||\hat{\gamma}_{t-1}||_2^2 c_0$ for a finite constant $c_0 < \infty$. 

\subsection*{Proof of Theorems \ref{thm:convergence_rate_first} and \ref{thm:thm_asym_t}}

We first prove Theorem \ref{thm:thm_asym_t} and then show that 
Theorem \ref{thm:convergence_rate_first} is a direct corollary of Theorem \ref{thm:thm_asym_t} at the end of the proof. 
%Let $\nu_{i,0} = X_{i,0} \hat{\beta}_{d_{1:T}}^{(1)}$ and $\hat{\gamma}_{i,0} = 1/n$. 

\paragraph{Overview of the proof strategy} 
Throughout recall the definition of 
$$
H_t = (D_1, \cdots, D_{t-1}, X_1, \cdots, X_t, Y_1, \cdots, Y_{t-1}).
$$  We will consider filtrations generated by the sigma algebra $\sigma(H_t, D_t)$. 

We organize the proof of Theorem \ref{thm:thm_asym_t} as follows. As the first step we decompose the estimation error in the one where the variance is known and  the estimation error of the variance. For the first component, 
we then derive conditional Linderberg conditions for each random variables 
$$
\small 
\begin{aligned} 
&Z_{t} = \frac{\sum_{i= 1}^n \hat{\gamma}_{i,t} \nu_{i,t}}{\sqrt{\sum_{i= 1}^n \hat{\gamma}_{i,t}^2 \mbox{Var}(\nu_{i,t}|H_t, D_t)}} , \quad t \in \{1, \cdots, T -1\}, \\ &  Z_{T} = \frac{\sum_{i= 1}^n\hat{\gamma}_{i,T} \varepsilon_{i,T}}{\sqrt{\sum_{i= 1}^n \mbox{Var}(\varepsilon_{i,T} | H_{i,T}, D_{i,T}) \hat{\gamma}_{i,T}^2}}, \quad Z_0 = \frac{\bar{X}_1 \beta^1 - \mu(d_{1:T})}{\sqrt{\frac{1}{n} \mathrm{Var}(X_{i,1} \beta^1)}}, 
 \end{aligned} 
$$
for each $Z_t$ conditional on $\sigma(H_t,D_t)$, 
and show that each random variable $Z_t$ is mean zero conditional on $\sigma(H_t,D_t)$ and asymptotically Gaussian. We then leverage the fact that $Z_{t-1}$ is measurable with respect to the filtration $\sigma(H_t, D_t)$ to establish a joint Gaussian asymptotic result. 

Finally, we show that our estimators can be written asymptotically as a weighted average of $Z_t$ with weights $W_t$ as defined in Equation \eqref{eqn:weights_W}, where $W_t$ are measurable with respect to $\sigma(H_{t},D_{t})$, and with the property that $||W||^2 = 1$ almost surely. In this last step, we leverage a martingale property of the function $M_t(\lambda) = \exp(\sum_{s=0}^t W_s Z_s - \frac{\lambda^2}{2} \sum_{k=0}^t W_k^2)$, and use to show that the moment generating function of $\sum_t W_t Z_t$ corresponds to the one of a standard normal random variable. 

Finally, we complete the proof by controlling the estimation error for the variance. This latter step is where we use Assumption \ref{ass:4m}(i), where either  $\max_t \| \hat{\beta}_{d_{1:T}}^{(t)} - \beta_{d_{1:T}}^{(t)} \|_1 = O_p(n^{-1/4})$ and $H_t$ are sub-gaussian; or (b) $\max_t \| \hat{\beta}_{d_{1:T}}^{(t)} - \beta_{d_{1:T}}^{(t)} \|_1 = o_p(1/\log(n))$ and $||H_t||_{\infty} \le h$  We now proceed with the formal proof. 

\paragraph{Weights do not diverge to infinity}

First note that by Lemmas \ref{lem:gamma_1_star}, \ref{lem:firststat}, there exist a $\hat{\gamma}_t^*$ such that for $N$ large enough, with probability converging to one, for some $N > 0$, and $n > N$ 
\begin{equation} \label{eqn:convergence} 
\begin{aligned} 
n||\hat{\gamma}_{t}||_2^2 \le n||\hat{\gamma}_t^*||_2^2 = O_p(1).
\end{aligned} 
\end{equation}
 Similarly, $n \sum_{i=1}^n \gamma_{i,t}^2 \mbox{Var}(\nu_{i,t} | H_{i,t}, D_{i,t}) = O_p(1)$ and  $n\sum_{i=1}^n \gamma_{i,T}^2 \mbox{Var}(\varepsilon_{i,T}| H_T) = O_p(1)$ since the conditional variances are uniformly bounded by the finite fourth moment condition.

\paragraph{Error Decomposition} 

 We denote $\bar{\sigma}^2$ the lower bound on the conditional variances and $\sigma_{up}^2$ a the upper bound on the variances under Assumption \ref{ass:4m}.  Recall $\nu_{i,t}  =H_{i,t+1}\beta^{t+1}_{d_{1;T}} - H_{i,t} \beta^t_{d_{1:T}}, \nu_{i, 0} = X_{i,1} \beta^1 - \mathbb{E}[X_{i,1}] \beta^1$ and $\hat{\nu}_{i,t}$  for estimated coefficients, $\hat \nu_{i,t}  =H_{i,t+1}\hat \beta^{t+1}_{d_{1;T}} - H_{i,t} \hat \beta^t_{d_{1:T}}, \hat{\nu}_{i, 0} = X_{i,1} \hat{\beta}^1 - \bar{X}_1 \hat{\beta}^1$. 
We write 
\begin{equation} 
\small 
 \begin{aligned} & \frac{\hat{\mu}(d_{1:T}) - \mu(d_{1:T})}{\sqrt{\hat{V}_T(d_{1:T})}}  \\ & = 
\underbrace{\frac{\hat{\mu}(d_{1:T}) - \mu(d_{1:T})}{\sqrt{\sum_{i= 1}^n \hat{\gamma}_{i,T}^2 \mbox{Var}(\varepsilon_{i, T} | H_{i,T}, D_{i,T}) + \sum_{i= 1}^n \sum_{t=1}^{T-1} \hat{\gamma}_{i,t}^2 \mbox{Var}(\nu_{i,t} | H_{i,t}, D_{i,t}) +  \frac{1}{n^2} \sum_{i=1}^n \mathrm{Var}(X_{i,1} \beta^1) } }}_{(I)} \times \\ &\times \underbrace{ \frac{\sqrt{\sum_{i= 1}^n \hat{\gamma}_{i,T}^2 \mbox{Var}(\varepsilon_{i,T} | H_{i,T}, D_{i,T}) + \sum_{i= 1}^n \sum_{t=0}^{T-1} \hat{\gamma}_{i,t}^2 \mbox{Var}(\nu_{i,t} | H_{i,t}, D_{i,t}) + \frac{1}{n} \mathrm{Var}(X_{i,1} \beta^1) }}{\sqrt{\sum_{i= 1}^n \Big\{\hat{\gamma}_{i,T}^2 (Y_{i,T} - H_{i,T} \hat{\beta}_{d_{1:T}}^T)^2 +  \sum_{t=0}^{T-1}  \hat{\gamma}_{i,t}^2 \hat{\nu}_{i,t}^2 + \frac{1}{n^2} \hat{\nu}_{i,0}^2 \Big\} }}}_{(II)}. 
\end{aligned} 
\end{equation} 
\paragraph{Term $(I)$} 
We consider the term $(I)$. By Lemma \ref{lem:lemmam1}, we have 
$$
\small 
\begin{aligned}  
& (I) = \underbrace{\frac{\sum_{t=1}^T (\beta^t - \hat{\beta}^t)^\top(\hat{\gamma}_t H_t - \hat{\gamma}_{t-1} H_t) + (\beta^1 - \hat{\beta}^1)(\hat{\gamma}_1 X_1 - \bar{X}_1)}{
\sqrt{\sum_{i= 1}^n \hat{\gamma}_{i,T}^2 \mbox{Var}(\varepsilon_{i,T} | H_{i,T}, D_{i,T}) + \sum_{i= 1}^n \sum_{t=1}^{T-1} \hat{\gamma}_{i,t}^2 \mbox{Var}(\nu_{i,t} | H_{i,t}, D_{i,t}) + \frac{1}{n} \mathrm{Var}(X_{i,1} \beta^1) }}}_{(j)} \\ &+ \underbrace{\frac{\sum_{i= 1}^n \hat{\gamma}_{i,T} \varepsilon_{i, T} + \sum_{t=1}^{T-1} \hat{\gamma}_{i,t} \nu_{i,t} + (\bar{X}_1\beta^1 - \mu(d_{1:T})) }{
\sqrt{\sum_{i= 1}^n \hat{\gamma}_{i,T}^2 \mbox{Var}(\varepsilon_{i,T} | H_{i,T}, D_{i,T}) + \sum_{i= 1}^n \sum_{t=1}^{T-1} \hat{\gamma}_{i,t}^2 \mbox{Var}(\nu_{i,t} | H_{i,t}, D_{i,t}) + \frac{1}{n} \mathrm{Var}(X_{i,1} \beta^1) }}}_{(jj)} .
\end{aligned} 
$$
We start from $(j)$. Since $\sum_{i=1}^n \hat{\gamma}_{i,t} = 1$ and the variances are bounded from below (see Lemma \ref{lem:bound_n}), it follows that 
$$
\sum_{i=1}^n \hat{\gamma}_{i,T}^2 \mbox{Var}(\varepsilon_{i,T} | H_{i,T}, D_{i,T}) + \sum_{i=1}^n \sum_{t=1}^{T-1} \hat{\gamma}_{i,t}^2 \mbox{Var}(\nu_{i,t} | H_{i,t}, D_{i,t}) + \frac{1}{n} \mathrm{Var}(X_{i,1} \beta^1) \ge  T \bar{\sigma}^2 \sum_{i=1}^n \frac{1}{n^2} = T \bar{\sigma}^2/n. 
$$
Therefore, since the denominator is bounded from below by $\bar{\sigma} \sqrt{T/n}$, and since, by Holder's inequality 
\begin{equation} \label{eqn:eqn_j} 
\sum_{t=1}^T (\beta^t - \hat{\beta}^t)^\top(\hat{\gamma}_t H_t - \hat{\gamma}_{t-1} H_t) + (\beta^1 - \hat{\beta}^1)^\top (\hat{\gamma}_1 X_1 - \bar{X}_1) \lesssim T \max_t \delta_t(n,p) ||\beta^t - \hat{\beta}^t||_1  = o_p(n^{-1/2})
\end{equation} 
under Assumption \ref{ass:4m} and the fact that $T$ is fixed. We can now write 
$$
\scriptsize 
\begin{aligned}
  & (I) = 
 \underbrace{\frac{\sum_{i = 1}^n \hat{\gamma}_{i,T} \varepsilon_{i,T}}{\sqrt{\sum_{i=1}^n \hat{\gamma}_{i,T}^2 \mbox{Var}(\varepsilon_{i,T} | H_{i,T}, D_{i,T})}}}_{(i)} \times \underbrace{\sqrt{\frac{\sum_{i=1}^n \hat{\gamma}_{i,T}^2 \mbox{Var}(\varepsilon_{i,T} | H_{i,T}, D_{i,T})}{\sum_{i=1}^n \hat{\gamma}_{i,T}^2 \mbox{Var}(\varepsilon_{i,T} | H_{i,T}, D_{i,T}) + \sum_{i=1}^n \sum_{t=1}^{T-1} \hat{\gamma}_{i,t}^2 \mbox{Var}(\nu_{i,t} | H_{i,t}, D_{i,t}) + \frac{1}{n} \mathrm{Var}(X_{i,1} \beta^1)}}}_{(ii)} \\ &+ \sum_{t=1}^{T-1} \underbrace{\frac{\sum_{i = 1}^n \hat{\gamma}_{i,t} \nu_{i,t}}{ \sqrt{\sum_i \mbox{Var}(\nu_{i,t} | H_{i,t}, D_{i,t} ) \hat{\gamma}_{i,t}^2}} }_{(iii)}  \times \underbrace{\frac{\sqrt{\sum_i \mbox{Var}(\nu_{i,t} | H_{i,t}, D_{i,t}) \hat{\gamma}_{i,t}^2}}{\sqrt{\sum_{i=1}^n \hat{\gamma}_{i,T}^2 \mbox{Var}(\varepsilon_{i,T} | H_{i,T}, D_{i,T}) + \sum_{i=1}^n \sum_{t=1}^{T-1} \hat{\gamma}_{i,t}^2 \mbox{Var}(\nu_{i,t} | H_{i,t}, D_{i,t}) + \frac{1}{n} \mathrm{Var}(X_{i,1} \beta^1)}}}_{(iv)} \\ 
 & + \underbrace{\frac{\bar{X}_1 \beta^1 - \mu(d_{1:T})}{\sqrt{\frac{1}{n} \mathrm{Var}(X_{i,1} \beta^1)}}}_{(v)} \times \underbrace{\frac{\sqrt{\frac{1}{n} \mathrm{Var}(X_{i,1} \beta^1)}}{\sqrt{\sum_{i=1}^n \hat{\gamma}_{i,T}^2 \mbox{Var}(\varepsilon_{i,T} | H_{i,T}, D_{i,T}) + \sum_{i=1}^n \sum_{t=1}^{T-1} \hat{\gamma}_{i,t}^2 \mbox{Var}(\nu_{i,t} | H_{i,t}, D_{i,t}) + \frac{1}{n} \mathrm{Var}(X_{i,1} \beta^1)}}}_{(vi)}
 + o_p(1).  
\end{aligned} 
$$  
First, notice that $\sigma(\hat{\gamma}_T) \subseteq \sigma(D_T, H_T)$, and by Assumption \ref{ass:seqignm},  $\mathbb{E}[\varepsilon_T| H_T, D_T] = 0$. Therefore, 
$$
 \mathbb{E}[\hat{\gamma}_{i,T} \varepsilon_{i, T} | H_T, D_T] =  0, \quad  \bar{\sigma}^2 ||\hat{\gamma}_T||_2^2 \le \mbox{Var}\Big(\sum_{i=1}^n \hat{\gamma}_{i,T} \varepsilon_{i,T} |H_T, D_T\Big) \le ||\hat{\gamma}_T||_2^2 \sigma_{\varepsilon}^2, 
$$
where the first statement follows directly from Lemma \ref{lem:lemmam2} and the second statement holds 
for a finite constant $\sigma_{\varepsilon}^2$ by the third moment condition in Assumption \ref{ass:4m}.
By the fourth moment conditions in Assumption \ref{ass:4m}, for a constant $0 < C < \infty$,
$$
 \begin{aligned}  
 \mathbb{E}\Big[\Big(\sum_{i= 1}^n \hat{\gamma}_{i,T} \varepsilon_{i,T}\Big)^3  \Big | H_T , D_T\Big] &=  \sum_{i= 1}^n \hat{\gamma}_{i,T}^3 \mathbb{E}[\varepsilon_{i,T}^3 |H_T, D_T] \\ &\le C \sum_{i= 1}^n \hat{\gamma}_{i,T}^3 \le C ||\hat{\gamma}_T||_2^2 \max_i |\hat{\gamma}_{i,T}|  \lesssim  \log(n) n^{-2/3} ||\hat{\gamma}_T||_2^2 .
 \end{aligned}  
$$
 Thus, 
$$
\mathbb{E}\Big[\sum_{i= 1}^n \hat{\gamma}_{i,T}^3 \varepsilon_{i,T}^3 \Big | H_T, D_T\Big ] \Big / \mbox{Var}\Big(\sum_{i= 1}^n \hat{\gamma}_{i,T} \varepsilon_{i,T} \Big | H_T, D_T \Big)^{3/2} = O(\log(n) n^{-2/3}  ||\hat{\gamma}_T||_2^{-1}) = o(1).
$$
 By Lyapunov  theorem, we have 
$$
 \frac{\sum_{i= 1}^n \hat{\gamma}_{i,T} \varepsilon_{i,T} }{\sqrt{\sum_{i= 1}^n \hat{\gamma}_{i,T} \mbox{Var}(\varepsilon_{i,T} | H_T, D_T)}} \Big|   \sigma(H_T, D_T)\rightarrow_d \mathcal{N}(0,\sigma^2). 
$$
Consider now $(iii)$ for a generic time $t$. 
We study the behaviour of $\sum_{i= 1}^n \hat{\gamma}_{i,t} \nu_{i, t}$ conditional on $\sigma(H_t, D_t)$. Since $\sigma(\hat{\gamma}_t) \subseteq \sigma( H_t, D_t)$, $\hat{\gamma}_t$ is deterministic given $\sigma( H_t, D_t)$. By Lemma \ref{lem:lemmam2}, 
$
\mathbb{E}[\hat{\gamma}_{i,t} \nu_{i,t} | H_t, D_t] =0. 
$
We now study the second moment. Notice that 
$$
 \begin{aligned} 
\bar{\sigma}^2 ||\hat{\gamma}_t||_2^2 \le  \mbox{Var}(\sum_{i= 1}^n \hat{\gamma}_{i,t} \nu_{i,t} \big| H_t, D_t) &= \sum_{i= 1}^n \hat{\gamma}_{i,t}^2 \mbox{Var}(\nu_{i,t} |H_t, D_t) \le \sum_{i= 1}^n \hat{\gamma}_{i,t}^2 \sigma_{ub}^2.
 \end{aligned} 
$$
Finally, we consider the third moment. Under Assumption \ref{ass:4m},
$$
\begin{aligned} 
&\mathbb{E}\Big[\sum_{i= 1}^n \hat{\gamma}_{i,t}^3 \nu_{i,t}^3 \Big | H_t, D_t\Big] = \sum_{i= 1}^n \hat{\gamma}_{i,t}^3 \mathbb{E}[\nu_{i,t}^3 | H_t, D_t]  \le \sum_{i= 1}^n \hat{\gamma}_{i,t}^3 u_{max}^3 \lesssim \log(n) n^{-2/3} ||\hat{\gamma}_t||_2^2  
.\end{aligned} 
$$
Since $||\hat{\gamma}_t||_2 \ge 1/\sqrt{n}$ by Lemma \ref{lem:bound_n} and since $\mbox{Var}(\nu_{i, t}|H_t, D_t) > u_{min}$, 
$$
\mathbb{E}\Big[\sum_{i= 1}^n \hat{\gamma}_{i,t}^3 \nu_{i,t}^3 \Big | H_t, D_t \Big] \Big / \mbox{Var}\Big(\sum_{i= 1}^n \hat{\gamma}_{i,t} \nu_{i,t} \Big | H_t, D_t \Big)^{3/2} = O(\log(n) n^{-2/3} ||\hat{\gamma}_t||_2^{-1} ) = o(1).
$$ 
$$
\Rightarrow \frac{\sum_{i= 1}^n \hat{\gamma}_{i,t} \nu_{i,t}}{\sqrt{\sum_{i= 1}^n \hat{\gamma}_{i,t}^2 \mbox{Var}(\nu_{i,t} | H_t, D_t)} } \Big| \sigma( H_t, D_t) \rightarrow_d \mathcal{N}(0, 1). 
$$
The same reasoning applies verbatim to $(v)$. Therefore, collecting our results
\begin{equation} \label{eqn:general} 
\small 
\begin{aligned} 
&\frac{\sum_{i= 1}^n\hat{\gamma}_{i,T} \varepsilon_{i,T}}{\sqrt{\sum_{i= 1}^n \mbox{Var}(\varepsilon_{i,T} | H_{i,T}, D_{i,T}) \hat{\gamma}_{i,T}^2}} \quad  \Big |\quad  \sigma(H_T, D_T) \quad \rightarrow_d \mathcal{N}(0,1) \\
& \frac{\sum_{i= 1}^n \hat{\gamma}_{i,t} \nu_{i,t}}{\sqrt{\sum_{i= 1}^n \hat{\gamma}_{i,t}^2 \mbox{Var}(\nu_{i,t}|H_t, D_t)}} \quad  \Big |\quad   \sigma(H_t, D_t) \quad \rightarrow_d \mathcal{N}(0,1), \quad \forall t \in \{1, ..., T-1\} \\ 
& \frac{\bar{X}_1 \beta^1 - \mu(d_{1:T})}{\sqrt{\frac{1}{n} \mathrm{Var}(X_{i,1} \beta^1)}} \rightarrow_d \mathcal{N}(0,1). 
\end{aligned} 
\end{equation} 
In addition, to complete the characterization of the joint distribution, observe that 
\begin{equation} 
\small 
\begin{aligned} 
&\mathbb{E}[\hat{\gamma}_{i,T} \varepsilon_{i, T} \hat{\gamma}_{i,t}\nu_{i,t} |H_T, D_T] =
 \hat{\gamma}_{i,t}\nu_{i,t} \hat{\gamma}_{i,T}  \mathbb{E}[\varepsilon_{i, T} |H_T, D_T] = 0 \\
 &\mathbb{E}[\hat{\gamma}_{i,s} \nu_{i,s} \hat{\gamma}_{i,t}\nu_{i,t} |H_{\max\{s,t\}}, D_{\max\{s,t\}}] =
 \hat{\gamma}_{i,t} \hat{\gamma}_{i,s} \nu_{i,\min\{t, s\}} \mathbb{E}[ \nu_{i,\max\{s, t\}}  |H_{\max\{s,t\}}, D_{\max\{s,t\}}] = 0. \end{aligned} 
\end{equation} 

\paragraph{Introducing the $Z_t$, $W_t$ notation} Since each component at time $t$ is measurable with respect to $\sigma(H_{t+1}, D_{t+1})$, it follows the joint convergence result 
$$
\small 
\begin{aligned} 
& \Big[Z_0, Z_{1},
   \cdots
    Z_{T}\Big]^\top 
 \rightarrow_d \mathcal{N}\left(0, I \right), \\ &Z_{t} = \frac{\sum_{i= 1}^n \hat{\gamma}_{i,t} \nu_{i,t}}{\sqrt{\sum_{i= 1}^n \hat{\gamma}_{i,t}^2 \mbox{Var}(\nu_{i,t}|H_t, D_t)}} , \quad t \in \{1, \cdots, T -1\}, \\ &  Z_{T} = \frac{\sum_{i= 1}^n\hat{\gamma}_{i,T} \varepsilon_{i,T}}{\sqrt{\sum_{i= 1}^n \mbox{Var}(\varepsilon_{i,T} | H_{i,T}, D_{i,T}) \hat{\gamma}_{i,T}^2}}, \quad Z_0 = \frac{\bar{X}_1 \beta^1 - \mu(d_{1:T})}{\sqrt{\frac{1}{n} \mathrm{Var}(X_{i,1} \beta^1)}}. 
 \end{aligned} 
$$
We are left to consider the components $(ii)$, $(iv)$, $(vi)$. Define 
\begin{equation} \label{eqn:weights_W} 
\small 
\begin{aligned} 
W_T & = \sqrt{\frac{\sum_{i=1}^n \hat{\gamma}_{i,T}^2 \mbox{Var}(\varepsilon_{i,T} | H_{i,T}, D_{i,T})}{\sum_{i=1}^n \hat{\gamma}_{i,T}^2 \mbox{Var}(\varepsilon_{i,T} | H_{i,T}, D_{i,T}) + \sum_{i=1}^n \sum_{t=1}^{T-1} \hat{\gamma}_{i,t}^2 \mbox{Var}(\nu_{i,t} | H_{i,t}, D_{i,t}) + \frac{1}{n} \mathrm{Var}(X_{i,1} \beta^1)}}, \\ W_t & = \frac{\sqrt{\sum_i \mbox{Var}(\nu_{i,t} | H_{i,t}, D_{i,t}) \hat{\gamma}_{i,t}^2}}{\sqrt{\sum_{i=1}^n \hat{\gamma}_{i,T}^2 \mbox{Var}(\varepsilon_{i,T} | H_{i,T}, D_{i,T}) + \sum_{i=1}^n \sum_{t=1}^{T-1} \hat{\gamma}_{i,t}^2 \mbox{Var}(\nu_{i,t} | H_{i,t}, D_{i,t}) + \frac{1}{n} \mathrm{Var}(X_{i,1} \beta^1)}}, \quad t \in \{1, \cdots, T - 1\} \\ 
W_0 & = \frac{\sqrt{\frac{1}{n} \mathrm{Var}(X_{i,1} \beta^1)}}{\sqrt{\sum_{i=1}^n \hat{\gamma}_{i,T}^2 \mbox{Var}(\varepsilon_{i,T} | H_{i,T}, D_{i,T}) + \sum_{i=1}^n \sum_{t=1}^{T-1} \hat{\gamma}_{i,t}^2 \mbox{Var}(\nu_{i,t} | H_{i,t}, D_{i,t}) + \frac{1}{n} \mathrm{Var}(X_{i,1} \beta^1)}}
\end{aligned} 
\end{equation}

\paragraph{Final asymptotic normality step} Note that $||W||_2^2 = 1$ almost surely. In addition because $W_t, Z_{t-1}$ are measurable with respect to $\sigma(H_t, D_t)$ it follows from Equation \eqref{eqn:general} that for all $t \le T$, $Z_t | Z_{t-1}, \cdots, Z_0, W_t, \cdots, W_{0} \rightarrow_d \mathcal{N}(0,1)$. 

To show that $\sum_{t=1}^T W_t Z_t$ is asymptotically normal, we will study its (asymptotic) moment generating function, for   $Z_t | Z_{t-1}, \cdots, Z_0, W_t, \cdots, W_{0} \sim \mathcal{N}(0,1)$. Define 
$$
M_t(\lambda):= \exp\Big(\lambda \sum_{s=0}^t W_s Z_s - \frac{\lambda^2}{2} \sum_{k=0}^t W_k^2\Big). 
$$ 
Denote $F_t = \Big(W_t, W_{t-1}, \cdots, W_0, Z_{t-1}, Z_{t-2}, \cdots, Z_0\Big)$. We have 
$$
\begin{aligned} 
\mathbb{E}\Big[M_t(\lambda) | F_t\Big] & = \mathbb{E}\Big[\exp\Big(\lambda \sum_{s=0}^{t-1} W_s Z_s  + \lambda W_t Z_t - \frac{\lambda^2}{2} \sum_{k=0}^t W_k^2\Big) | F_{t}  \Big] \\ 
&= \exp\Big(\lambda \sum_{s=0}^{t-1} W_s Z_s - \frac{\lambda^2}{2} \sum_{k=0}^{t-1} W_k^2\Big) \mathbb{E}\Big[e^{\lambda W_t Z_t - \frac{\lambda^2}{2} W_t^2} | F_{t} \Big] = M_{t-1}(\lambda)
\end{aligned} 
$$ 
where we used the properties of the Gaussian distribution so that $\mathbb{E}\Big[e^{\lambda W_t Z_t - \frac{\lambda^2}{2} W_t^2} | F_{t} \Big] = 1$, where we used the fact that $W_t$ is measurable with respect to $F_t$ and $Z_t|F_t \sim \mathcal{N}(0,1)$. In particular, by the tower property of the expectation, $\mathbb{E}[M_t(\lambda) ] = \mathbb{E}[M_0(\lambda)] = 1$.

Now take 
$$
\lambda \sum_{t=0}^T Z_t W_t = \log(M_T(\lambda)) + \frac{\lambda^2}{2} \sum_{k=1}^T W_k^2
$$ 
where the equality follows by definition of $M_t(\lambda)$. We can now write the moment generating function of $\sum_{t=0}^T Z_t W_t$ as (remember that $||W||^2 = 1$ almost surely by construction), 
$$
\begin{aligned} 
\mathbb{E}\Big[e^{\lambda \sum_{t=0}^T Z_t W_t}  \Big] & = \mathbb{E}\Big[M_T(\lambda) e^{\frac{\lambda^2}{2} \sum_{k=1}^T W_k^2} \Big] \\
&= \mathbb{E}\Big[M_T(\lambda) e^{\frac{\lambda^2}{2}} \Big]
\end{aligned} 
$$ 
since $||W||^2 = 1$ almost surely. Because we have proven that $\mathbb{E}[M_t(\lambda)] = 1$ for all $t$, we have $\mathbb{E}\Big[e^{\lambda \sum_{t=0}^T Z_t W_t}  \Big] = e^{\frac{\lambda^2}{2}}$ corresponding to the moment generating function of a Gaussian random variable with variance one. Therefore as $n\rightarrow \infty$, $\sum_{t=0}^T W_t Z_t \sim \mathcal{N}(0,1)$.

\paragraph{Term $(II)$} We are only left to show that $(II) \rightarrow_p 1$. We can then invoke Slutsky theorem to complete the proof. We can write
\begin{equation} \label{eqn:II}
\small 
\begin{aligned} 
 |(II)^2 - 1| &= \Big|\frac{\sum_{i= 1}^n\hat{\gamma}_{i,T}^2 (Y_{i, T} - H_{i,T} \hat{\beta}^T)^2 +  \sum_{t=1}^{T-1}  \hat{\gamma}_{i,t}^2 \hat{\nu}_{i,t}^2 + \frac{1}{n^2} \sum_{i=1}^n \hat{\nu}_{i, 0}^2}{\sum_{i= 1}^n \hat{\gamma}_{i,T}^2 \mbox{Var}(\varepsilon_{i,T} | H_{i,T}, D_{i,T}) + \sum_{i= 1}^n \sum_{t=1}^{T-1} \hat{\gamma}_{i,t}^2 \mbox{Var}(\nu_{i,t} | H_{i,t}, D_{i,t}) + \frac{1}{n} \mathrm{Var}(X_{i,1} \beta^1)}
 - 1 \Big| 
\end{aligned} 
\end{equation} 
\begin{equation}
\small 
\begin{aligned} 
\eqref{eqn:II}  &\lesssim \underbrace{\Big|\frac{n \sum_{i= 1}^n\hat{\gamma}_{i,T}^2 \varepsilon_{i, T}^2 +  n \sum_{t=1}^{T-1}  \hat{\gamma}_{i,t}^2 \nu_{i,t}^2 + \frac{1}{n} \sum_{i=1}^n (X_{i,1} \beta^1 - \mathbb{E}[X_{i,1}]\beta^1)^2}{n\sum_{i= 1}^n \hat{\gamma}_{i,T}^2 \mbox{Var}(\varepsilon_{i,T} | H_{i,T}, D_{i,T}) + n\sum_{i= 1}^n \sum_{t=1}^{T-1} \hat{\gamma}_{i,t}^2 \mbox{Var}(\nu_{i,t} | H_{i,t}, D_{i,t}) +  \mathrm{Var}(X_{i,1} \beta^1)}
 - 1 \Big|}_{(A)}  \\ 
& +  \Big|\underbrace{\frac{n \sum_{i= 1}^n \hat{\gamma}_{i,T}^2 \Big[(Y_{i,T} - H_{i,T}\hat{\beta}^T)^2 -  (Y_{i,T} - H_{i,T}\beta^T)^2\Big]  }{n \sum_{i= 1}^n \hat{\gamma}_{i,T}^2 \mbox{Var}(\varepsilon_{i,T} | H_{i,T}, D_{i,T}) + n \sum_{i= 1}^n \sum_{s=1}^T \hat{\gamma}_{i,s}^2 \mbox{Var}(\nu_{i,s} | H_{i,s}, D_{i,s}) + \mathrm{Var}(X_{i,1} \beta^1)}}_{(B)} \Big| \\  &+ \sum_{t=1}^{T-1}
\Big|\underbrace{\frac{n \sum_{i= 1}^n \hat{\gamma}_{i,t}^2 \Big[(H_{i,t + 1} \hat{\beta}^{t+1} - H_{i,t}\hat{\beta}^{t+1})^2 -  (H_{i,t + 1} \beta^{t+1} - H_{i,t} \beta^t)^2\Big]  }{n \sum_{i= 1}^n \hat{\gamma}_{i,T}^2 \mbox{Var}(\varepsilon_{i,T} | H_{i,T}, D_{i,T}) + n \sum_{i= 1}^n \sum_{s=1}^T \hat{\gamma}_{i,s}^2 \mbox{Var}(\nu_{i,s} | H_{i,s}, D_{i,s}) + \mathrm{Var}(X_{i,1} \beta^1)}}_{(C)} \Big| \\ & + 
\underbrace{\Big| \frac{\frac{1}{n} \sum_{i = 1}^n (X_{i,1} \hat{\beta}^1 - \bar{X}_1 \hat{\beta}^1)^2 -  (X_{i,1} \beta^1 - \mathbb{E}[X_{i,1}] \beta^1)^2}{n \sum_{i= 1}^n \hat{\gamma}_{i,T}^2 \mbox{Var}(\varepsilon_{i,T} | H_{i,T}, D_{i,T}) + n \sum_{i= 1}^n \sum_{s=1}^T \hat{\gamma}_{i,s}^2 \mbox{Var}(\nu_{i,s} | H_{i,s}, D_{i,s}) +  \mathrm{Var}(X_{i,1} \beta^1) } \Big|}_{(D)}. 
\end{aligned} 
\end{equation} 
To show that (A) converges it suffices to note that the denominator is bounded from below by a finite positive constant by Lemmas \ref{lem:gamma_1_star}, \ref{lem:firststat} and the fact that each variance component is bounded away from zero under Assumption \ref{ass:4m}. 

The conditional variance of each component in the numerator reads as follows (recall by the above lemmas that $n||\hat{\gamma}_t||^2 = O_p(1)$)
$$
\small 
\begin{aligned} 
& \mathrm{Var}\Big(n \sum_{i=1}^n \hat{\gamma}_{i,T}^2 \varepsilon_{i,T}^2 \Big| H_T\Big) \le n^2  \bar{C} ||\hat{\gamma}_T||_4^4 \le \log^2(n) n^2 \bar{C} n^{-4/3} ||\hat{\gamma}_T||_2^2 = O_p(1) \log^2(n) n n^{-4/3} = o_p(1), \\ 
& \mathrm{Var}\Big(n \sum_{i=1}^n \hat{\gamma}_{i,t}^2 \nu_{i,t}^2 \Big| H_t\Big) \le \bar{C} n^2 ||\hat{\gamma}_T||_4^4 \le n^2 \log^2(n) \bar{C} n^{-4/3} ||\hat{\gamma}_t||_2^2  = O_p(1) \log^2(n) n n^{-4/3} = o_p(1) \\ 
& \frac{1}{n} \mathrm{Var}\Big((X_{i,1} \beta^1 - \mathbb{E}[X_{i,1}] \beta^1)^2\Big) = o(1). 
\end{aligned} 
$$ 
and hence (A) converges to zero by the continuous mapping theorem. 

For the term (B), the denominator is bounded similarly to (A). The numerator is 
\begin{equation} \label{eqn:aa} 
\small 
\begin{aligned} 
n \sum_{i= 1}^n \hat{\gamma}_{i,T}^2 \Big[(Y_{i,T} - H_{i,T}\hat{\beta}^T)^2 -  (Y_{i,T} - H_{i,T}\beta^T)^2\Big] \le   
& n \sum_{i= 1}^n \hat{\gamma}_{i,T}^2  \Big(H_{i,T}(\hat{\beta}^T - \beta^T)\Big)^2 \\
&+ 2 n \sum_{i=1}^n \hat{\gamma}_{i, T}^2 \varepsilon_{i, T} H_{i, T}(\hat{\beta}^T - \beta^T). 
\end{aligned} 
\end{equation} 
We can now write 
$$
\small 
\begin{aligned} 
n \sum_{i= 1}^n \hat{\gamma}_{i,T}^2  \Big(H_{i,T}(\hat{\beta}^T - \beta^T)\Big)^2 & \le ||\hat{\beta}^T - \beta^T||_1^2 n ||\hat{\gamma}_T||^2 ||\max_{i} |H_{i,T}|||^2_{\infty} \\ 
n \sum_{i=1}^n \hat{\gamma}_{i, T}^2 \varepsilon_{i, T} H_{i, T}(\hat{\beta}^T - \beta^T) & \le  ||\hat{\beta}^T - \beta^T||_1 \max_i || H_{i,T}||_{\infty} \max_j |\varepsilon_{j,T}| n ||\hat{\gamma}_T||^2
\end{aligned} 
$$ 
Notice now that by Lemma \ref{lem:sub-gaussian} and Assumption \ref{ass:weakoverlap}, with probability $1 - 1/n$, we have (for sub-gaussian $H_{i,T}$) $||\max_{i} H_{i,T}||_{\infty} = O(\log(np)), \max_j |\varepsilon_{j,T}| = \mathcal{O}(\log(n))$. (To note this, we can write $P(\max_{i, j} |H_{i,T}^{(j)}| > t) \le n p P(|H_{i,T}^{(j)}| > t) \le np e^{-t^2 v}$ for some finite constant $v$. Setting $n p e^{-t^2 v} = 1/n$ the claim holds.) 

Under Assumption \ref{ass:4m}(i), whenever $||\hat{\beta}^T - \beta^T||_1 = O_p(n^{-1/4}), n ||\hat{\gamma}_T||^2 = O_p(1)$ and $\log(np)/n^{1/4} = o(1)$ the above expression is $o_p(1)$ (Condition (a) in Assumption \ref{ass:4m}(i)). 
If instead $H_{i,T}$ is uniformly bounded, $||\max_{i} H_{i,T}||_{\infty} = O(1)$. Therefore it suffices that $||\hat{\beta}^T - \beta^T||_1 = o_p(1/\log(n))$ for the expression to be $o_p(1)$ (Condition (b) in Assumption \ref{ass:4m}(i)). 

Consider now (C), namely
$$
\small 
\begin{aligned} 
 & n \sum_{i= 1}^n \hat{\gamma}_{i,t}^2 \Big[(H_{i,t + 1} \hat{\beta}^{t+1} - H_{i,t}\hat{\beta}^{t})^2 -  (H_{i,t + 1} \beta^{t + 1} - H_{i,t} \beta^t)^2\Big] \le 
n \sum_{i= 1}^n \hat{\gamma}_{i,t}^2 \Big(H_{i,t} (\beta^t - \hat{\beta}^t)\Big)^2  \\
& + 2 \Big|n \sum_{i = 1}^n \hat{\gamma}_{i,t}^2 (H_{i,t+1} \beta^{t+1} - H_{i,t} \beta^t) \Big(H_{i,t} (\beta^t - \hat{\beta}^t)\Big)\Big| + 2 \Big| n \sum_{i=1}^n \hat{\gamma}_{i, t}^2 (H_{i,t + 1} (\hat{\beta}^{t+1} - \beta^{t+1})) (H_{i,t+1} \beta^{t+1} - H_{i,t} \beta^t) \Big| \\
&+ n \sum_{i=1}^n \hat{\gamma}_{i,t}^2 \Big(H_{i, t+1} (\beta^{t+1} - \hat{\beta}^{t+1})\Big)^2 + 2 \Big| n \sum_{i=1}^n \hat{\gamma}_{i,t}^2 \Big(H_{i,t+1}(\beta^{t+1} - \hat{\beta}^{t+1})\Big)\Big(H_{i,t}(\beta^t - \hat{\beta}^t)\Big)\Big| 
\end{aligned} 
$$ 
The first component $n \sum_{i= 1}^n \hat{\gamma}_{i,t}^2 \Big(H_{i,t} (\beta^t - \hat{\beta}^t)\Big)^2 = o_p(1)$ following verbatim the argument for the first-term on the right-hand side of Equation \eqref{eqn:aa} with $H_{i,t}$ in lieu of $H_{i,T}$ and $(\beta^t - \hat{\beta}^t)$ in lieu of $(\hat{\beta}^T - \beta^T)$, and similarly $n \sum_{i= 1}^n \hat{\gamma}_{i,t}^2 \Big(H_{i,t+1} (\beta^{t+1} - \hat{\beta}^{t+1})\Big)^2 = o_p(1)$.  The component $n \sum_{i = 1}^n \hat{\gamma}_{i,t}^2 (H_{i,t+1} \beta^{t+1} - H_{i,t} \beta^t) \Big(H_{i,t} (\beta^t - \hat{\beta}^t)\Big) = o_p(1)$ following verbatim the argument for the second term in the right-hand side of Equation \eqref{eqn:aa} with $\nu_{i,t+1} = H_{i,t+1} \beta^{t+1} - H_{i,t} \beta^t$ in lieu of $\varepsilon_{i,T}$, which is sub-gaussian by Lemma \ref{lem:sub-gaussian}, and $H_{i,t}, (\beta^t - \hat{\beta}^t)$ in lieu of $H_{i,T}, (\hat{\beta}^T - \beta^T)$. The same reasoning applies to  n $\sum_{i=1}^n \hat{\gamma}_{i, t}^2 (H_{i,t + 1} (\hat{\beta}^{t+1} - \beta^{t+1})) (H_{i,t+1} \beta^{t+1} - H_{i,t} \beta^t)$  with $\nu_{i,t}$ in lieu of $\varepsilon_{i,T}$ and $(H_{i,t + 1} (\hat{\beta}^{t+1} - \beta^{t+1})$ in lieu of $H_{i,T}(\hat{\beta}^T - \beta^T)$. Finally, for the last term, we can bound $n \sum_{i=1}^n \hat{\gamma}_{i,t}^2 \Big(H_{i,t+1}(\beta^{t+1} - \hat{\beta}^{t+1})\Big)\Big(H_{i,t}(\beta^t - \hat{\beta}^t)\Big) \le n ||\hat{\gamma}_t||^2 ||\hat{\beta}^{t+1} - \beta^{t+1}||_1 ||\hat{\beta}^t - \hat{\beta}_t||_1 ||H_t||_{\infty} ||H_{t+1}||_{\infty} = o_p(1)$ under Assumption \ref{ass:4m}(i) and \ref{ass:weakoverlap}, since by Lemma \ref{lem:firststat}, $n ||\hat{\gamma}_t||^2 = O_p(1)$, and $||\hat{\beta}^{t+1} - \beta^{t+1}||_1 ||H_{t+1}||_{\infty} = o_p(1)$ if either $||\hat{\beta}^{t+1} - \beta^{t+1}||_1 = O_p(1/n^{1/4})$ and $H_{t+1}$ is subgaussian, so that $||H_{t+1}||_{\infty} = O_p(\log(np))$ (since by assumption $\log(pn)/n^{1/4} = o(1)$) or if $||\hat{\beta}^{t+1} - \beta^{t+1}||_1 = o_p(1)$ and $||H_{t+1}||_{\infty} = O(1)$. 
 
\paragraph{Rate of convergence is $n^{-1/2}$: proof of Theorem \ref{thm:convergence_rate_first}.} To study the rate of convergences it suffices to show that (for fixed $T$)
$$
n  \Big[\sum_{i=1}^n \hat{\gamma}_{i,T}^2 \mbox{Var}(\varepsilon_{i,T} | H_{i,T}, D_{i,T}) + \sum_{i=1}^n \sum_{t=1}^{T-1} \hat{\gamma}_{i,t}^2 \mbox{Var}(\nu_{i,t} | H_{i,t}, D_{i,t})\Big] + \mathrm{Var}(X_{i,t} \beta^1) = O_p(1). 
$$ 
This follows directly from Lemma \ref{lem:firststat}, \ref{lem:gamma_1_star} and the bounded conditional third moment assumption in Assumption \ref{ass:4m}.

 \subsection{Inference on ATE} \label{app:robust}

\begin{thm}[Inference on ATE] \label{cor:ate}
Let Assumptions \ref{ass:seqignm}, \ref{ass:weakoverlap},  \ref{ass:4m} hold. Let $d_1 \neq d_1'$.
Then, whenever $\log(np_T)/n^{1/4} \to 0$ with $n,p_1, \cdots, p_T\to \infty$, 
$$
\small 
\begin{aligned} 
\lim_{n \rightarrow \infty} { (\hat{V}_T(d_{1:T}) + \hat{V}_T(d_{1:T}') - \hat{v}(d_{1:T}, d_{1:T}'))^{-1/2}\sqrt{n} \Big(\hat{\mu}(d_{1:T}) - \hat{\mu}(d_{1:T}') - \mathrm{ATE}(d_{1:T}, d_{1:T}')\Big)} \rightarrow_d \mathcal{N}(0,1), 
 \end{aligned} 
$$
where 
$$
\small 
\begin{aligned} 
\hat{v}(d_{1:T}, d_{1:T}') = \frac{1}{n} \sum_{i=1}^n \left\{ \Big((X_{i,1} - \bar{X}_1)^\top \hat{\beta}_{d_{1:T}}^{(1)}\Big)^2 + \Big((X_{i,1} - \bar{X}_1)^\top \hat{\beta}_{d_{1:T}'}^{(1)}\Big)^2 -  \Big((X_{i,1} - \bar{X}_1)^\top (\hat{\beta}_{d_{1:T}}^{(1)} - \hat{\beta}_{d_{1:T}'}^{(1)})\Big)^2 \right\}. 
\end{aligned} 
$$  
\end{thm} 
  
  Following the decomposition of Theorem \ref{thm:thm_asym_t},
we can write 
$$
\small 
\begin{aligned} 
& \hat{\mu}(d_{1:T}) - \hat{\mu}(d_{1:T}') - \Big(\mu_T(d_{1:T}) - \mu_T(d_{1:T}')\Big) = o_p(n^{-1/2}) \\ 
&  + \sum_{d \in \{d_{1:T}, d_{1:T}'\} }\sum_{i=1}^n \left\{\hat{\gamma}_{i,T}(d) \varepsilon_{i,T}(d) + \sum_{t=1}^T \hat{\gamma}_{i,t}(d) \nu_{i,t}(d)\right\} + \Big(\bar{X}_1 (\beta_{d_{1:T}} - \beta_{d_{1:T}'}) - \mu_T(d_{1:T}) - \mu_T(d_{1:T}')\Big)
\end{aligned} 
$$ 
 Because $\mathbb{E}[\bar{X}_1 (\beta_{d_{1:T}} - \beta_{d_{1:T}'})] =  \mu_T(d_{1:T}) - \mu_T(d_{1:T}')$ and since each $\hat{\gamma}_{i,t}(d), \hat{\gamma}_{i,t}(d'), d \neq d'$ sum over a different set of (independent individuals), we can follow verbatim the proof of Theorem \ref{thm:thm_asym_t} with the variance appropriately adjusted by the sum over the variances of each element $\sum_i \hat{\gamma}_{i,T}(d) \varepsilon_{i,T}, \sum_i \hat{\gamma}_{i,t}(d) \nu_{i,t}(d), d \in \{d_{1:T}, d_{1:T}'\}$ and of $\bar{X}_1 (\beta_{d_{1:T}} - \beta_{d_{1:T}'})$ in lieu of $\bar{X}_1 \beta_{d_{1:T}}$ in the proof of Theorem \ref{thm:thm_asym_t}.

%\subsection{Unconditional inference on $\mathbb{E}[Y(d_{1:T})]$} \label{sec:uncond}

%The same proof of Theorem \ref{thm:thm_asym_t} would go through if we were to conduct inference on $\mathbb{E}[Y_{i,T}(d_{1:T})]$ unconditional on baseline covariates where here we would need to account for an additional component corresponding to the one with weights $\hat{\gamma}_{i,0} = 1/n$. In such case the additional component implies that $Z_T$ contains $Z_{T + 1}$ independent elements (instead of $T$) elements, and the variance also needs to account for the variance of $\frac{1}{n} \sum_{i=1}^n (X_{i,1} \hat{\beta}_{d_{1:T}} - \bar{X}_1)$. The rest of the proof follows verbatim from the proof of Theorem \ref{thm:thm_asym_t}.  
%For instance, in two-periods a valid $1 - \alpha$ confidence interval takes the form 
%$$
%\widetilde{\mathrm{CI}}(d_1, d_2; \alpha) = \Big[\hat{\mu}(d_1, d_2) - \sqrt{\chi_3(\alpha)}\frac{\widetilde{V}_2(d_1, d_2)}{\sqrt{n}}, \hat{\mu}(d_1, d_2) + \sqrt{\chi_3(\alpha)}\frac{\widetilde{V}_2(d_1, d_2)}{\sqrt{n}} \Big]
%$$ 
%where $\widetilde{V}_2 = \hat{V}_2 + \frac{1}{n} \sum_{i=1}^n (X_{i,1} \hat{\beta}_{d_{1:2}} - \bar{X}_1)^2$. Note however that unconditional inference is more challenging for ATE due to the need to compute covariances among potential outcomes under different exposures. 

\section{Additional numerical studies} \label{sec:num}

\subsection{Longer time horizon} 

Next, we compare DCB and AIPW with high dimensional covariates with a longer time period. Namely, in Figure \ref{fig:long_T},  we collect results for $T \in \{1,\cdots, 10\}$. We generate data using a sparse model, $p = 100, n = 400$ over two-hundred replications. The outcome at time $t$ depends on the contemporaneous treatment, covariates, and previous outcome at time $t - 1$. To simulate a scenario where a strong correlation occurs between treatments over a long time period, we generate 
$
\mathbb{E}[D_{i,t} | D_{t-1}, X_{t}] = (1 - \alpha) D_{i,t-1} + \alpha (1 + e^{\iota_{i,t}})^{-1}, 
$
where similarly to the propensity score model Equation \eqref{eqn:propmodel}, $\iota_{i,t} = \frac{\eta}{\alpha} X_{i,t-1} \phi + \frac{\eta}{\alpha} X_{i,t} \phi +  \frac{1}{2}(D_{i,t-1} - \bar{D}_{t-1}) + \xi_{i,t}, \xi_{i,t} \sim \mathcal{N}(0,1)$. Here $\eta$ controls overlap together with $\alpha$, where $\eta/\alpha$ has a similar role of the overlap constant in previous simulations. (We take $\eta/\alpha$ as this plays approximately the same role of $\eta$ in previous simulations from a simple linear approximation of $(1 + e^{\iota_{i,t}})^{-1}$ with respect to $\eta \approx 0$.)
In the figure we report results for $\alpha \in \{0.9, 0.7, 0.5\}$ (denoted as ``High, Medium and Low correlation" respectively), and $\eta \in \{0.3, 0.5\}$. In Figure \ref{fig:long_T}, we observe that for very strong time dependence between treatments (i.e., there are limited or no dynamics in assignments) the two methods are comparable. When instead, there are relatively more dynamics in treatment assignments the proposed method significantly improves in mean-squared error, with larger improvements in the presence of poorer overlap.  

\begin{figure}[!ht]
\centering
\includegraphics[scale=0.5]{./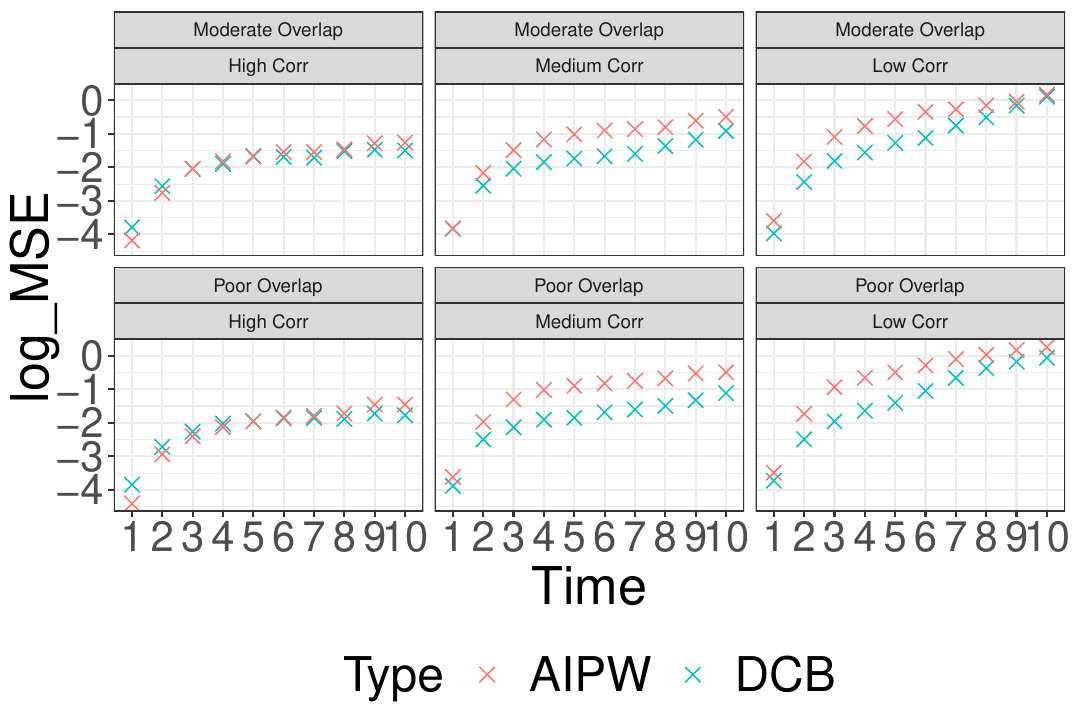}
\caption{Mean-squared error in log-scale. Simulations for $T \le 10, p = 100, n = 400$, two-hundred replications. Here high-correlation denotes strong serial depedence between treatment assignments with $\alpha = 0.9$, medium with $\alpha = 0.7$ and weak with $\alpha = 0.5$. $\eta \in \{0.3, 0.5\}$ for moderate and poor overlap, respectively. } 
\label{fig:long_T}
\end{figure}

\begin{table*}[!ht]\centering
\caption{Summary statistics of the distribution of the propensity score in two and three periods in a sparse setting with $\mathrm{dim}(X) = 300$.}    \label{tab:summaries} 
\ra{1.3}
\scalebox{0.9}{\begin{tabular}{@{}lrrrcrrrcrrr@{}}\toprule
& \multicolumn{2}{c}{$\eta=0.1$} & \ & \multicolumn{2}{c}{$\eta=0.3$} & \ & \multicolumn{2}{c}{$\eta=0.5$} \\
\cmidrule{2-3} \cmidrule{5-6}  \cmidrule{8-9}  
&T=2& T=3& &T=2& T=3 &&T=2& T=3  \\ \midrule
Min & 0.012 & $0.003 $ && 0.004&0.0002 && 0.001& 0.00000 \\
1st Quantile  &0.126 &0.049 && 0.105&0.031&& 0.079& 0.018 \\
 Median & 0.218& 0.097 & &0.216& 0.097& & 0.216 & 0.094  \\
3rd Quantile & 0.248& 0.126 && 0.259 & 0.153&&0.277 & 0.183\\
Max& $0.352$ &0.175& & 0.377 & 0.226 && 0.429 &0.286    \\
\bottomrule
\end{tabular}}
\end{table*}

\subsection{Computational complexity}

Figure \ref{fig:long_T2} reports the computational complexity. 

\begin{figure}[!ht]
\centering
\includegraphics[scale=0.4]{./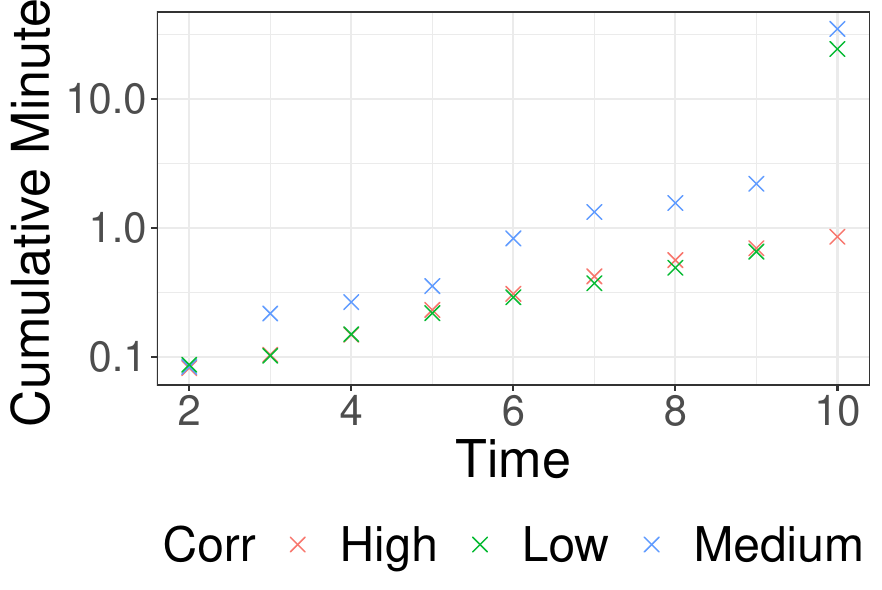}
\caption{ Example of cumulative computational time in minutes to run the first (one) simulation on a personal laptop as $T$ varies
High-correlation denotes strong serial dependence between treatments ($\alpha = 0.9$), medium corresponds to $\alpha = 0.7$ and weak to $\alpha = 0.5$. We set $\eta = 0.1$. } 
\label{fig:long_T2}
\end{figure}

\subsection{Simulations under misspecification} \label{sec:app:misspecified}

We simulate the outcome model over each period using non-linear dependence between the outcome, covariates, and past outcomes. The function that we choose for the dependence of the outcome with the past outcome and covariates follows similarly to \cite{athey2018approximate}, where, differently, here, such dependence structure is applied not only to the first covariate only (while keeping a linear dependence with the remaining ones) but to all covariates, making the scenarios more challenging for the DCB method. Formally, the DGP is the following: 
$$
\begin{aligned} 
Y_2(d_1, d_2) &= \log(1 + \exp(-2-2X_1\beta_{d_1, d_2})) + \log(1 + \exp(-2-2 X_2\beta_{d_1, d_2})) 
\\
&+  \log(1 + \exp(-2-2 Y_1 )) + d_1 + d_2+\varepsilon_2 ,
\end{aligned} 
$$
and similarly for $Y_3(d_1, d_2, d_3)$, with also including covariates and outcomes in period $T = 2$. Coefficients $\beta$ are obtained from the sparse model formulation discussed in the main text. Results are collected in Table \ref{tab:misspecified} for the MSE and for the bias and variance in the subsequent tables below. Interestingly, we observe that DCB performs relatively well under the misspecified model, even if our method does not use any information on the propensity score. We also note that our adaptation of the double lasso to dynamic setting performs comparable or better in the presence of two periods only or a sparse structure. However, as the number of periods increase or sparsity decreases Double Lasso's performance deteriorates.  

Finally, we observe that DCB outperforms AIPW, with a known propensity score. The main reason is due to the instability of the inverse probability weights in dynamic settings. Because these weights define the joint probability of treatment assignments, these can exhibit instability (poor overlap) in small samples, increasing the variance of the AIPW estimator. This behavior can be observed as we decompose the bias and variance of AIPW: whereas AIPW has a smaller finite sample bias than DCB with a misspecified model, its variance is substantially larger.
%This result open new questo We observe however that double lasso's theoretical properties have not been studied in dynamic settings, which opens questions for future research.  

\begin{table}[ht]\centering
\caption{MSE under misspecified model in a sparse setting.}    \label{tab:misspecified} 
\ra{1.3}
\begin{tabular}{@{}lrrrrrrrrrrrrrr@{}}\toprule
& \multicolumn{2}{c}{$T=2$} & \ & \multicolumn{2}{c}{$T=3$} \\
%& \multicolumn{2}{c}{Sparse}& \multicolumn{2}{c}{Moderate}& \multicolumn{2}{c}{Harmonic} & \ & \multicolumn{2}{c}{Sparse}& \multicolumn{2}{c}{Moderate}& \multicolumn{2}{c}{Harmonic}  \\
\cmidrule{2-3}   \cmidrule{5-6}   
 & $\eta = 0.3$ & $\eta = 0.5$   &   &    $\eta = 0.3$ & $\eta = 0.5$   \\
\Xhline{.8pt} 
DCB & $0.238$ & $0.354$     &&  $0.751$ & $0.402$ \\ 
aIPW* & $0.434$ & $0.802$ && $1.363$ & $1.622$ \\
aIPWh &  $0.863$ & $1.363$ && $1.882$ & $2.464$ \\ 
CAEW (MSM)  & $0.815$ & $1.364$ && $7.889$ & $8.675$ \\ 
D. Lasso& $0.121$ & $0.142$ && $0.689$ & $0.503$ \\ 
Seq.Est & $0.811$ & $0.346$ && $2.288$ & $2.031$ \\ 
DiD switchback & $60.05$ & $100.5$ && $795.5$ & $1104$ \\ 
Local Projection & $0.639$ & $0.382$ && $1.995$ & $1.922$ \\ 
 \bottomrule
\end{tabular}
\end{table}

\begin{table}[ht]\centering
\caption{MSE under misspecified model in a moderately sparse setting.}    \label{tab:misspecified} 
\ra{1.3}
\begin{tabular}{@{}lrrrrrrrrrrrrrr@{}}\toprule
& \multicolumn{2}{c}{$T=2$} & \ & \multicolumn{2}{c}{$T=3$} \\
%& \multicolumn{2}{c}{Sparse}& \multicolumn{2}{c}{Moderate}& \multicolumn{2}{c}{Harmonic} & \ & \multicolumn{2}{c}{Sparse}& \multicolumn{2}{c}{Moderate}& \multicolumn{2}{c}{Harmonic}  \\
\cmidrule{2-3}   \cmidrule{5-6}   
 & $\eta = 0.3$ & $\eta = 0.5$   &   &    $\eta = 0.3$ & $\eta = 0.5$   \\
\Xhline{.8pt} 
DCB & $0.212$ & $0.256$     &&  $0.326$ & $0.384$ \\ 
aIPW* & $0.428$ & $0.789$ && $1.364$ & $1.616$ \\
aIPWh &  $0.826$ & $1.313$ && $1.857$ & $2.434$ \\ 
CAEW (MSM) & $0.781$ & $1.317$ && $7.833$ & $8.616$ \\ 
D. Lasso& $0.115$ & $0.133$ && $0.675$ & $0.494$ \\ 
Seq.Est & $0.847$ & $0.366$ && $2.316$ & $2.058$ \\ 
DiD switchback & $59.71$ & $100.1$ && $795$ & $1104$ \\ 
Local Projection & $0.670$ & $0.408$ && $2.023$ & $1.950$ \\ 
 \bottomrule
\end{tabular}
\end{table}

\begin{table}[!htbp] \centering 
  \caption{Bias for sparse setting under misspecified model.} 
  \label{tab:tab1} 
\begin{tabular}{@{\extracolsep{5pt}} ccccc} 
\\[-1.8ex]\hline 
\hline \\[-1.8ex] 
 & $t = 2, \eta = 0.3$ & $t = 2,\eta = 0.5$ & $t = 3, \eta = 0.3$ & $t = 3, \eta = 0.5$ \\ 
\hline \\[-1.8ex] 
DCB & $0.227$ & $0.340$ & $$-$0.467$ & $$-$0.199$ \\ 
AIPW - Known Prop & $0.146$ & $0.288$ & $$-$0.0003$ & $0.318$ \\ 
AIPW - High Prop & $0.852$ & $1.119$ & $1.245$ & $1.459$ \\ 
AIPW - Low Prop & $0.551$ & $1.045$ & $1.378$ & $2.057$ \\ 
CAEW & $0.760$ & $1.086$ & $2.718$ & $2.872$ \\ 
Double Lasso & $0.156$ & $0.225$ & $0.671$ & $0.469$ \\ 
Seq.Est. & $$-$0.793$ & $$-$0.448$ & $$-$1.391$ & $$-$1.276$ \\ 
DiD Switchback & $7.66$ & $9.98$ & $28.02$ & $33.15$ \\ 
Local Projection & -0.746 & -0.566 & -1.370 & -1.343 \\ 
\hline \\[-1.8ex] 
\end{tabular} 
\end{table}

\begin{table}[!htbp] \centering 
  \caption{Variance for sparse setting under misspecified model.} 
  \label{tab:tab2} 
\begin{tabular}{@{\extracolsep{5pt}} ccccc} 
\\[-1.8ex]\hline 
\hline \\[-1.8ex] 
 & $t = 2, \eta = 0.3$ & $t = 2, \eta = 0.5$ & $t = 3, \eta = 0.3$ & $t = 3,\eta = 0.5$ \\ 
\hline \\[-1.8ex] 
DCB & $0.187$ & $0.239$ & $0.533$ & $0.363$ \\ 
AIPW - Known Prop & $0.413$ & $0.719$ & $1.364$ & $1.521$ \\ 
Naive Lasso & $0.273$ & $0.259$ & $1.058$ & $1.194$ \\ 
AIPW - High Prop & $0.138$ & $0.111$ & $0.333$ & $0.336$ \\ 
AIPW - Low Prop & $0.612$ & $0.225$ & $0.827$ & $0.438$ \\ 
CAEW & $0.237$ & $0.184$ & $0.500$ & $0.425$ \\ 
Double Lasso & $0.098$ & $0.092$ & $0.239$ & $0.284$ \\ 
Seq.Est. & $0.183$ & $0.145$ & $0.354$ & $0.404$ \\ 
DiD Switchback & $1.25$ & $0.86$ & $10.17$ & $5.39$ \\ 
Local Projection & 0.079 & 0.061 & 0.118 & 0.118 \\ 
\hline \\[-1.8ex] 
\end{tabular} 
\end{table}

\begin{table}[!htbp] \centering 
  \caption{Bias for moderately sparse model under misspecification.} 
  \label{tab:tab3} 
\begin{tabular}{@{\extracolsep{5pt}} ccccc} 
\\[-1.8ex]\hline 
\hline \\[-1.8ex] 
 & $t = 2, \eta = 0.3$ & $t = 2, \eta = 0.5$ & $t = 3, \eta = 0.3$ & $t = 3, \eta = 0.5$ \\ 
\hline \\[-1.8ex] 
This Paper & $0.202$ & $0.358$ & $0.096$ & $0.323$ \\
AIPW - Known Prop & $0.123$ & $0.266$ & $$-$0.010$ & $0.308$ \\ 
AIPW - High Prop & $0.830$ & $1.097$ & $1.235$ & $1.449$ \\ 
AIPW - Low Prop & $0.529$ & $1.023$ & $1.367$ & $2.047$ \\ 
CAEW & $0.738$ & $1.064$ & $2.708$ & $2.862$ \\ 
Double Lasso & $0.134$ & $0.202$ & $0.661$ & $0.459$ \\ 
Seq.Est. & $$-$0.815$ & $$-$0.470$ & $$-$1.401$ & $$-$1.286$ \\ 
DiD Switchback & $7.64$ & $9.96$ & $28.01$ & $33.15$ \\ 
Local Projection & -0.768 & -0.588 & -1.380 & -1.353 \\ 
\hline \\[-1.8ex] 
\end{tabular} 
\end{table}

\begin{table}[!htbp] \centering 
  \caption{Variance for moderately sparse model under misspecification.} 
  \label{tab:tab4} 
\begin{tabular}{@{\extracolsep{5pt}} ccccc} 
\\[-1.8ex]\hline 
\hline \\[-1.8ex] 
 & $t = 2, \eta = 0.3$ & $t = 2, \eta = 0.5$ & $t = 3, \eta = 0.3$ & $t = 3, \eta = 0.5$ \\ 
\hline \\[-1.8ex] 
This Paper & $0.171$ & $0.129$ & $0.317$ & $0.280$ \\ 
AIPW - Known Prop & $0.413$ & $0.719$ & $1.364$ & $1.521$ \\ 
AIPW - High Prop & $0.138$ & $0.111$ & $0.333$ & $0.336$ \\ 
AIPW - Low Prop & $0.612$ & $0.225$ & $0.827$ & $0.438$ \\ 
CAEW & $0.237$ & $0.184$ & $0.500$ & $0.425$ \\ 
Double Lasso & $0.098$ & $0.092$ & $0.239$ & $0.284$ \\ 
Seq.Est. & $0.183$ & $0.145$ & $0.354$ & $0.404$ \\ 
DiD Switchback & $1.25$ & $0.86$ & $10.17$ & $5.39$ \\ 
Local Projection & 0.079 & 0.061 & 0.118 & 0.118 \\ 
\hline \\[-1.8ex] 
\end{tabular} 
\end{table} 

\subsection{Simulations with low dimensional covariates}

In Figure \ref{fig:lowdim}, we explore the performance of the method in low dimensional scenarios where $p \in \{10, 20\}$. DCB outperforms AIPW and IPW uniformly except for $p = 10$ and strong overlap of the propensity score, where DCB performs comparably or slightly worst than AIPW. However, in all other scenarios where overlap decays (both moderate and weak overlap), DCP outperforms AIPW in low-dimensional settings. Improvements of DCB over AIPW increase with the number of periods.  These results justify DCB also in low dimensional scenarios in the presence of poor or moderately poor overlap of the propensity score. 

\begin{figure}[!ht] 
\centering
\includegraphics[scale=0.42]{./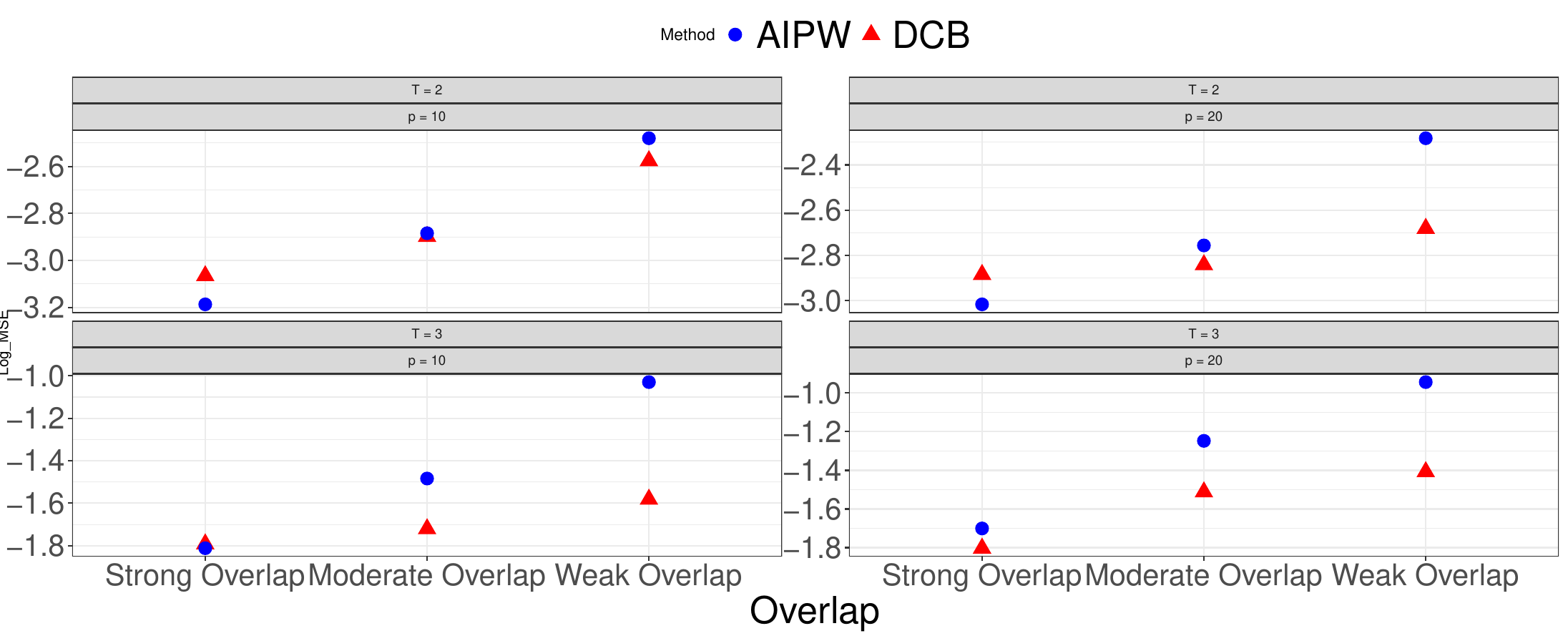}
\includegraphics[scale=0.42]{./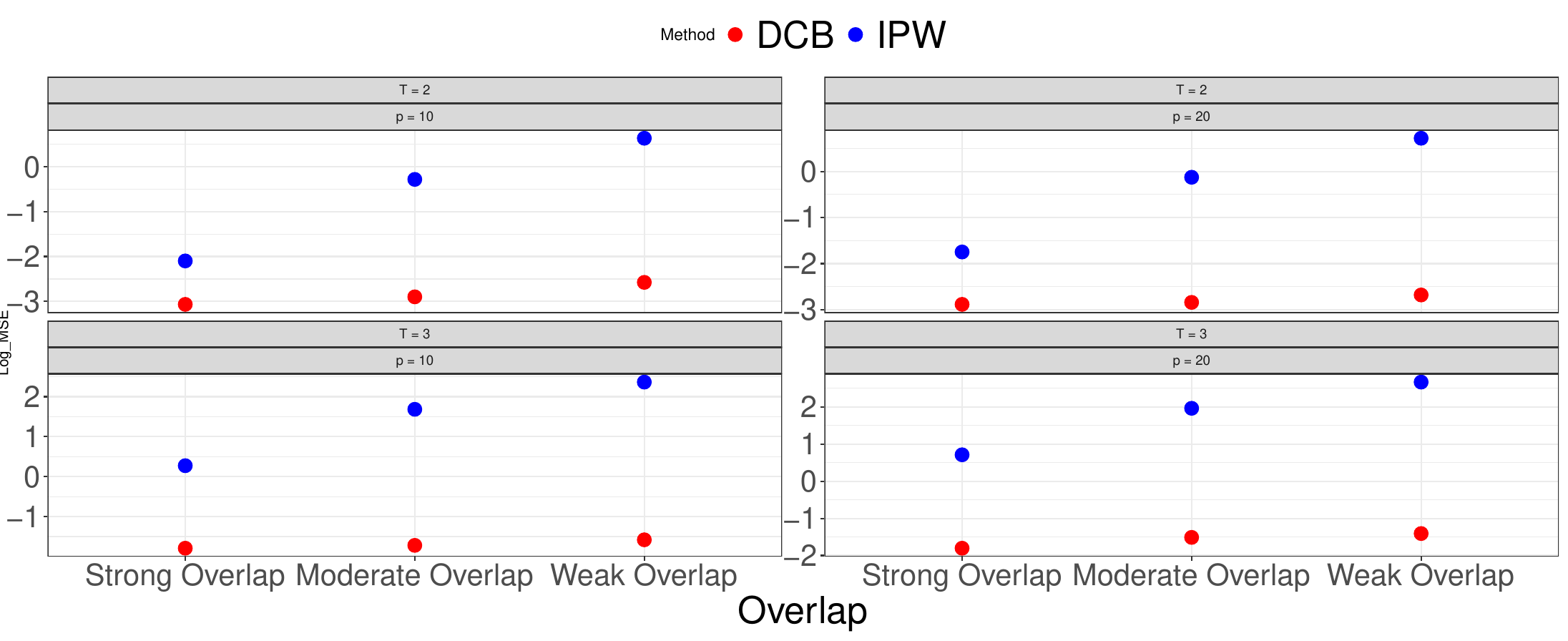}
\caption{Comparison of DCB with AIPW and IPW (with estimated propensity score and estimated conditional mean function) over 200 replications. Low dimensional scenarios with $p \in \{10, 20\}$. The y-axis report the MSE in log-scale the the x-axis reports different scenarios in terms of overlap (strong, moderate and weak overlap). } \label{fig:lowdim} 
\end{figure}

\subsection{Comparisons as signal strength varies}

In this section, we present simulations results with the same design as in Section \ref{sec:numerics} for a sparse setting ($p = 10$), but with one distinction: we multiply the signal of the coefficients by $\eta \in \{0.1, 0.3, 0.5\}$ (overlap constant), varying the strength of the signal in the linear regression. This approach follows in spirit with simulations in Section 3 in \cite{wuthrich2023omitted}. Similarly to what observed in  \cite{wuthrich2023omitted}, as the signal decreases ($\eta$ moves from $0.5$ to $0.1$ in our case), the performance of AIPW that uses lasso deteriorates. The relative improvements of DCB over AIPW increase as the signal strength decreases, further motivating the proposed method. 

\begin{figure}[!ht] 
\centering
\includegraphics[scale=0.45]{./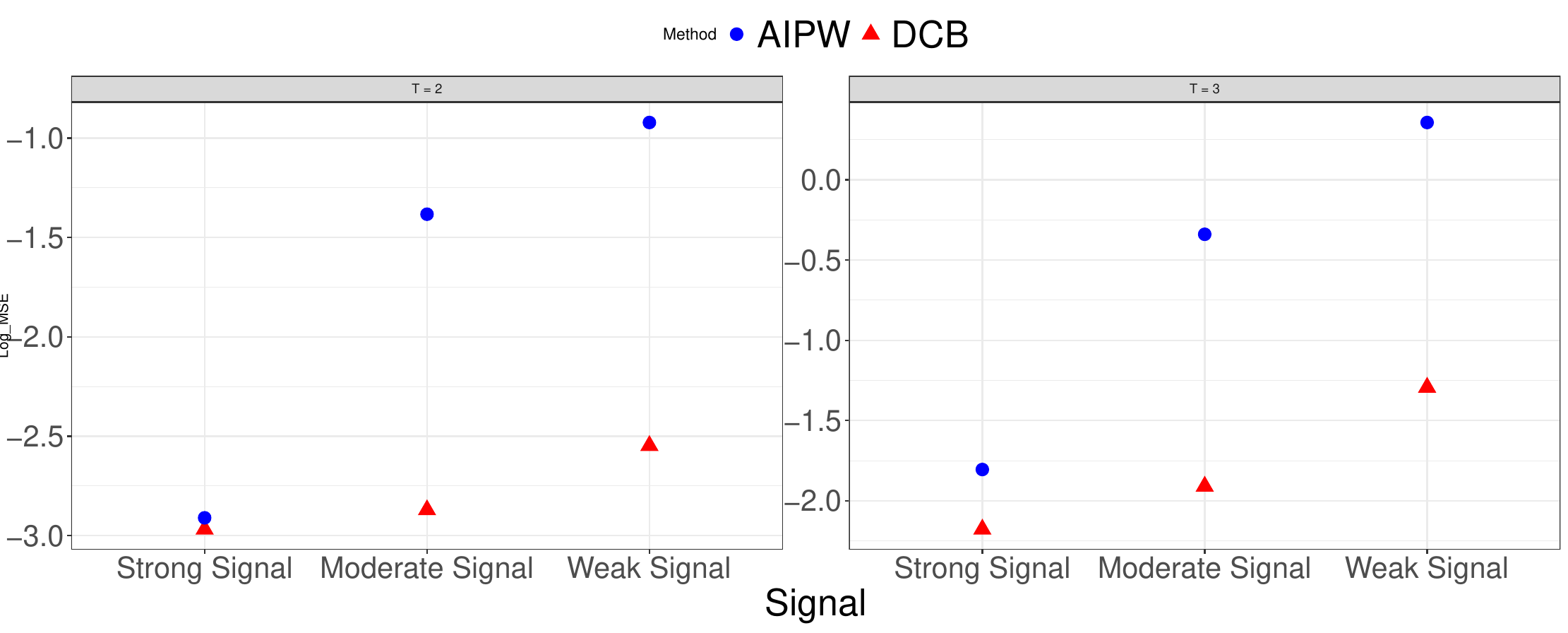}
\includegraphics[scale=0.45]{./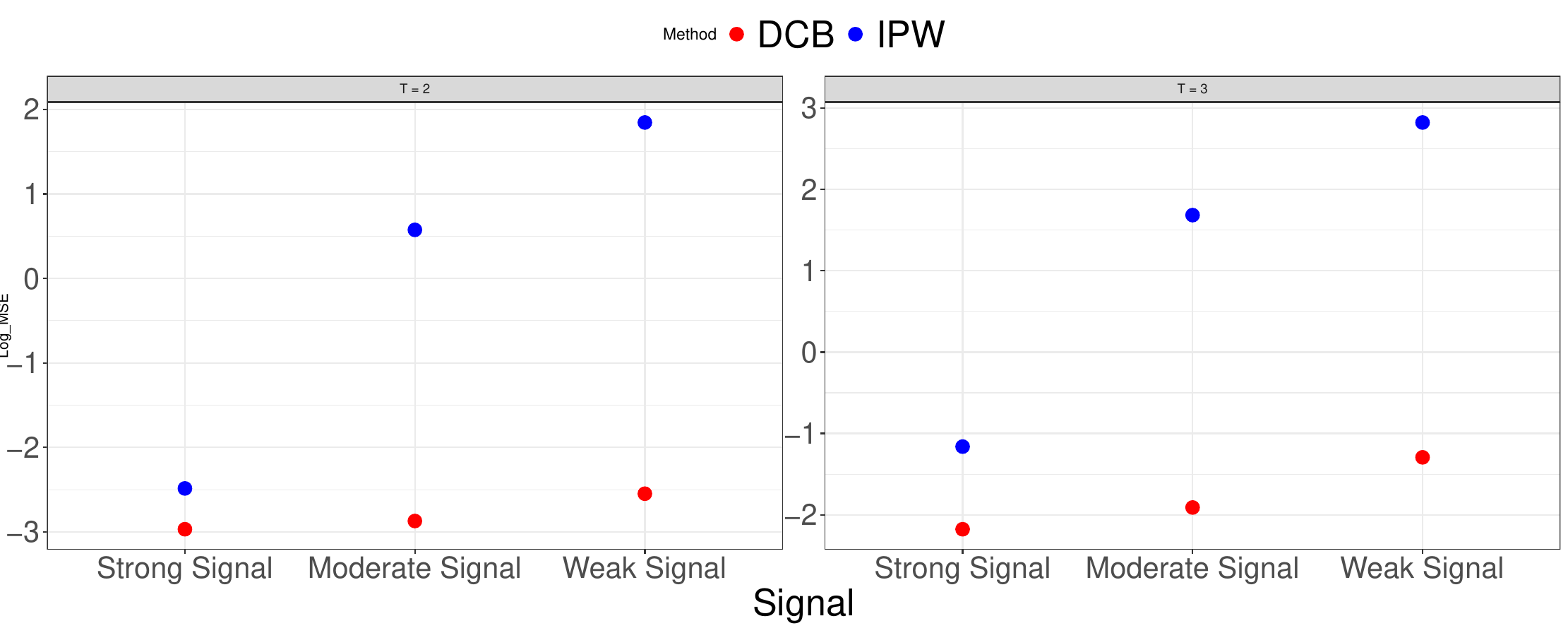}
\caption{$p = 100$, simulation design as in Section \ref{sec:numerics} with one difference: the coefficients $\beta$ multiply by $\eta$ with $\eta \in \{0.1, 0.3, 0.5\}$ (strong, moderate and weak signal, respectively). y-axis reports the MSE in log-scale. } 
\end{figure}

\subsection{Additional empirical results}

Figure \ref{fig:dispersion} illustrates the instability of the inverse probability weights compared to DCB weights in our empirical application.

\begin{figure}[!ht]
\centering 
\includegraphics[scale=0.5]{./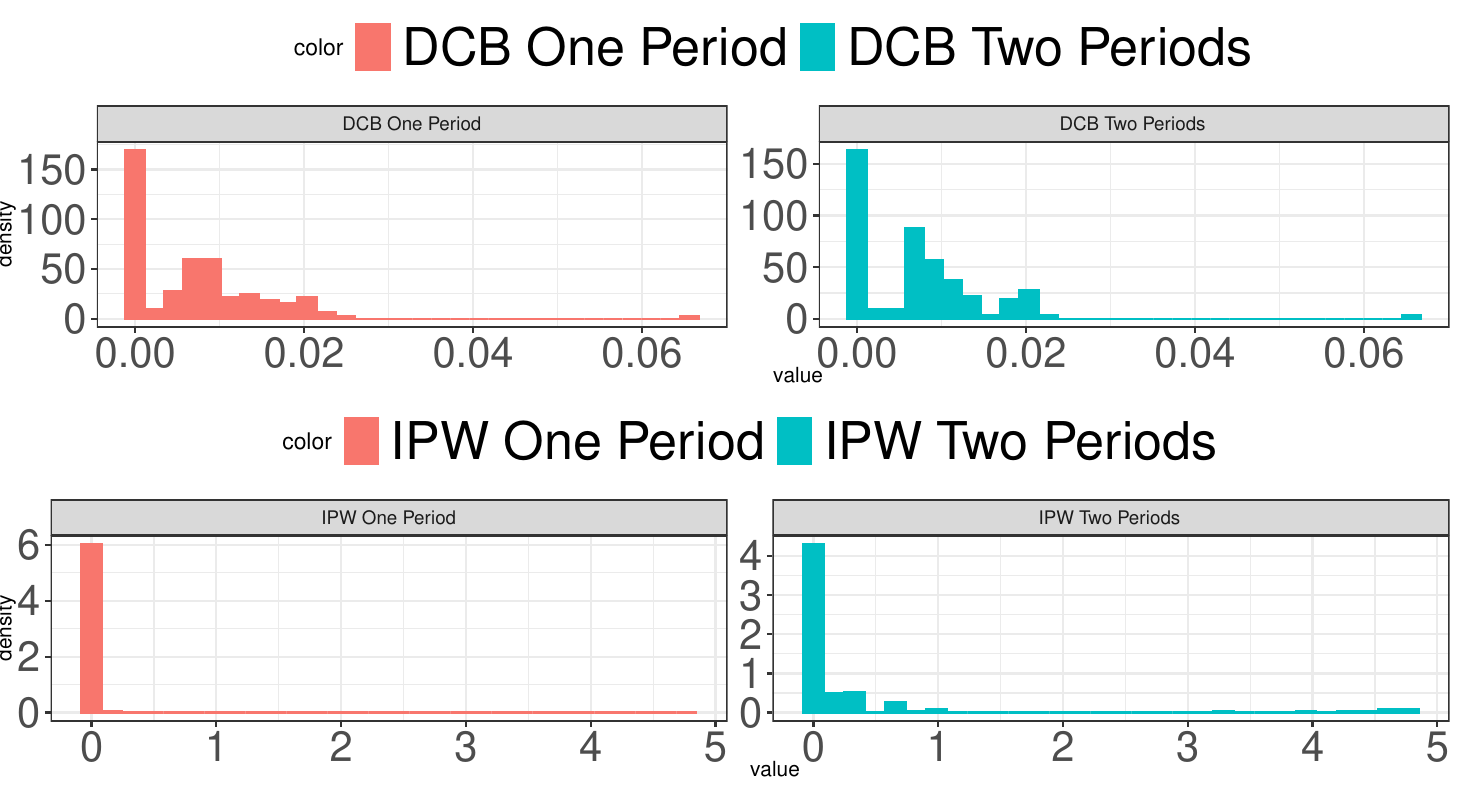} 
\caption{Comparisons between DCB weights and inverse probability weights when estimating the effect of treatment for two consecutive periods. Dispersion in the weights is associated with higher variance of the estimator. Zeros for DCB weights correspond to observations who are not weighted as in Algorithm 1, because their treatment differs from one. Both weights are estimated on the same sample over the last two periods, with inverse probability weights controlling for four lagged outcomes and past assignment. The figure illustrates the larger sensitivity of inverse probability weights to longer periods.} \label{fig:dispersion}
\end{figure}

 \bibliography{mybibliography}

\begin{thebibliography}{}

\bibitem[\protect\citeauthoryear{Abraham and Sun}{Abraham and
  Sun}{2018}]{abraham2018estimating}
Abraham, S. and L.~Sun (2018).
\newblock Estimating dynamic treatment effects in event studies with
  heterogeneous treatment effects.
\newblock {\em Available at SSRN 3158747\/}.

\bibitem[\protect\citeauthoryear{Acemoglu, Naidu, Restrepo, and
  Robinson}{Acemoglu et~al.}{2019}]{acemoglu2019democracy}
Acemoglu, D., S.~Naidu, P.~Restrepo, and J.~A. Robinson (2019).
\newblock Democracy does cause growth.
\newblock {\em Journal of Political Economy\/}~{\em 127\/}(1), 47--100.

\bibitem[\protect\citeauthoryear{Arkhangelsky and Imbens}{Arkhangelsky and
  Imbens}{2019}]{imbens2019panel}
Arkhangelsky, D. and G.~Imbens (2019).
\newblock Double-robust identification for causal paneldata models.
\newblock {\em arXiv preprint arXiv:1909.09412\/}.

\bibitem[\protect\citeauthoryear{Athey and Imbens}{Athey and
  Imbens}{2022}]{athey2022design}
Athey, S. and G.~W. Imbens (2022).
\newblock Design-based analysis in difference-in-differences settings with
  staggered adoption.
\newblock {\em Journal of Econometrics\/}~{\em 226\/}(1), 62--79.

\bibitem[\protect\citeauthoryear{Athey, Imbens, and Wager}{Athey
  et~al.}{2018}]{athey2018approximate}
Athey, S., G.~W. Imbens, and S.~Wager (2018).
\newblock Approximate residual balancing: debiased inference of average
  treatment effects in high dimensions.
\newblock {\em Journal of the Royal Statistical Society: Series B (Statistical
  Methodology)\/}~{\em 80\/}(4), 597--623.

\bibitem[\protect\citeauthoryear{Athey and Wager}{Athey and
  Wager}{2021}]{athey2017efficient}
Athey, S. and S.~Wager (2021).
\newblock Policy learning with observational data.
\newblock {\em Econometrica\/}~{\em 89\/}(1), 133--161.

\bibitem[\protect\citeauthoryear{Babino, Rotnitzky, and Robins}{Babino
  et~al.}{2019}]{babino2019multiple}
Babino, L., A.~Rotnitzky, and J.~Robins (2019).
\newblock Multiple robust estimation of marginal structural mean models for
  unconstrained outcomes.
\newblock {\em Biometrics\/}~{\em 75\/}(1), 90--99.

\bibitem[\protect\citeauthoryear{Bang and Robins}{Bang and
  Robins}{2005}]{bang2005doubly}
Bang, H. and J.~M. Robins (2005).
\newblock Doubly robust estimation in missing data and causal inference models.
\newblock {\em Biometrics\/}~{\em 61\/}(4), 962--973.

\bibitem[\protect\citeauthoryear{Belloni, Chernozhukov, and Hansen}{Belloni
  et~al.}{2014}]{belloni2014inference}
Belloni, A., V.~Chernozhukov, and C.~Hansen (2014).
\newblock Inference on treatment effects after selection among high-dimensional
  controls.
\newblock {\em The Review of Economic Studies\/}~{\em 81\/}(2), 608--650.

\bibitem[\protect\citeauthoryear{Belloni, Chernozhukov, Hansen, and
  Kozbur}{Belloni et~al.}{2016}]{belloni2016inference}
Belloni, A., V.~Chernozhukov, C.~Hansen, and D.~Kozbur (2016).
\newblock Inference in high-dimensional panel models with an application to gun
  control.
\newblock {\em Journal of Business \& Economic Statistics\/}~{\em 34\/}(4),
  590--605.

\bibitem[\protect\citeauthoryear{Ben-Michael, Feller, and
  Rothstein}{Ben-Michael et~al.}{2018}]{ben2018augmented}
Ben-Michael, E., A.~Feller, and J.~Rothstein (2018).
\newblock The augmented synthetic control method.
\newblock {\em arXiv preprint arXiv:1811.04170\/}.

\bibitem[\protect\citeauthoryear{Blackwell}{Blackwell}{2013}]{blackwell2013framework}
Blackwell, M. (2013).
\newblock A framework for dynamic causal inference in political science.
\newblock {\em American Journal of Political Science\/}~{\em 57\/}(2),
  504--520.

\bibitem[\protect\citeauthoryear{Bodory, Huber, and Laff{\'e}rs}{Bodory
  et~al.}{2020}]{bodoryevaluating}
Bodory, H., M.~Huber, and L.~Laff{\'e}rs (2020).
\newblock Evaluating (weighted) dynamic treatment effects by double machine
  learning.
\newblock {\em arXiv preprint arXiv:2012.00370\/}.

\bibitem[\protect\citeauthoryear{Bojinov, Rambachan, and Shephard}{Bojinov
  et~al.}{2020}]{bojinov2020panel}
Bojinov, I., A.~Rambachan, and N.~Shephard (2020).
\newblock Panel experiments and dynamic causal effects: A finite population
  perspective.
\newblock {\em arXiv preprint arXiv:2003.09915\/}.

\bibitem[\protect\citeauthoryear{Boruvka, Almirall, Witkiewitz, and
  Murphy}{Boruvka et~al.}{2018}]{boruvka2018assessing}
Boruvka, A., D.~Almirall, K.~Witkiewitz, and S.~A. Murphy (2018).
\newblock Assessing time-varying causal effect moderation in mobile health.
\newblock {\em Journal of the American Statistical Association\/}~{\em
  113\/}(523), 1112--1121.

\bibitem[\protect\citeauthoryear{Bradic, Ji, and Zhang}{Bradic
  et~al.}{2024}]{bradic2021high}
Bradic, J., W.~Ji, and Y.~Zhang (2024).
\newblock High-dimensional inference for dynamic treatment effects.
\newblock {\em Annals of Statistics, forthcoming\/}.

\bibitem[\protect\citeauthoryear{Bruns-Smith, Dukes, Feller, and
  Ogburn}{Bruns-Smith et~al.}{2023}]{bruns2023augmented}
Bruns-Smith, D., O.~Dukes, A.~Feller, and E.~L. Ogburn (2023).
\newblock Augmented balancing weights as linear regression.
\newblock {\em arXiv preprint arXiv:2304.14545\/}.

\bibitem[\protect\citeauthoryear{B{\"u}hlmann and Van De~Geer}{B{\"u}hlmann and
  Van De~Geer}{2011}]{buhlmann2011statistics}
B{\"u}hlmann, P. and S.~Van De~Geer (2011).
\newblock {\em Statistics for high-dimensional data: methods, theory and
  applications}.
\newblock Springer Science \& Business Media.

\bibitem[\protect\citeauthoryear{Caetano, Callaway, Payne, and
  Rodrigues}{Caetano et~al.}{2022}]{caetano2022difference}
Caetano, C., B.~Callaway, S.~Payne, and H.~S. Rodrigues (2022).
\newblock Difference in differences with time-varying covariates.
\newblock {\em arXiv preprint arXiv:2202.02903\/}.

\bibitem[\protect\citeauthoryear{Callaway and Sant'Anna}{Callaway and
  Sant'Anna}{2019}]{callaway2019difference}
Callaway, B. and P.~H. Sant'Anna (2019).
\newblock Difference-in-differences with multiple time periods.
\newblock {\em Available at SSRN 3148250\/}.

\bibitem[\protect\citeauthoryear{Chernozhukov, Goldman, Semenova, and
  Taddy}{Chernozhukov et~al.}{2017}]{chernozhukov2017orthogonal}
Chernozhukov, V., M.~Goldman, V.~Semenova, and M.~Taddy (2017).
\newblock Orthogonal machine learning for demand estimation: High dimensional
  causal inference in dynamic panels.
\newblock {\em arXiv preprint arXiv:1712.09988\/}.

\bibitem[\protect\citeauthoryear{Chernozhukov, Newey, Singh, and
  Syrgkanis}{Chernozhukov et~al.}{2022}]{chernozhukov2022automatic}
Chernozhukov, V., W.~Newey, R.~Singh, and V.~Syrgkanis (2022).
\newblock Automatic debiased machine learning for dynamic treatment effects and
  general nested functionals.
\newblock {\em arXiv preprint arXiv:2203.13887\/}.

\bibitem[\protect\citeauthoryear{De~Chaisemartin and
  d'Haultfoeuille}{De~Chaisemartin and
  d'Haultfoeuille}{2022}]{de2022difference}
De~Chaisemartin, C. and X.~d'Haultfoeuille (2022).
\newblock Difference-in-differences estimators of intertemporal treatment
  effects.
\newblock Technical report, National Bureau of Economic Research.

\bibitem[\protect\citeauthoryear{De~Chaisemartin and
  d'Haultfoeuille}{De~Chaisemartin and
  d'Haultfoeuille}{2024}]{de2024difference}
De~Chaisemartin, C. and X.~d'Haultfoeuille (2024).
\newblock Difference-in-differences estimators of intertemporal treatment
  effects.
\newblock {\em Review of Economics and Statistics\/}, 1--45.

\bibitem[\protect\citeauthoryear{Deshpande, Javanmard, and Mehrabi}{Deshpande
  et~al.}{2023}]{deshpande2023online}
Deshpande, Y., A.~Javanmard, and M.~Mehrabi (2023).
\newblock Online debiasing for adaptively collected high-dimensional data with
  applications to time series analysis.
\newblock {\em Journal of the American Statistical Association\/}~{\em
  118\/}(542), 1126--1139.

\bibitem[\protect\citeauthoryear{Ding and Li}{Ding and
  Li}{2019}]{ding2019bracketing}
Ding, P. and F.~Li (2019).
\newblock A bracketing relationship between difference-in-differences and
  lagged-dependent-variable adjustment.
\newblock {\em Political Analysis\/}~{\em 27\/}(4), 605--615.

\bibitem[\protect\citeauthoryear{Dube, Girardi, Jorda, and Taylor}{Dube
  et~al.}{2023}]{dube2023local}
Dube, A., D.~Girardi, O.~Jorda, and A.~M. Taylor (2023).
\newblock A local projections approach to difference-in-differences event
  studies.
\newblock Technical report, National Bureau of Economic Research.

\bibitem[\protect\citeauthoryear{Ghanem, Sant'Anna, and W{\"u}thrich}{Ghanem
  et~al.}{2022}]{ghanem2022selection}
Ghanem, D., P.~H. Sant'Anna, and K.~W{\"u}thrich (2022).
\newblock Selection and parallel trends.
\newblock {\em arXiv preprint arXiv:2203.09001\/}.

\bibitem[\protect\citeauthoryear{Hainmueller}{Hainmueller}{2012}]{hainmueller2012entropy}
Hainmueller, J. (2012).
\newblock Entropy balancing for causal effects: A multivariate reweighting
  method to produce balanced samples in observational studies.
\newblock {\em Political Analysis\/}~{\em 20\/}(1), 25--46.

\bibitem[\protect\citeauthoryear{Hastie, Tibshirani, and Wainwright}{Hastie
  et~al.}{2015}]{hastie2015statistical}
Hastie, T., R.~Tibshirani, and M.~Wainwright (2015).
\newblock {\em Statistical learning with sparsity: the lasso and
  generalizations}.
\newblock CRC press.

\bibitem[\protect\citeauthoryear{Heckman and Navarro}{Heckman and
  Navarro}{2007}]{heckman2007dynamic}
Heckman, J.~J. and S.~Navarro (2007).
\newblock Dynamic discrete choice and dynamic treatment effects.
\newblock {\em Journal of Econometrics\/}~{\em 136\/}(2), 341--396.

\bibitem[\protect\citeauthoryear{Hern{\'a}n, Brumback, and Robins}{Hern{\'a}n
  et~al.}{2001}]{hernan2001marginal}
Hern{\'a}n, M.~A., B.~Brumback, and J.~M. Robins (2001).
\newblock Marginal structural models to estimate the joint causal effect of
  nonrandomized treatments.
\newblock {\em Journal of the American Statistical Association\/}~{\em
  96\/}(454), 440--448.

\bibitem[\protect\citeauthoryear{Hirshberg and Wager}{Hirshberg and
  Wager}{2017}]{hirshberg2017augmented}
Hirshberg, D.~A. and S.~Wager (2017).
\newblock Augmented minimax linear estimation.
\newblock {\em arXiv preprint arXiv:1712.00038\/}.

\bibitem[\protect\citeauthoryear{Hirshberg and Wager}{Hirshberg and
  Wager}{2021}]{hirshberg2021augmented}
Hirshberg, D.~A. and S.~Wager (2021).
\newblock Augmented minimax linear estimation.
\newblock {\em The Annals of Statistics\/}~{\em 49\/}(6), 3206--3227.

\bibitem[\protect\citeauthoryear{Imai, Kim, and Wang}{Imai
  et~al.}{2018}]{imai2018matching}
Imai, K., I.~S. Kim, and E.~Wang (2018).
\newblock Matching methods for causal inference with time-series cross-section
  data.

\bibitem[\protect\citeauthoryear{Imai and Ratkovic}{Imai and
  Ratkovic}{2014}]{imai2014covariate}
Imai, K. and M.~Ratkovic (2014).
\newblock Covariate balancing propensity score.
\newblock {\em Journal of the Royal Statistical Society: Series B (Statistical
  Methodology)\/}~{\em 76\/}(1), 243--263.

\bibitem[\protect\citeauthoryear{Imai and Ratkovic}{Imai and
  Ratkovic}{2015}]{imai2015robust}
Imai, K. and M.~Ratkovic (2015).
\newblock Robust estimation of inverse probability weights for marginal
  structural models.
\newblock {\em Journal of the American Statistical Association\/}~{\em
  110\/}(511), 1013--1023.

\bibitem[\protect\citeauthoryear{Imbens}{Imbens}{2000}]{imbens2000role}
Imbens, G.~W. (2000).
\newblock The role of the propensity score in estimating dose-response
  functions.
\newblock {\em Biometrika\/}~{\em 87\/}(3), 706--710.

\bibitem[\protect\citeauthoryear{Jiang and Li}{Jiang and
  Li}{2015}]{jiang2015doubly}
Jiang, N. and L.~Li (2015).
\newblock Doubly robust off-policy value evaluation for reinforcement learning.
\newblock {\em arXiv preprint arXiv:1511.03722\/}.

\bibitem[\protect\citeauthoryear{Jord{\`a}}{Jord{\`a}}{2005}]{jorda2005estimation}
Jord{\`a}, {\`O}. (2005).
\newblock Estimation and inference of impulse responses by local projections.
\newblock {\em American economic review\/}~{\em 95\/}(1), 161--182.

\bibitem[\protect\citeauthoryear{Kallus and Santacatterina}{Kallus and
  Santacatterina}{2018}]{kallus2018optimal}
Kallus, N. and M.~Santacatterina (2018).
\newblock Optimal balancing of time-dependent confounders for marginal
  structural models.
\newblock {\em arXiv preprint arXiv:1806.01083\/}.

\bibitem[\protect\citeauthoryear{Kallus and Uehara}{Kallus and
  Uehara}{2020}]{kallus2020double}
Kallus, N. and M.~Uehara (2020).
\newblock Double reinforcement learning for efficient off-policy evaluation in
  markov decision processes.
\newblock {\em Journal of Machine Learning Research\/}~{\em 21\/}(167), 1--63.

\bibitem[\protect\citeauthoryear{Kock and Tang}{Kock and
  Tang}{2015}]{kock2015inference}
Kock, A.~B. and H.~Tang (2015).
\newblock Inference in high-dimensional dynamic panel data models.
\newblock {\em arXiv preprint arXiv:1501.00478\/}.

\bibitem[\protect\citeauthoryear{Krampe, Paparoditis, and Trenkler}{Krampe
  et~al.}{2020}]{high_dim_IRF}
Krampe, J., E.~Paparoditis, and C.~Trenkler (2020).
\newblock Impulse response analysis for sparse high-dimensional time series.
\newblock {\em arXiv preprint arXiv:2007.15535\/}.

\bibitem[\protect\citeauthoryear{Lewis and Syrgkanis}{Lewis and
  Syrgkanis}{2020}]{lewis2020double}
Lewis, G. and V.~Syrgkanis (2020).
\newblock Double/debiased machine learning for dynamic treatment effects via
  g-estimation.
\newblock {\em arXiv preprint arXiv:2002.07285\/}.

\bibitem[\protect\citeauthoryear{Li, Morgan, and Zaslavsky}{Li
  et~al.}{2018}]{li2018balancing}
Li, F., K.~L. Morgan, and A.~M. Zaslavsky (2018).
\newblock Balancing covariates via propensity score weighting.
\newblock {\em Journal of the American Statistical Association\/}~{\em
  113\/}(521), 390--400.

\bibitem[\protect\citeauthoryear{Li}{Li}{2024}]{li2024toward}
Li, Y. (2024).
\newblock {\em Toward Robust and Transparent Estimation of the Effects of
  Time-dependent Interventions}.
\newblock Ph.\ D. thesis, Harvard University, Cambridge, MA, USA.
\newblock Available at
  \url{https://www.proquest.com/openview/44275a2e43dcab4c60f4b82b34cb44c4/1?pq-origsite=gscholar&cbl=18750&diss=y}.

\bibitem[\protect\citeauthoryear{Marx, Tamer, and Tang}{Marx
  et~al.}{2022}]{marx2022parallel}
Marx, P., E.~Tamer, and X.~Tang (2022).
\newblock Parallel trends and dynamic choices.
\newblock {\em arXiv preprint arXiv:2207.06564\/}.

\bibitem[\protect\citeauthoryear{Montiel~Olea and
  Plagborg-M{\o}ller}{Montiel~Olea and
  Plagborg-M{\o}ller}{2021}]{montiel2021local}
Montiel~Olea, J.~L. and M.~Plagborg-M{\o}ller (2021).
\newblock Local projection inference is simpler and more robust than you think.
\newblock {\em Econometrica\/}~{\em 89\/}(4), 1789--1823.

\bibitem[\protect\citeauthoryear{Murphy}{Murphy}{2003}]{murphy2003optimal}
Murphy, S.~A. (2003).
\newblock Optimal dynamic treatment regimes.
\newblock {\em Journal of the Royal Statistical Society: Series B (Statistical
  Methodology)\/}~{\em 65\/}(2), 331--355.

\bibitem[\protect\citeauthoryear{Negahban, Ravikumar, Wainwright, Yu,
  et~al.}{Negahban et~al.}{2012}]{negahban2012unified}
Negahban, S.~N., P.~Ravikumar, M.~J. Wainwright, B.~Yu, et~al. (2012).
\newblock A unified framework for high-dimensional analysis of $ m $-estimators
  with decomposable regularizers.
\newblock {\em Statistical science\/}~{\em 27\/}(4), 538--557.

\bibitem[\protect\citeauthoryear{Nie, Brunskill, and Wager}{Nie
  et~al.}{2021}]{nie2021learning}
Nie, X., E.~Brunskill, and S.~Wager (2021).
\newblock Learning when-to-treat policies.
\newblock {\em Journal of the American Statistical Association\/}~{\em
  116\/}(533), 392--409.

\bibitem[\protect\citeauthoryear{Rambachan and Roth}{Rambachan and
  Roth}{2023}]{rambachan2023more}
Rambachan, A. and J.~Roth (2023).
\newblock A more credible approach to parallel trends.
\newblock {\em Review of Economic Studies\/}, rdad018.

\bibitem[\protect\citeauthoryear{Rambachan and Shephard}{Rambachan and
  Shephard}{2019}]{rambachan2019nonparametric}
Rambachan, A. and N.~Shephard (2019).
\newblock A nonparametric dynamic causal model for macroeconometrics.
\newblock {\em Available at SSRN 3345325\/}.

\bibitem[\protect\citeauthoryear{Robins, Hernan, and Brumback}{Robins
  et~al.}{2000}]{robins2000marginal}
Robins, J.~M., M.~A. Hernan, and B.~Brumback (2000).
\newblock Marginal structural models and causal inference in epidemiology.

\bibitem[\protect\citeauthoryear{Roth, Sant’Anna, Bilinski, and Poe}{Roth
  et~al.}{2023}]{roth2023s}
Roth, J., P.~H. Sant’Anna, A.~Bilinski, and J.~Poe (2023).
\newblock What’s trending in difference-in-differences? a synthesis of the
  recent econometrics literature.
\newblock {\em Journal of Econometrics\/}.

\bibitem[\protect\citeauthoryear{Shi, Fan, Song, and Lu}{Shi
  et~al.}{2018}]{shi2018high}
Shi, C., A.~Fan, R.~Song, and W.~Lu (2018).
\newblock High-dimensional a-learning for optimal dynamic treatment regimes.
\newblock {\em Annals of statistics\/}~{\em 46\/}(3), 925--957.

\bibitem[\protect\citeauthoryear{Stock and Watson}{Stock and
  Watson}{2018}]{stock2018identification}
Stock, J.~H. and M.~W. Watson (2018).
\newblock Identification and estimation of dynamic causal effects in
  macroeconomics using external instruments.
\newblock {\em The Economic Journal\/}~{\em 128\/}(610), 917--948.

\bibitem[\protect\citeauthoryear{Tchetgen and Shpitser}{Tchetgen and
  Shpitser}{2012}]{tchetgen2012semiparametric}
Tchetgen, E. J.~T. and I.~Shpitser (2012).
\newblock Semiparametric theory for causal mediation analysis: efficiency
  bounds, multiple robustness, and sensitivity analysis.
\newblock {\em Annals of statistics\/}~{\em 40\/}(3), 1816.

\bibitem[\protect\citeauthoryear{Tran, Yiannoutsos, Wools-Kaloustian, Siika,
  Van Der~Laan, and Petersen}{Tran et~al.}{2019}]{tran2019double}
Tran, L., C.~Yiannoutsos, K.~Wools-Kaloustian, A.~Siika, M.~Van Der~Laan, and
  M.~Petersen (2019).
\newblock Double robust efficient estimators of longitudinal treatment effects:
  Comparative performance in simulations and a case study.
\newblock {\em The international journal of biostatistics\/}~{\em 15\/}(2).

\bibitem[\protect\citeauthoryear{Vansteelandt, Joffe, et~al.}{Vansteelandt
  et~al.}{2014}]{vansteelandt2014structural}
Vansteelandt, S., M.~Joffe, et~al. (2014).
\newblock Structural nested models and g-estimation: the partially realized
  promise.
\newblock {\em Statistical Science\/}~{\em 29\/}(4), 707--731.

\bibitem[\protect\citeauthoryear{Vershynin}{Vershynin}{2018}]{vershynin2018high}
Vershynin, R. (2018).
\newblock {\em High-dimensional probability: An introduction with applications
  in data science}, Volume~47.
\newblock Cambridge university press.

\bibitem[\protect\citeauthoryear{Wainwright}{Wainwright}{2019}]{wainwright2019high}
Wainwright, M.~J. (2019).
\newblock {\em High-dimensional statistics: A non-asymptotic viewpoint},
  Volume~48.
\newblock Cambridge University Press.

\bibitem[\protect\citeauthoryear{Wang and Zubizarreta}{Wang and
  Zubizarreta}{2020}]{wang2020minimal}
Wang, Y. and J.~R. Zubizarreta (2020).
\newblock Minimal dispersion approximately balancing weights: asymptotic
  properties and practical considerations.
\newblock {\em Biometrika\/}~{\em 107\/}(1), 93--105.

\bibitem[\protect\citeauthoryear{W{\"u}thrich and Zhu}{W{\"u}thrich and
  Zhu}{2023}]{wuthrich2023omitted}
W{\"u}thrich, K. and Y.~Zhu (2023).
\newblock Omitted variable bias of lasso-based inference methods: A finite
  sample analysis.
\newblock {\em The review of economics and statistics\/}~{\em 105\/}(4),
  982--997.

\bibitem[\protect\citeauthoryear{Yiu and Su}{Yiu and
  Su}{2018}]{yiu2018covariate}
Yiu, S. and L.~Su (2018).
\newblock Covariate association eliminating weights: a unified weighting
  framework for causal effect estimation.
\newblock {\em Biometrika\/}~{\em 105\/}(3), 709--722.

\bibitem[\protect\citeauthoryear{Yiu and Su}{Yiu and Su}{2020}]{yiu2020joint}
Yiu, S. and L.~Su (2020).
\newblock Joint calibrated estimation of inverse probability of treatment and
  censoring weights for marginal structural models.
\newblock {\em Biometrics\/}.

\bibitem[\protect\citeauthoryear{Zhang, Tsiatis, Laber, and Davidian}{Zhang
  et~al.}{2013}]{zhang2013robust}
Zhang, B., A.~A. Tsiatis, E.~B. Laber, and M.~Davidian (2013).
\newblock Robust estimation of optimal dynamic treatment regimes for sequential
  treatment decisions.
\newblock {\em Biometrika\/}~{\em 100\/}(3), 681--694.

\bibitem[\protect\citeauthoryear{Zhang, Ji, and Bradic}{Zhang
  et~al.}{2021}]{zhang2021dynamic}
Zhang, Y., W.~Ji, and J.~Bradic (2021).
\newblock Dynamic treatment effects: high-dimensional inference under model
  misspecification.
\newblock {\em arXiv preprint arXiv:2111.06818\/}.

\bibitem[\protect\citeauthoryear{Zhou and Wodtke}{Zhou and
  Wodtke}{2018}]{zhou2018residual}
Zhou, X. and G.~T. Wodtke (2018).
\newblock Residual balancing weights for marginal structural models: with
  application to analyses of time-varying treatments and causal mediation.
\newblock {\em arXiv preprint arXiv:1807.10869\/}.

\bibitem[\protect\citeauthoryear{Zhu}{Zhu}{2017}]{zhu2017high}
Zhu, Y. (2017).
\newblock High-dimensional panel data with time heterogeneity: estimation and
  inference.
\newblock {\em Available at SSRN 2665374\/}.

\bibitem[\protect\citeauthoryear{Zubizarreta}{Zubizarreta}{2015}]{zubizarreta2015stable}
Zubizarreta, J.~R. (2015).
\newblock Stable weights that balance covariates for estimation with incomplete
  outcome data.
\newblock {\em Journal of the American Statistical Association\/}~{\em
  110\/}(511), 910--922.

\end{thebibliography}
\bibliographystyle{chicago}

 	\end{document}